
\documentclass[11pt,a4paper]{article}
\usepackage[a4paper]{geometry}
\usepackage{amssymb,latexsym,amsmath,amsfonts,amsthm}
\usepackage{graphicx}
\usepackage{epstopdf}
\usepackage{tikz}
\usepackage{pgflibraryshapes}
\usetikzlibrary{arrows,decorations.markings}
\usepackage{subfigure}
\usepackage{overpic}
\usepackage{fullpage}
\usepackage{comment,verbatim}
\usepackage{hyperref}
\usepackage[title,titletoc]{appendix}
\usepackage{float}
\usepackage{authblk}

\newtheorem{thm}{Theorem}[section]

\newtheorem{pro}[thm]{Proposition}

\theoremstyle{remark}
\newtheorem{rem}[thm]{Remark}

\numberwithin{equation}{section}

\def\Im {\mathop{\rm Im}\nolimits}
\def\Re {\mathop{\rm Re}\nolimits}
\def\z {\zeta}

\begin{document}

\title{ Clarkson-McLeod solutions of the fourth Painlev\'e equation and the parabolic cylinder-kernel determinant}

\author[a,c]{Jun Xia}
\author[b]{Shuai-Xia Xu}
\author[c]{Yu-Qiu Zhao}

\affil[a]{ Department of Mathematics, Jiaying University, MeiZhou 514015, China.~Email:~xiaj7@mail2.sysu.edu.cn}
\affil[b]{Institut Franco-Chinois de l'Energie Nucl\'{e}aire, Sun Yat-sen University, GuangZhou 510275, China.~Email:~xushx3@mail.sysu.edu.cn}
\affil[c]{ Department of Mathematics, Sun Yat-sen University, GuangZhou 510275, China.~Email:~stszyq@mail.sysu.edu.cn}

\date{}
\maketitle

\begin{abstract}
  The Clarkson-McLeod solutions of the fourth Painlev\'e equation behave like
  $\kappa D_{\alpha-\frac{1}{2}}^2(\sqrt{2}x)$ as $x\rightarrow +\infty$,  where $\kappa$ is some real constant and $D_{\alpha-\frac{1}{2}}(x)$ is the parabolic cylinder function. Using the Deift-Zhou nonlinear steepest descent method, we derive the asymptotic behaviors for this class of solutions as $x\to-\infty$. This completes a proof of Clarkson and McLeod's conjecture on the asymptotics of this family of solutions. The total integrals of the Clarkson-McLeod solutions  and the  asymptotic approximations of the $\sigma$-form of  this family of solutions are also derived. Furthermore, we find a determinantal  representation of   the $\sigma$-form of  the Clarkson-McLeod solutions  via an integrable operator with the parabolic cylinder kernel.

\end{abstract}
\textbf{2010 mathematics subject classification:} 30E15; 33E17; 34E05; 41A60;
\newline
  \textbf{Keywords and phrases:} The fourth Painlev\'{e} equation, asymptotic expansion, Riemann-Hilbert problem, Deift-Zhou method, Hamiltonian, total integral, Fredholm determinant.
\tableofcontents

\noindent


\section{Introduction and statement of results}
This paper is concerned with the fourth Painlev\'e (PIV) equation
\begin{equation}\label{PIV}
\frac{\mathrm{d}^2q}{\mathrm{d}x^2}
=\frac{1}{2q}\left(\frac{\mathrm{d}q}{\mathrm{d}x}\right)^{2}
+\frac{3}{2}q^{3}+4xq^{2}+2(x^{2}-2\alpha)q,
\end{equation}
where 
  the parameter
$\alpha\in\mathbb{R}$. We are interested in the real solutions of \eqref{PIV} 
satisfying the asymptotic condition
\begin{equation}\label{Boundcondit}
q(x)\rightarrow 0\quad \mathrm{as}\quad x\rightarrow +\infty.
\end{equation}

A result due to Bassom \emph{et al.} \cite{BCHM} shows that any real solution of the PIV equation \eqref{PIV}, fulfilling the boundary condition \eqref{Boundcondit}, possesses the following asymptotic behavior
\begin{equation}\label{qasy}
q(x)\sim \kappa D_{\alpha-\frac{1}{2}}^{2}(\sqrt{2}x)\quad \mathrm{as}\quad x\to+\infty
\end{equation}
for some constant $\kappa\in\mathbb{R}$. Here, $D_{\mu}(x)$ denotes the parabolic cylinder function,
which is the solution of Weber's equation 
 (cf. \cite[Chapter 12]{NIST})
\begin{align*}
\frac{\mathrm{d}^2D_{\mu}(x)}{\mathrm{d}x^2}=\left(\frac{1}{4}x^2
-\mu-\frac{1}{2}\right)D_{\mu}(x),
\end{align*}
uniquely characterized by the asymptotic property
\begin{equation}\label{eq:DAsy}
D_{\mu}(x)=x^{\mu}e^{-\frac{1}{4}x^2}\left(1+O\left(x^{-2}\right)\right),\quad \mathrm{as}\quad x\to+\infty.
\end{equation}
Conversely, 
for any constant $\kappa\in\mathbb{R}$, there exists a unique solution to the PIV equation \eqref{PIV} asymptotic to $\kappa D_{\alpha-\frac{1}{2}}^{2}(\sqrt{2}x) $
as $x\to +\infty$. 
For convenience,      we   denote such  solutions  by $q(x)=q(x;\alpha,\kappa)$.

Concerning the asymptotic behaviors of $q(x;\alpha,\kappa)$ as $x\rightarrow-\infty$, there has been the following conjecture.
\subsection*{Conjecture (Clarkson-McLeod \cite{CM})}
There exists a constant $\kappa^*>0$ such that:
\begin{description}
\item{(a)} When $0<\kappa<\kappa^{*}$ and as $x\rightarrow-\infty$, we have
\begin{equation}\label{qAsy1}
q(x;\alpha,\kappa)\sim c_n\,2^{\alpha-\frac{1}{2}} x^{2 \alpha-1} e^{-x^{2}}
\end{equation}
if $\alpha+\frac{1}{2}=n+1\in \mathbb{N}$ with $\mathbb{N}$ being the set of all positive integers; and
\begin{equation}\label{qAsy2}
q(x;\alpha,\kappa)\sim -\frac{2 x}{3}+\frac{4d_1}{\sqrt{3}} \sin \left(\frac{x^{2}}{\sqrt{3}}-\frac{4 d_1^{2}}{\sqrt{3}}\ln(\sqrt{2}|x|)+d_2+O\left(x^{-2}\right)\right)+O\left(x^{-1}\right)
\end{equation}
if $\alpha-\frac{1}{2}\notin \mathbb{Z}$, where the constants $c_n$, $d_1$, $d_2$ are dependent on $\kappa$.
\item{(b)} When $\kappa=\kappa^{*}$, $q(x;\alpha,\kappa)$ is asymptotic to $-2x$ as $x\rightarrow-\infty$.
\item{(c)} When $\kappa>\kappa^{*}$, $q(x;\alpha,\kappa)$ has a pole at some point $x$ on the real axis. 
\end{description}
The solutions $q(x;\alpha,\kappa)$ in this class are now called  the \emph{Clarkson-McLeod solutions} of the PIV equation \eqref{PIV}.
In the case $\alpha+\frac{1}{2}\in\mathbb{N}$ of part (a), 
the asymptotic formula \eqref{qAsy1} has been proven in \cite{BCHM,CM}. Therein, the values of $c_n$ and $\kappa^{*}$ were explicitly evaluated by
\begin{equation}\label{kappastarinteger}
c_n=\frac{\kappa}{1-\sqrt{\pi}\,n!\,\kappa},\qquad
\kappa^*=\frac{1}{\sqrt{\pi}\,n!}.
\end{equation}
While $\alpha-\frac{1}{2}\notin \mathbb{Z}$, the value of $\kappa^*$ in \cite{CM} was conjectured to be
\begin{equation}\label{kappastar}
\kappa^*=\frac{1}{\sqrt{\pi}\, \Gamma\left(\alpha+\frac{1}{2}\right)}.
\end{equation}
For $\alpha+\frac{1}{2}\in\mathbb{N}$, each  solution  $q(x)=q(x;\alpha,\kappa)$  of the PIV equation \eqref{PIV} can be explicitly expressed in terms of the classical special functions. The  behavior of the solution as $x\to-\infty$ can then be determined in a straightforward manner. For example, when $\alpha=\frac{1}{2}$, the exact solution is given by (see \cite[Equation (4.13)]{BCHM})
\begin{equation}\label{specialsolu}
q\left(x; {1}/{2},\kappa\right)=\frac{2\kappa\,e^{-x^2}}{2-\kappa\sqrt{\pi}\,\mathrm{erfc}(x)},
\quad x\in \mathbb{R},\end{equation}
where $\mathrm{erfc}(x)=\frac{2}{\sqrt{\pi}}\int^{\infty}_{x}e^{-t^2}dt$ is the complementary error function; cf. \cite[Chapter 7]{NIST}.
It is readily verified that $\kappa^*=1/\sqrt{\pi}$. The approximation \eqref{qAsy1} is valid since $\mathrm{erfc}(x)$ is strictly  monotone decreasing on $\mathbb{R}$ with range $(0, 2)$. Similarly, it holds that $q\left(x; {1}/{2},1/\sqrt\pi \right)\sim -2x$ as $x\to-\infty$. For $\kappa\sqrt \pi >1$,  the solution has a real pole which is the zero of the denominator in
\eqref{specialsolu}.
For general $\alpha+\frac{1}{2}\in\mathbb{N}$, the solutions and their asymptotic approximations as given in \eqref{qAsy1} can be obtained by using \eqref{specialsolu} and the B\"{a}cklund transformation; see \cite[Equation (5.9)]{BCHM}, or \cite[Equation (3.24)]{BCH}.


The solutions  $q(x; \alpha, \kappa)$ with positive half-integers $\alpha$
  find prominent applications in the theory of orthogonal polynomials with the discontinuous Hermite weight \cite{Chen}, and in random matrix theory \cite{FW,TW}.
It is well known that for the Gaussian unitary ensemble of $n\times n$  Hermitian matrices,
the probability of having no eigenvalues in the interval $(x,+\infty)$ can be expressed in terms of the Fredholm determinant
\begin{equation}
\det\left(\mathbf{I}-K_{n,x}\right).
\end{equation}
Here, $K_{n,x}$ is the integrable operator acting on $L^2(0,+\infty)$ with the classical Hermite kernel
 \begin{equation}\label{eq:OPkernel}
K_{n,x}(\lambda,\mu)=e^{-\frac{(x+\lambda)^2+(\mu+x)^2}{2}} \gamma_{n-1}^{2}\frac{\pi_n(\lambda+x)\pi_{n-1}(\mu+x)-\pi_{n-1}(\lambda+x)\pi_{n}(\mu+x)}{\lambda-\mu}, \end{equation}
where $\pi_n(\lambda)$ is the $n$-th monic Hermite polynomial
 determined by the orthogonal relation
  \begin{equation}\label{eq:OP} \int_{\mathbb{R}}\pi_{n}(x)\pi_{m}(x)e^{-x^2}dx=\gamma_{n}^{-2}\delta_{n,m},
  \end{equation}
with the normalization constant
  \begin{equation}\label{eq:gamman}\gamma_{n-1}^{2}=\frac{2^{n-1}}{\sqrt{\pi}\,\Gamma(n)};
  \end{equation}
  see \cite[Table 18.3.1]{NIST}.
 It is found by Tracy and Widom in \cite{TW}  that
\begin{equation}\label{eq:detH}
\frac{d}{dx}\ln \det(\mathbf{I}-K_{n,x})=\sigma_{n}(x)
\end{equation}
where $\sigma_{n}(x)$ is the unique solution of the $\sigma$-form of the PIV equation \cite{JM} with the parameter $\nu=n$
\begin{equation}\label{eq:sPIV}
\left(\sigma_{\nu}^{\prime \prime}\right)^{2}+4\left(\sigma_{\nu}^{\prime}\right)^{2}\left(\sigma_{\nu}^{\prime}+2 \nu\right)-4\left(x \sigma_{\nu}^{\prime}-\sigma_{\nu}\right)^{2}=0,
\end{equation}
characterized by the boundary condition 
\begin{equation}\label{eq:sPIVAsy}
\sigma_{\nu}(x)\sim \frac{2^{\nu-1} x^{2 \nu-2} e^{-x^{2}}}{\sqrt{\pi}(\nu-1) !},\quad x\to+\infty.
\end{equation}
It is worth mentioning  that   $\sigma_{\nu}(x)$  is closely related to the Hamiltonian for the PIV equation \eqref{PIV}; see Remark \ref{rem:Hq}  below and  \cite{JM}.

For general $\alpha-\frac{1}{2}\notin \mathbb{Z}$,
the asymptotic formula \eqref{qAsy2}  in part (a) was proved by Abdullayev \cite{Ab} using the integral equation method. The connection formulas, that is, the explicit expressions of the parameters $d_1$ and $d_2$ in \eqref{qAsy2}, in terms of  the parameter $\kappa$, were derived later by Its and Kapaev \cite{IK}, and by Wong and Zhang \cite{WZ}, using respectively   the isomonodromy method and the  uniform asymptotic  approach.
Moreover, according to the numerical investigations performed
in \cite{BCH1}, $q(x;\alpha,\kappa)$ might blow up at finite $x$ if $\alpha<-1/2$. While $\alpha>-1/2$, the same numerical results allow us to expect the absence of the real poles of $q(x;\alpha,\kappa)$.
The asymptotic behavior  of $q(x;\alpha,\kappa)$ and the  connection formulas in part (c) were recently derived by the current authors in \cite{XXZ}.
While, to the best of our knowledge, the asymptotic result in part (b) of the  Clarkson-McLeod conjecture  has not been  confirmed.
It is also desirable to know whether there exists a determinantal representation of the  $\sigma$-form of the Clarkson-McLeod solutions similar to \eqref{eq:detH}, for general $\alpha$ not being a half-integer.



In the present paper, we derive the asymptotic approximations and the connection formulas for the Clarkson-McLeod solutions as $x\to-\infty$ by using the Deift-Zhou nonlinear steepest descent method \cite{Deft,DZ,DZ1}. Particularly, we prove  part (b) and  revisit   part (a), (c)  of the  Clarkson-McLeod conjecture.
We also show that the $\sigma$-form of the Clarkson-McLeod solutions of the PIV equation \eqref{PIV} with general parameter $\alpha$ can be represented  by the Fredholm determinant of an integrable operator whose kernel is expressed in terms of the classical parabolic cylinder functions, thus generalizing the result \eqref{eq:detH} of Tracy and Widom.
Furthermore, the asymptotics of the Hamiltonian of the Clarkson-McLeod solutions and the evaluations of total integrals of the Clarkson-McLeod solutions are also obtained.

\subsection{Statement of results}
\subsubsection*{Asymptotics of the Clarkson-McLeod solutions}
Our first result is the following complete description of the asymptotic behaviors  of the Clarkson-McLeod solutions to the PIV equation \eqref{PIV} when the parameter $\alpha\in\mathbb{R}$ with $\alpha+\frac{1}{2}\notin\mathbb{N}$.
\begin{thm}\label{thm1}
Let $\kappa\neq 0$ be a given real number and $\kappa^*$ be the constant defined by \eqref{kappastar}. For any $\alpha\in\mathbb{R}$ with $\alpha+\frac{1}{2}\notin\mathbb{N}$, there exists a unique real solution $q(x;\alpha,\kappa)$ to the PIV equation \eqref{PIV} satisfying the following asymptotic behavior
\begin{equation}\label{qAsy}
q(x;\alpha,\kappa)=\kappa\, 2^{\alpha-\frac{1}{2}}x^{2\alpha-1}e^{-x^2}\left(1+O\left(x^{-2}\right)\right)\quad \mathrm{as}\quad x\to+\infty.
\end{equation}

For $\alpha\in\mathbb{R}$ with $\alpha-\frac{1}{2}\notin\mathbb{Z}$, the solution  $q(x;\alpha,\kappa)$ possesses the following asymptotic behaviors as $x\rightarrow-\infty$.
\begin{description}
\item{(1)} If $\kappa(\kappa-\kappa^*)<0$, then
\begin{equation}\label{q1}
q(x;\alpha,\kappa)=-\frac{2}{3}x+\frac{2\sqrt{6}\,b_1}{3}\sin \left(\frac{x^{2}}{\sqrt{3}}-\frac{b_1^{2}}{\sqrt{3}}\ln(2\sqrt{3} x^{2})+\psi_1\right)+O\left(\frac{1}{x}\right).
\end{equation}
\item{(2)} If $\kappa=\kappa^*$, then
\begin{equation}\label{q2}
q(x;\alpha,\kappa)=-2x+O\left(\frac{1}{x}\right).
\end{equation}
\item{(3)} If $\kappa(\kappa-\kappa^*)>0$, then
\begin{equation}\label{q3}
q(x;\alpha,\kappa)=-\frac{2}{3}x+\frac{2x}{2\cos\left(\frac{x^{2}}{\sqrt{3}}
-\frac{b_2}{\sqrt{3}}\ln \left(2 \sqrt{3} x^{2}\right)+\psi_2\right)+1}+O\left(\frac{1}{x}\right).
\end{equation}
\end{description}
The error term in the asymptotic expansion \eqref{q3} is uniform for $x$ bounded away from the singularities appearing on the right-hand side of the asymptotic expansion. Moreover, the corresponding connection formulas are respectively given by
\begin{equation}\label{connectionformula1}
\left\{
\begin{aligned}
b_1^{2}&=-\frac{\sqrt{3}}{2\pi} \ln(1-\left|\rho\right|^{2}),\quad b_1\geq0, \\
\psi_1&=-\frac{ \pi}{4}-\frac{2 \pi}{3}\alpha-\arg \Gamma\left(-\frac{b_1^{2}}{\sqrt{3}}i \right)-\arg \rho,
\end{aligned}\right.
\end{equation}
and
\begin{equation}\label{connectionformula2}
\left\{
\begin{aligned}
b_2&=-\frac{\sqrt{3}}{2\pi} \ln(|\rho|^{2}-1),\\
\psi_2&=-\frac{2\pi}{3}\alpha
-\arg\Gamma\left(-\frac{b_2}{\sqrt{3}}i+\frac{1}{2}\right)-\arg \rho,
\end{aligned}\right.
\end{equation}
where
\begin{equation}\label{rhoandkappa}
\rho=1-\frac{2\pi^{\frac{3}{2}}}{e^{\pi i\alpha}\Gamma\left(\frac{1}{2}-\alpha\right)}\kappa.
\end{equation}

When $\frac{1}{2}-\alpha\in \mathbb{N}$, it holds $\kappa^*=0$, and $q(x;\alpha,\kappa)$ possesses the   asymptotic behavior \eqref{q3}
for $\kappa\in \mathbb{R}\setminus\{0\}$
as $x\rightarrow-\infty$.
\end{thm}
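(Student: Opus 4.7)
I would first realize $q(x;\alpha,\kappa)$ as an isomonodromic deformation parameter of a $2\times 2$ linear system for which PIV is the compatibility condition, encoding the monodromy in a Riemann-Hilbert problem whose jump matrices depend on $x$ only through an explicit cubic phase (the Lax pair for PIV has Poincar\'e rank two at infinity). Matching this RH problem at $x\to+\infty$ against the prescribed asymptotic \eqref{qAsy} pins down the Stokes multipliers; the bijection \eqref{rhoandkappa} between $\kappa$ and the effective Stokes parameter $\rho$ should drop out of this match, with the factor $\Gamma(\tfrac12-\alpha)^{-1}$ and $e^{\pi i\alpha}$ arising from the standard connection formulas for the parabolic cylinder functions that solve the local model at $+\infty$.

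\textbf{Deift-Zhou steepest descent as $x\to-\infty$.} With $\rho$ now fixed, I would perform the nonlinear steepest descent analysis of the same RH problem as $x\to-\infty$. Since the controlling phase is cubic, there are three saddles given by Cardano's formula; the geometry selects a single real active turning point $z_*(x)\sim -x/3$, which explains the leading $-\tfrac23 x$ term in \eqref{q1} and \eqref{q3}. I would construct a $g$-function that conjugates the jump matrix into the canonical oscillatory form, deform the jump contour onto the steepest descent directions, and then build a bounded global parametrix together with local parametrices at $z_*$ and at the complex saddles.

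\textbf{Case distinction and main difficulty.} The crucial step is the local model at $z_*$, which I expect to be solvable by parabolic cylinder functions with an index determined by $|\rho|$. In the \emph{subcritical} regime $|\rho|^2<1$, equivalent via \eqref{rhoandkappa} to $\kappa(\kappa-\kappa^*)<0$, this index is purely imaginary and proportional to $b_1^2$, and the outer-plus-local matching produces the bounded oscillation \eqref{q1}. In the \emph{supercritical} regime $|\rho|^2>1$, one needs a Schlesinger-type transformation to render the parametrix parameters admissible; the resulting half-integer-shifted index generates the denominator $2\cos(\cdots)+1$ in \eqref{q3} and hence its isolated singularities. In the \emph{critical} case $\kappa=\kappa^*$ one has $\rho=0$, so all lens jumps collapse to the identity to leading order, the local parametrix degenerates, and the asymptotic reduces to the clean $-2x$ of \eqref{q2}. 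The main technical obstacle, in my view, is the supercritical case: the error is only uniform away from the singularities of the denominator in \eqref{q3}, and tracking this non-uniformity through the parametrix matching, together with the half-integer shift visible in \eqref{connectionformula2}, will require delicate bookkeeping between two parabolic cylinder models (one at $+\infty$, one at $-\infty$).

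\textbf{Extracting $q$ and the connection formulas.} Finally, $q(x;\alpha,\kappa)$ is recovered from the subleading coefficient of the large-$z$ expansion of the solution to the RH problem, so assembling the global and local parametrices yields \eqref{q1}--\eqref{q3} by direct algebra. The amplitudes $b_1,b_2$ are read off from $|\rho|$ via the Plemelj-logarithm formula for the parabolic cylinder parametrix, and the phases $\psi_1,\psi_2$ aggregate contributions from the logarithm $\ln(2\sqrt{3}\,x^2)$ (from the dilation of the local variable), the parabolic cylinder phase $-\arg\Gamma(\cdot)$, the Stokes phase $-\arg\rho$, and the constants $-\tfrac{2\pi}3\alpha$ and $-\tfrac\pi4$ originating from the $g$-function and the global normalization; this gives \eqref{connectionformula1}--\eqref{connectionformula2}. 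For the final clause, specializing \eqref{rhoandkappa} to $\tfrac12-\alpha\in\mathbb N$ produces $|\rho|^2>1$ for every $\kappa\ne 0$, so the supercritical analysis applies uniformly and \eqref{q3} holds throughout, consistent with $\kappa^*=0$.
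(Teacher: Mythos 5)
Your overall strategy --- realize $q$ via the Its--Kapaev Riemann--Hilbert problem for PIV, identify the Stokes data by matching at $x\to+\infty$, then do a Deift--Zhou steepest descent analysis as $x\to-\infty$ with parabolic cylinder local parametrices --- is indeed the paper's strategy, and your identification of where the amplitudes $b_1,b_2$ (from $|\rho|$ via the logarithm of the parametrix index) and the phase pieces of $\psi_1,\psi_2$ come from is roughly right. But several concrete steps in the middle are wrong in ways that would block the argument.

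\textbf{The critical case is misdescribed.} You assert that $\kappa=\kappa^*$ corresponds to $\rho=0$, so that ``all lens jumps collapse to the identity'' and the local parametrix ``degenerates.'' This is false: with $\rho = s_*$, the regime $\kappa=\kappa^*$ corresponds to $|\rho|=1$ with $\rho\neq 1$, in fact $\rho = -e^{-2\pi i\alpha}$. The case $\rho=0$ is not the critical boundary at all. Since part (b) of the Clarkson--McLeod conjecture is exactly what is new in this theorem, misplacing the critical value is a fatal gap. The paper handles $|s_*|=1$ by changing the geometry entirely: it introduces a second $g$-function $\widehat g(z)=\tfrac18(z^2-2)^2$ with a different topology of anti-Stokes curves and a different set of stationary points $z_0=0$, $z_\pm=\pm\sqrt2$, and builds a modified, \emph{singular} global parametrix $\widehat{\mathbf P}^{(\infty)}$ with branch points at $z_\pm$ of exponents $\alpha\pm\tfrac12$, plus a rational correction $\widehat{\mathbf H}(z)$ whose coefficient matrices $\widehat A,\widehat B$ are pinned down by demanding analyticity of the local parametrix prefactors. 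None of this is a degeneration of the subcritical/supercritical setup.

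\textbf{The saddle structure is not ``one real active turning point.''} After the rescaling $\xi=|x|^{1/2}z$, the $g$-function in the subcritical/supercritical cases has four fixed saddle points $z_{1,\pm}=\pm\sqrt{2/3}$, $z_{2,\pm}=\pm\sqrt{8/3}$, and the origin is a further special point (with the $\xi^{\alpha\sigma_3}$ singularity of $\Psi$). The paper must therefore build \emph{five} local parametrices --- Airy at $z_{2,\pm}$, parabolic cylinder at $z_{1,\pm}$, and a Bessel-type parametrix at the origin --- and the leading $-\tfrac23 x$ comes from the first moment $\mathbf P_1^{(\infty)}$ of the global parametrix built from the Szeg\H o function $f$ and the matrix $\mathbf X$, not from a single turning point at $z_*\sim -x/3$.

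\textbf{Supercritical mechanism.} Your ``Schlesinger-type transformation'' is in the right spirit but not what the paper does. Because $\beta=\beta_0-\tfrac12$ with $\beta_0\in i\mathbb R$ makes the matching of $\mathbf P^{(1,\pm)}$ to $\mathbf P^{(\infty)}$ fail by an $O(1)$ amount, the paper instead \emph{modifies the global parametrix} by a rational factor $\mathbf H(z)=\mathbf I+\widetilde A/(z-\sqrt{2/3})+\widetilde B/(z+\sqrt{2/3})$ depending on $x$ (a technique borrowed from their earlier work on Pollaczek polynomials), and the poles of $\mathbf H$ are exactly what generates the singularities of the $2\cos(\cdot)+1$ denominator in \eqref{q3} and the resulting non-uniformity of the error. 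If you want to complete the proposal you would need to either import that construction or justify an equivalent Schlesinger/dressing procedure, which the paper explicitly says does not carry over directly from the PII literature.

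In short: the road map is right, but you need to (i) correct the Stokes-data dictionary so that $\kappa=\kappa^*\Leftrightarrow|\rho|=1$ with $\rho = -e^{-2\pi i\alpha}$, and analyse that boundary with the second $g$-function $\widehat g$ and the singular global parametrix, (ii) account for the full set of stationary points and the origin, with Airy, parabolic cylinder, and Bessel local models, and (iii) replace the vague Schlesinger step by the explicit modification $\mathbf H(z)$ of the global parametrix in the supercritical case.
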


%
%

\begin{rem}\label{rem-1.2} 
For all $\alpha\in\mathbb{R}$, the remaining case $\kappa=0$ corresponds to the trivial solution $q(x;\alpha,0)=0$. As aforementioned,
the asymptotic formula \eqref{q1} and connection formulas \eqref{connectionformula1} have been derived  in \cite{IK,WZ}. In this paper, we give alternative proofs of these formulas by performing asymptotic analysis of the Riemann-Hilbert (RH, for short) problem for PIV equation. Furthermore, we accomplish the case $b_1=0$ in \eqref{q1} and \eqref{connectionformula1} which is not covered in \cite[Equation (1.10)]{IK} and \cite[Equation (1.5)]{WZ}. We also show that \eqref{q3} is true for negative half-integer $\alpha$, a case   not considered in our previous work \cite{XXZ}.
We also provide a novel   proof of \eqref{q2}. Theorem \ref{thm1}, along with  
the case with positive half-integer $\alpha$ solved in terms of the special functions in \cite{BCHM,CM}, fully confirms  the conjecture of Clarkson and McLeod. Minor extension allows  $\kappa^*$  to be negative or zero.
\end{rem}\vskip .3cm

The asymptotic analysis of the  RH problem for the PIV equation enables us to derive simultaneously the asymptotics of the  Hamiltonian associated with the Clarkson-McLeod solutions and the total integrals of Clarkson-McLeod solutions along the real axis, which are of independent interests.

\begin{thm}\label{thm2}
Under the conditions in Theorem \ref{thm1}, for any $\alpha\in\mathbb{R}$ with $\alpha+\frac{1}{2}\notin\mathbb{N}$, the Hamiltonian $\mathcal{H}(x;\alpha,\kappa)$ of the Clarkson-McLeod solutions, defined by \eqref{eq:Hq} below,
 has the following asymptotic behavior as $x\rightarrow +\infty$:
\begin{equation}\label{Hasymp+infty}
\mathcal{H}(x;\alpha,\kappa)=-\kappa\,2^{\alpha-\frac{1}{2}}
x^{2\alpha-1}e^{-x^2}\left(1+O\left(x^{-2}\right)\right).
\end{equation}

While for $\alpha\in\mathbb{R}$ with $\alpha-\frac{1}{2}\notin\mathbb{Z}$,  $\mathcal{H}(x;\alpha,\kappa)$ satisfies the following asymptotic behaviors as $x\rightarrow-\infty$.
\begin{description}
\item{(1)} If $\kappa(\kappa-\kappa^*)<0$, then
\begin{equation}\label{H1}
\mathcal{H}(x;\alpha,\kappa)=-\frac{8x^3}{27}+\frac{4}{3}
\left(\alpha+b_1^2\right)x
-\frac{2\sqrt{2}\,b_1}{3}\cos
\left(\frac{x^{2}}{\sqrt{3}}-\frac{b_1^{2}}{\sqrt{3}}\ln (2 \sqrt{3}x^{2} )+\psi_1\right)+O\left(\frac{1}{x}\right),
\end{equation}
\item{(2)} If $\kappa=\kappa^*$, then
\begin{equation}\label{H2}
\mathcal{H}(x;\alpha,\kappa)=4\alpha x+O\left(\frac{1}{x}\right),
\end{equation}
\item{(3)} If $\kappa(\kappa-\kappa^*)>0$, then
\begin{equation}\label{H3}
\mathcal{H}(x;\alpha,\kappa)=-\frac{8x^3}{27}+\frac{4}{3}(\alpha+b_2)x
+\frac{\frac{4}{\sqrt{3}}x\sin\left(\frac{x^{2}}{\sqrt{3}}-\frac{b_2}{\sqrt{3}}\ln \left(2 \sqrt{3} x^{2}\right)+\psi_2\right)}
{2\cos\left(\frac{x^{2}}{\sqrt{3}}-\frac{b_2}{\sqrt{3}}\ln \left(2 \sqrt{3} x^{2}\right)+\psi_2\right)+1}+O\left(\frac{1}{x}\right).
\end{equation}
\end{description}
The parameters $b_1$, $\psi_1$ and $b_2$, $\psi_2$ are the same as in \eqref{connectionformula1} and \eqref{connectionformula2}, respectively. The error term in the asymptotic expansion \eqref{H3} is uniform for $x$ bounded away from the singularities appearing on the right-hand side of the asymptotic expansion.

Moreover, when $\frac{1}{2}-\alpha\in \mathbb{N}$, for any non-vanishing
$\kappa\in \mathbb{R}$, $\mathcal{H}(x;\alpha,\kappa)$ possesses the same asymptotic behavior \eqref{H3} as $x\rightarrow-\infty$.
\end{thm}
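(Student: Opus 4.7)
The plan is to deduce the Hamiltonian asymptotics from two complementary sources. On one hand, the formula \eqref{eq:Hq} expresses $\mathcal{H}(x;\alpha,\kappa)$ as an explicit polynomial in $q$, $q'$, and $x$, so once suitable asymptotics of $q$ and $q'$ are in hand, a direct substitution yields the claim. On the other hand, $\mathcal{H}$ can be read off naturally as a coefficient in the large-$z$ expansion of the $\Psi$-function of the Jimbo-Miwa Riemann-Hilbert problem for PIV. Since Theorem \ref{thm1} is obtained by a full Deift-Zhou steepest-descent analysis of that RH problem, this latter route makes the Hamiltonian asymptotics a by-product of the same analysis, with the error bounds inherited from the small-norm estimates already established. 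I would use the first approach whenever the asymptotics of $q'$ is manifestly controlled (chiefly at $+\infty$ and in the non-oscillatory case \eqref{q2}), and switch to the RH coefficient extraction to avoid differentiating the oscillatory expressions appearing in \eqref{q1} and \eqref{q3}.

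For \eqref{Hasymp+infty}, I would substitute the exponentially small behaviour \eqref{qAsy} of $q$, together with the corresponding asymptotic of $q'$ (obtained either by differentiating \eqref{qAsy} in the standard asymptotic sense or by invoking \cite{BCHM}), into the Hamiltonian formula \eqref{eq:Hq}. Since $q=O(x^{2\alpha-1}e^{-x^{2}})$ is exponentially small, the dominant contribution comes from the part of \eqref{eq:Hq} that is linear in $q$, producing the leading term $-\kappa\,2^{\alpha-1/2}x^{2\alpha-1}e^{-x^{2}}$ with the relative error $O(x^{-2})$ inherited from \eqref{eq:DAsy}.

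For the three regimes as $x\to-\infty$, the strategy is uniform. In each case, the Deift-Zhou analysis underlying Theorem \ref{thm1} produces the full expansion of $\Psi(z;x)$ at infinity, whose sub-leading coefficient encodes $\mathcal{H}$ alongside $q$ itself. Collecting terms by order in $x$ in case (1), the cubic $-\tfrac{8x^{3}}{27}$ and the linear correction $\tfrac{4}{3}(\alpha+b_1^{2})x$ arise from the dominant $-\tfrac{2}{3}x$ piece of $q$ combined with the time-averaged square of the sinusoidal correction, while the amplitude $-\tfrac{2\sqrt{2}}{3}b_1$ of the residual oscillation comes from the cross-interaction of the linear part of $q$ with its oscillatory correction; the same mechanism, applied to the rational oscillation in \eqref{q3}, explains the rational oscillatory term of \eqref{H3}. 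Case (2) is the cleanest: with $q=-2x+O(x^{-1})$ the polynomial contributions to \eqref{eq:Hq} cancel at the leading orders, leaving only the $\alpha$-dependent linear piece $4\alpha x+O(x^{-1})$.

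The main technical obstacle lies in cases (1) and (3), where the explicit error bound on $q'$ required to legitimate a term-by-term substitution into \eqref{eq:Hq} is not directly provided by Theorem \ref{thm1} and would otherwise demand a separate argument for differentiating oscillating asymptotics. My preferred remedy is to bypass \eqref{eq:Hq} in these regimes altogether by extracting $\mathcal{H}(x;\alpha,\kappa)$ directly from an appropriate entry of the sub-leading coefficient in the large-$z$ expansion of the global parametrix built for Theorem \ref{thm1}; the error control is then furnished by the same small-norm estimate that gives the $O(x^{-1})$ remainder in \eqref{q1}-\eqref{q3}, and the uniformity in case (3) on intervals bounded away from the real singularities transfers without change. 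The remaining case $\tfrac{1}{2}-\alpha\in\mathbb{N}$ is settled identically, since $\kappa^{*}=0$ places every nonzero $\kappa$ automatically in regime (3).
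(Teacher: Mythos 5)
Your proposal lands on essentially the same strategy as the paper: the three $x\to-\infty$ regimes are handled by reading $\mathcal{H}$ off the sub-leading coefficient $(\Phi_2)_{11}-(\Phi_2)_{22}$ of the large-$z$ expansion of the RH solution (the paper's \eqref{tau}, \eqref{tau1}), carrying the small-norm $\mathbf{R}$-corrections from the analysis already done for Theorem \ref{thm1}; and you correctly place the $\tfrac12-\alpha\in\mathbb{N}$ case in regime (3).

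The one place where you deviate is the $x\to+\infty$ derivation. You propose substituting $q$ and $q'$ directly into \eqref{eq:Hq}, acknowledging that controlling $q'$ is the technical burden and proposing to import it from \cite{BCHM}. The paper instead uses the Hamilton-equation identity \eqref{eq:dH}, $\mathcal{H}'(x)=q^2+2xq$, which does not involve $q'$; combined with $\mathcal{H}(x)\to 0$ (which needs only that $q$ and $q'/q$ are controlled, again from \eqref{qAsy}) it gives $\mathcal{H}(x)=-\int_x^\infty(q(t)^2+2tq(t))\,dt$, and a single integration by parts yields \eqref{Hasymp+infty} with the stated $O(x^{-2})$ relative error. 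This route avoids having to justify term-by-term differentiation of \eqref{qAsy}, which your approach would otherwise need to argue carefully; both routes are valid, and the paper's is marginally more economical. Two small cautions on your wording: in case (1) the oscillatory term of \eqref{H1} comes from the $O(|x|^{-1})$ coefficients $\mathbf{R}_1,\mathbf{R}_2$, not from the global parametrix alone, so ``sub-leading coefficient of the global parametrix'' should be ``sub-leading coefficient of the full RH solution $\mathbf{R}\mathbf{P}^{(\infty)}$''; and your heuristic that case (2) could be handled by direct substitution into \eqref{eq:Hq} suffers the same $q'$-control gap you already flag for cases (1) and (3), so the RH extraction is the right remedy there too, as you ultimately conclude.
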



Our next result is the evaluation of the total integrals of the Clarkson-McLeod solutions $q(x;\alpha,\kappa)$. Similar results for the integrals of the Painlev\'e II transcendents have been derived in \cite{BBD,BBDI,DXZ,Kok,Mil}.


\begin{thm}\label{thm3}
Let $\mathrm{P.V.}$ denote the Cauchy principal value. For $\alpha\in\mathbb{R}$ with $\alpha-\frac{1}{2}\notin\mathbb{Z}$, we evaluate the total integrals of $q(x;\alpha,\kappa)$ as follows.
\begin{description}
\item{(1)} If $\kappa(\kappa-\kappa^{*})<0$, take $c < 0 < d$ such that all real poles of $q(x;\alpha,\kappa)$ lie in the interval $(c,d)$,
\begin{align}\label{integral-1}
&\exp\Bigg\{\int^{c}_{-\infty}\left(q(t;\alpha,\kappa)+\frac{2t}{3}
-\frac{2\alpha}{t}\right)dt+\mathrm{P.V.}\int_{c}^{d}q(t;\alpha,\kappa)\,dt
+\int^{+\infty}_{d}q(t;\alpha,\kappa)\,dt\Bigg\}\nonumber\\
&\qquad =\frac{(-1)^{N_{+}-N_{-}}\sqrt{\pi}e^{\frac{c^2}{3}}|c|^{-2\alpha}3^{2\alpha}(1-\rho)e^{\pi i\alpha}}{2^{\frac{1}{2}-\alpha}\Gamma(\frac{1}{2}-\alpha)(1-|\rho|^2)^{\frac{2}{3}}},
\end{align}
where $\rho$ is related to $\kappa$ by \eqref{rhoandkappa} and $N_{\pm}$ denote  the numbers of real poles
of $q(x;\alpha,\kappa)$ in the interval $(c,d)$ of residues $\pm1$, respectively.
\item{(2)} If $\kappa=\kappa^{*}$, take $c< 0 < d$ such that all real poles of $q(x;\alpha,\kappa)$ lie in the interval $(c,d)$,
\begin{align}\label{integral-2}
&\exp\Bigg\{\int^{c}_{-\infty}\left(q(t;\alpha,\kappa)+2t
+\frac{2\alpha}{t}\right)dt
+\mathrm{P.V.}\int_{c}^{d}q(t;\alpha,\kappa)\,dt
+\int^{+\infty}_{d}q(t;\alpha,\kappa)\,dt\Bigg\}\nonumber\\
&\qquad =\frac{(-1)^{N_{+}-N_{-}} \sqrt{\pi}e^{-c^2}
|c|^{-2\alpha}2^{\frac{1}{2}+\alpha}}{\Gamma(\frac{1}{2}+\alpha)},
\end{align}
where $N_{\pm}$ denote  the numbers of real poles
of $q(x;\alpha,\kappa)$ of residues $\pm1$ in the interval $(c,d)$.

\end{description}
\end{thm}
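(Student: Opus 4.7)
The strategy is to represent the integrand $q(x;\alpha,\kappa)$ as a logarithmic derivative of an explicit auxiliary function $F(x)$ extracted from the Riemann--Hilbert data of the Clarkson--McLeod solution, and then reduce the regularized total integral to boundary evaluations of $\log F$ at $\pm\infty$ together with a principal-value contribution across the real poles of $q$. Concretely, I would first establish an identity
\begin{equation*}
q(x;\alpha,\kappa)=\frac{d}{dx}\log F(x),
\end{equation*}
where $F(x)$ is a ratio of isomonodromic tau-like functions associated with the PIV Riemann--Hilbert problem used in the proofs of Theorems~\ref{thm1} and~\ref{thm2}. Such an identity is standard in the Jimbo--Miwa framework; in the PIV Lax-pair setting it follows from extracting appropriate entries of the $2\times 2$ matrix $\Psi(z;x)$ solving the RH problem and comparing its large-$z$ expansion with the isomonodromy equations.

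Next, I would compute the asymptotic behavior of $F(x)$ at both ends using the information already available. A direct substitution of an ansatz $q\sim ax+b/x$ into the PIV equation~\eqref{PIV} refines Theorem~\ref{thm1}: balancing the $x$-coefficient forces $b=2\alpha$ in case~(1) and $b=-2\alpha$ in case~(2), so that the subtractions $+2t/3-2\alpha/t$ and $+2t+2\alpha/t$ appearing in~\eqref{integral-1} and~\eqref{integral-2} are exactly those needed to make the integrals convergent. Integrating the resulting refined expansions together with the oscillatory term in~\eqref{q1} and matching with the Deift--Zhou analysis gives
\begin{equation*}
F(x)\sim C_{-}^{(j)}\,|x|^{\epsilon_{j}\alpha}\,e^{-x^{2}/\sigma_{j}}\quad(x\to-\infty),\qquad F(x)\to C_{+}\quad(x\to+\infty),
\end{equation*}
with $(\epsilon_{1},\sigma_{1})=(2,3)$ and $(\epsilon_{2},\sigma_{2})=(-2,1)$. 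The constants $C_{+}$ and $C_{-}^{(j)}$ are read off from the parametrices in the nonlinear steepest descent already performed for Theorem~\ref{thm2}: at $+\infty$ the parametrix produces the factor $\sqrt{\pi}\,(1-\rho)\,e^{i\pi\alpha}/(2^{1/2-\alpha}\Gamma(\tfrac12-\alpha))$ through the RH datum~\eqref{rhoandkappa}, while at $-\infty$ in case~(1) the parabolic-cylinder model parametrix together with the convergent integral of the sine in~\eqref{q1} produces the Gamma factor and the cube-root power $(1-|\rho|^{2})^{-2/3}$ via the connection formula~\eqref{connectionformula1}.

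With these asymptotic constants in hand, the total integrals reduce to boundary evaluations. The regularization subtractions cancel the Gaussian and power growth of $\log F$ at $-\infty$, leaving exactly the explicit prefactors $|c|^{-2\alpha}e^{c^{2}/3}$ in~\eqref{integral-1} and $|c|^{-2\alpha}e^{-c^{2}}$ in~\eqref{integral-2}. The principal-value part over $(c,d)$ contributes $\log|F(d)/F(c)|$ together with a sign: a real pole of $q$ with residue $+1$ corresponds to a simple zero of $F$, while a real pole of residue $-1$ corresponds to a simple pole of $F$, and counting the resulting sign jumps of $F$ across these crossings produces the factor $(-1)^{N_{+}-N_{-}}$ upon exponentiation. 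Putting everything together and exponentiating yields~\eqref{integral-1} and~\eqref{integral-2}.

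The main obstacle will be the explicit identification of $F$ and the precise evaluation of $C_{+}$ and $C_{-}^{(j)}$ from the local parametrices. In particular, extracting the cube-root exponent $-2/3$ in $(1-|\rho|^{2})^{-2/3}$ is delicate: it requires integrating the oscillatory term in~\eqref{q1} against the slow drift $-2x/3$ and carefully tracking the next-to-leading order of the parabolic-cylinder parametrix, so that the algebraic identity $b_{1}^{2}=-\frac{\sqrt{3}}{2\pi}\ln(1-|\rho|^{2})$ from~\eqref{connectionformula1} can be converted into the exponent $2/3$ after a Gaussian integration. The sign analysis $(-1)^{N_{+}-N_{-}}$ across poles also demands a clean local description of $F$ at both types of pole of $q$, which is most conveniently obtained from an explicit B\"acklund representation of $F$.
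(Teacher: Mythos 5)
Your overall strategy — writing $q=\frac{d}{dx}\log F$ for an explicit RH-derived $F$ and reducing the regularized integrals to boundary evaluations of $\log F$ — is exactly the one the paper uses, via $F(x)=\Psi^{(0)}(0,x)=\exp(\int_{+\infty}^x q\,dt\,\sigma_3)$ (Proposition~\ref{eq:PIVRHP}). Your ansatz check that the subtracted coefficients are $b=2\alpha$ (case~1) and $b=-2\alpha$ (case~2) is a correct sanity check, and your explanation of $(-1)^{N_+-N_-}$ in terms of sign changes of $F$ across residue-$\pm1$ poles of $q$ is equivalent to the paper's contour-indentation calculation $\int_\Upsilon q=\mathrm{P.V.}\int q+\pi i(N_+-N_-)$.

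However, there is a concrete misdirection in how you propose to extract the constants, and this is where the actual work of the proof lives. Since $F(x)=\Psi^{(0)}(0,x)$ is the value at $z=0$, after tracing back the transformations the constant is read off from the \emph{Bessel parametrix at the origin}, $\mathbf{P}^{(0)}$, not from the parabolic-cylinder parametrices at $z_{1,\pm}$. The parabolic-cylinder parametrices enter only indirectly, through the parameter $\beta$ in the Szeg\H{o} function of the global parametrix $\mathbf{P}^{(\infty)}$, and it is the factor $e^{-\frac{4}{3}\pi i\beta\sigma_3}$ coming from $\mathbf{E}^{(0)}(0)$ and the limiting Bessel expansion (\ref{BesParaExpand}) that produces $(1-|\rho|^2)^{-2/3}$. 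Your proposed alternative — integrating the oscillatory asymptotic of $q$ in \eqref{q1} — would not close: the error term there is only $O(1/x)$, which is not integrable at $-\infty$, so one would need to control several subleading terms in the asymptotic expansion of $q$, precisely the information the $F$-evaluation route bypasses. Similarly, the boundary condition $F(x)\to\mathbf{I}$ as $x\to+\infty$ needs to be justified from the $x\to+\infty$ RH analysis rather than assumed. So: right strategy, but the central computation (evaluating $\mathbf{E}^{(0)}(0)$, the Bessel expansion at the origin, and assembling \eqref{Fexpress++}/\eqref{Fexpress+++}) is missing, and the parametrix you point to is not the one that does the work.
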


\begin{rem}
According to the numerical analysis performed in \cite{BCH1}, it is expected that $q(x;\alpha,\kappa)$ is pole free on the real axis when
$0<\kappa\leq\kappa^*$. However, to the best of our knowledge, there is no rigorous proof of the numerical evidence.
The Cauchy principal values in \eqref{integral-1} and \eqref{integral-2} may be removed if one can prove that $q(x;\alpha,\kappa)$
is pole free on the real axis in  these cases.
\end{rem}

\subsubsection*{Determinantal representation of the Clarkson-McLeod solutions}
Let $\gamma K_{\nu,x}$ be the integrable operator acting on $L^2(0,+\infty)$ with the parabolic cylinder kernel
 \begin{equation}\label{eq:PCKernel}
\gamma  K_{\nu, x}(\lambda,\mu)=\gamma \frac{D_{\nu}(\sqrt{2}(\lambda+x))D_{\nu-1}(\sqrt{2}(\mu+x))
-D_{\nu-1}(\sqrt{2}(\lambda+x))D_{\nu}(\sqrt{2}(\mu+x))}{\lambda-\mu},
\end{equation}
where $D_{\nu}$ is the parabolic cylinder function with the parameter $\nu \in \mathbb{R}$ and $\gamma$ is a real parameter. Then, the Fredholm determinant of the operator is related to the $\sigma$-form of the Clarkson-McLeod solutions as stated in the following theorem.

 \begin{thm}\label{thm:IntRep}
Suppose that $\nu\in\mathbb{R}$, $\gamma\in \mathbb{R}$ and $\gamma K_{\nu,x}$ is the integrable operator acting on $L^2(0,+\infty)$ with the kernel \eqref{eq:PCKernel}. We have
\begin{equation}\label{eq:IntRep}
\frac{d}{dx} \ln \det(\mathbf{I}-\gamma K_{\nu, x})=\sigma_{\nu}(x;\gamma),\end{equation}
where $\sigma_{\nu}(x;\gamma)$ is the unique solution of  the $\sigma$-form \eqref{eq:sPIV} of the PIV equation,
determined by the boundary condition near positive infinity
\begin{equation}\label{eq:SigmaAsy}
\sigma_{\nu}(x;\gamma)\sim \sqrt{2}\gamma D_{\nu-1}^2(\sqrt{2}x).
\end{equation}
\end{thm}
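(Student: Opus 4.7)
The plan is to combine the Its-Izergin-Korepin-Slavnov (IIKS) integrable-operator formalism with the Jimbo-Miwa isomonodromic description of the fourth Painlev\'e equation. Setting $f_1(\lambda)=D_\nu(\sqrt{2}(\lambda+x))$ and $f_2(\lambda)=D_{\nu-1}(\sqrt{2}(\lambda+x))$, the kernel \eqref{eq:PCKernel} takes the rank-two integrable form
\[
\gamma K_{\nu,x}(\lambda,\mu)=\gamma\,\frac{f_1(\lambda)f_2(\mu)-f_2(\lambda)f_1(\mu)}{\lambda-\mu}.
\]
To this one attaches, in the standard IIKS way, a $2\times 2$ matrix Riemann-Hilbert problem $Y(z;x)$ with a single jump along $(0,+\infty)$ built from the outer product of $(f_1,f_2)^{T}$ with itself, normalized by $Y(z;x)=\mathbf{I}+O(z^{-1})$ at infinity. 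The resolvent of $\gamma K_{\nu,x}$ is then encoded in $Y$ and the Fredholm determinant is read off from the leading coefficients of its expansion at infinity.

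The second step is to pass from $Y$ to a matrix $\Psi(z;x)$ solving the Lax pair for PIV. This is done by conjugating $Y$ by the diagonal factor built from the parabolic cylinder functions $D_\nu(\sqrt{2}(z+x))$, $D_{\nu-1}(\sqrt{2}(z+x))$, and by extending $Y$ across the real line according to the Stokes structure of Weber's equation. Since $D_\nu$ satisfies Weber's equation, a direct calculation shows that $\Psi$ has only piecewise constant Stokes jumps along a finite union of rays emanating from the origin, and that its $z$- and $x$-differential equations form exactly the Lax pair identified by Jimbo and Miwa for the PIV equation. A computation of the Stokes matrices identifies $\Psi$ with the RH problem for the Clarkson-McLeod family, with parameters $\alpha=\alpha(\nu)$ and $\kappa=\kappa(\gamma,\nu)$ linked through the monodromy quantity $\rho$ in \eqref{rhoandkappa}.

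Once this identification is in place, the IIKS differential identity
\[
\frac{d}{dx}\ln\det(\mathbf{I}-\gamma K_{\nu,x})=-\gamma\,\operatorname{tr}\!\Bigl((\mathbf{I}-\gamma K_{\nu,x})^{-1}\partial_x K_{\nu,x}\Bigr)
\]
can be rewritten as a contour integral of an explicit rational expression in $Y$ and $\partial_x Y$; after transport to the $\Psi$-picture it becomes the Hamiltonian of the isomonodromic deformation, which by the Jimbo-Miwa theory is precisely a solution $\sigma_\nu(x;\gamma)$ of the $\sigma$-form \eqref{eq:sPIV}. The boundary condition \eqref{eq:SigmaAsy} would be verified through the Neumann expansion
\[
\ln\det(\mathbf{I}-\gamma K_{\nu,x})=-\gamma\,\operatorname{tr}K_{\nu,x}+O\!\bigl(\|\gamma K_{\nu,x}\|_{1}^{2}\bigr),\qquad x\to+\infty,
\]
using the diagonal identity $K_{\nu,x}(\lambda,\lambda)=\sqrt{2}\,W[D_{\nu-1},D_\nu](\sqrt{2}(\lambda+x))$ together with the closed-form Wronskian of parabolic cylinder functions; differentiating in $x$ and integrating by parts on $(0,+\infty)$ isolates the leading contribution $\sqrt{2}\gamma D_{\nu-1}^2(\sqrt{2}x)$.

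The principal technical obstacle is the exact matching of Stokes data: the diagonal gauge by parabolic cylinder functions introduces jumps across all of the Stokes rays of Weber's equation, and one must verify that the resulting monodromy constants satisfy the cyclic relation imposed by the PIV RH problem and depend on the single real parameter $\gamma$ in precisely the way prescribed by \eqref{rhoandkappa}. Once this bookkeeping is carried out, the differential identity and the asymptotic normalization uniquely pin down $\sigma_\nu(x;\gamma)$ and yield \eqref{eq:IntRep}.
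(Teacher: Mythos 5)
Your plan follows essentially the same route as the paper: cast the kernel in IIKS form, attach a $2\times 2$ Riemann--Hilbert problem for $Y$, gauge-transform by the matrix of parabolic cylinder functions into the Jimbo--Miwa PIV Lax pair, read the $\sigma$-function off the residue of $\widehat\Psi$ at infinity, and get the boundary condition from the leading term of the Fredholm expansion. Three specifics need fixing, and one of your stated ``principal obstacles'' is in fact avoidable.

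First, the jump for $Y$ is $\mathbf{I}-2\pi i\,\mathbf{f}\mathbf{h}^T$ with $\mathbf{f}^T\mathbf{h}=0$, not the ``outer product of $(f_1,f_2)^T$ with itself''; the latter would produce the kernel $(f_1(\lambda)f_1(\mu)+f_2(\lambda)f_2(\mu))/(\lambda-\mu)$, not the antisymmetric combination. The correct choice is $\mathbf{f}=(D_\nu,-D_{\nu-1})^T$, $\mathbf{h}=\gamma(D_{\nu-1},D_\nu)^T$, as in \eqref{def:f}--\eqref{def:h}.

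Second, you flag the Stokes-data matching with the Clarkson--McLeod family and the identification of $\kappa,\rho$ as the principal technical obstacle, but that identification is not required. One only has to check that $\widehat\Psi(\lambda)=e^{-\frac{1}{2}x^2\sigma_3}2^{\frac{\nu}{2}\sigma_3}Y(\lambda)^{-T}P(\sqrt{2}(\lambda+x))\prod J_i$ has piecewise-constant jumps along four rays consistent with the Jimbo--Miwa Lax pair at $\theta_0=0$, $\theta_\infty=\nu$. The Jimbo--Miwa identity $\sigma_\nu(x)=-2(\widehat\Psi_1)_{11}-2\nu x$ then holds for \emph{any} Stokes data compatible with that Lax pair; and the explicit gauge yields $2(Y_1)_{11}=-2(\widehat\Psi_1)_{11}-2\nu x$ because $P_1$ has vanishing diagonal and the factor $(\lambda+x)^{-\nu\sigma_3}$ supplies exactly the $-2\nu x$ shift. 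Combined with the IIKS trace formula $\frac{d}{dx}\ln\det(\mathbf{I}-\gamma K_{\nu,x})=2(Y_1)_{11}$, this proves \eqref{eq:IntRep} with no reference to $\kappa$ or $\rho$; the specific solution is then pinned down by the boundary condition alone.

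Third, the appeal to a ``closed-form Wronskian of parabolic cylinder functions'' is a trap: $D_{\nu-1}D_\nu'-D_{\nu-1}'D_\nu$ is not the Wronskian of two solutions of the same Weber equation and is not constant. What the recurrences \eqref{eq:PCRelation1}--\eqref{eq:PCRelation2} actually give is
\[
K_{\nu,x}(\lambda,\lambda)=\sqrt{2}\,\bigl(D_\nu^2+\nu D_{\nu-1}^2-zD_\nu D_{\nu-1}\bigr)(z),\qquad z=\sqrt{2}(\lambda+x),
\]
and the asymptotic $\sqrt{2}\gamma D_{\nu-1}^2(\sqrt{2}x)$ emerges only after the leading terms of $D_\nu^2$ and $zD_\nu D_{\nu-1}$ cancel. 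Your shift/integration-by-parts step reducing $\frac{d}{dx}\operatorname{tr}K_{\nu,x}$ to $-K_{\nu,x}(0,0)$ is correct, but it must be followed by this subleading asymptotic computation rather than by a constant-Wronskian shortcut.
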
\vskip .5cm

\begin{rem}\label{rem:Hq}
It is seen from \cite[Equations (C.34)-(C.36)]{JM} and \eqref{eq:Hq} below that the  $\sigma$-form of the PIV equation is related to $q(x;\alpha,\kappa)$ and the Hamiltonian $\mathcal{H}(x;\alpha,\kappa)$  by
\begin{equation}\label{eq:sigmaH}
\sigma_{\alpha+\frac{1}{2}}(x;\gamma)=\frac{1}{2}\left(q(x;\alpha,\kappa)-\mathcal{H}(x;\alpha,\kappa)\right),
\end{equation}
and conversely
\begin{equation}\label{eq:qsigma}q(x;\alpha,\kappa)= -\frac{\sigma''_{\alpha+\frac{1}{2}}(x;\gamma)+2x\sigma_{\alpha+\frac{1}{2}}'(x;\gamma)-2\sigma_{\alpha+\frac{1}{2}}(x;\gamma)}{2\sigma_{\alpha+\frac{1}{2}}'(x;\gamma)+4\alpha+2}.\end{equation}
It follows from the approximation \eqref{eq:SigmaAsy} and the relation \eqref{eq:qsigma} that $q(x;\alpha,\kappa)$ actually satisfies the boundary condition \eqref{qAsy},
and the parameters $\kappa$ and $\gamma$ are related by
\begin{equation}\label{eq:sigmaKappa}
\kappa=\sqrt{2}\gamma.
\end{equation}
In view of  \eqref{eq:sigmaH},  the asymptotic behavior  near negative infinity  of the solution of the   $\sigma$-form \eqref{eq:sPIV} of the PIV equation subject to  the boundary condition \eqref{eq:SigmaAsy} can be obtained by using  Theorems \ref{thm1} and \ref{thm2}.
\end{rem}

\begin{rem}
For $\nu=n\in\mathbb{N}$, the parabolic cylinder function is reduced  to the Hermite polynomial  \cite[(12.7.2)]{NIST}
  \begin{equation}\label{eq:Hermite}
D_{n}(\sqrt{2}x)=e^{-\frac{1}{2}x^2} 2^{\frac{n}{2}} \pi_{n}(x),
\end{equation}
where $\pi_{n}(x)$ is the monic  Hermite  polynomial defined through
\eqref{eq:OP}. Therefore, for given $\nu\in\mathbb{N}$, the kernel $\gamma^{*} K_{\nu, x}(\lambda,\mu)$ with
    \begin{equation}\label{eq:gammaSta} \gamma^{*}=\frac{1}{\sqrt{2\pi}\Gamma(\nu)}\end{equation}
     is reduced to  the classical Hermite kernel defined in \eqref{eq:OPkernel}.
  Applying Theorem \ref{thm:IntRep}, we recover  \eqref{eq:detH}-\eqref{eq:sPIVAsy}, which was obtained first by Tracy and Widom.
   \end{rem}


%
The rest of the present paper is arranged as follows. In Section \ref{sec:RHPforPIV}, we state the RH problem for the PIV equation \eqref{PIV} and express the solutions of the PIV equation and the associated Hamiltonian  in terms of  the solution to this RH problem. Subsequently, in Sections \ref{Asymptotic-infty1}-\ref{Asymptotic-infty3}, we apply the Deift-Zhou nonlinear steepest descent method to the mentioned RH problem as $x\to-\infty$,
 for the parameter $\kappa$ respectively in three different regimes.  Using the asymptotic analysis of the RH problems we performed, Theorems \ref{thm1}-\ref{thm3}  are then proved in Section \ref{proof1}. The final Section \ref{sec:proof of det} is devoted to the proof of Theorem \ref{thm:IntRep}. For the convenience of the reader, we collect in the Appendix three local parametrix models used in the asymptotic analysis of the RH problems.

\section{Riemann-Hilbert problem for the  Painlev\'e IV  equation}\label{sec:RHPforPIV}

In this section, we review the RH problem for the PIV equation \eqref{PIV},  a detailed description can be found in  \cite[Section 2]{IK} and \cite[Chapter 5.1]{FIKN}.

Denote  $\Sigma=\bigcup^{8}_{k=1}\gamma_{k}$, where $\gamma_{k}=\{\xi\in \mathbb{C}\mid\arg\xi=k\pi/4\}$; see Figure \ref{PIVj}. Then, the $2\times2$ matrix-valued function $\Psi(\xi,x)$ solves the following RH problem.
\subsection*{RH problem for $\Psi(\xi,x)$}
\begin{description}
\item{(1)} $\Psi(\xi,x)$ is analytic for all $\xi\in\mathbb{C}\setminus \Sigma$.

\item{(2)} $\Psi(\xi,x)$ satisfies the jump relations
$$
\Psi_{+}(\xi,x)=\Psi_{-}(\xi,x)\left\{
\begin{aligned}
&S_{k},\quad &\xi&\in\gamma_{k},\ k=1,\cdots,7,\\
&S_{8}e^{-2\pi i\alpha\sigma_{3}},\quad &\xi&\in\gamma_{8},
\end{aligned}
\right.
$$
where the Stokes matrices
\begin{equation}\label{eq:Sk}
S_{2i-1}=
\begin{pmatrix}
1 & s_{2i-1}\\
0 & 1
\end{pmatrix},\quad
S_{2i}=
\begin{pmatrix}
1 & 0\\
s_{2i} & 1
\end{pmatrix}, \quad \ i=1,2,3,4.
\end{equation}
The Stokes multipliers $s_k$ satisfy the following restrictions
\begin{equation}\label{srela1}
s_{k+4}=-s_{k}e^{(-1)^{k}2\pi i\alpha},\quad k=0,1,2,3,4,
\end{equation}
and
\begin{equation}\label{srela2}
[(1+s_{3}s_{4})(1+s_{1}s_{2})+s_{1}s_{4}]e^{-i\pi\alpha}-(1+s_{2}s_{3})e^{i\pi \alpha}=-2i\sin(\pi\alpha).
\end{equation}

\item{(3)} As $\xi\to\infty$, $\Psi(\xi,x)$ satisfies the following asymptotic condition
\begin{equation}\label{Asyatinfty}
\Psi(\xi,x)=\Psi^{(\infty)}(\xi,x)
e^{\Theta(\xi,x)\sigma_{3}},
\end{equation}
where
\begin{equation}\label{Psiinfty}
\Psi^{(\infty)}(\xi,x)=\mathbf{I}+\frac{\Psi_{1}}{\xi}+\frac{\Psi_{2}}{\xi^2}
+O\left(\frac{1}{\xi^{3}}\right),\quad \Theta(\xi,x)=\frac{1}{8}\xi^{4}+\frac{x}{2}\xi^{2}+\alpha\ln\xi,
\end{equation}
with the branch of $\ln\xi$  chosen so that $\arg\xi\in(0,2\pi)$.


\item{(4)} As $\xi\to0$, $\Psi(\xi,x)$ has the asymptotic behavior of the form
\begin{equation}\label{Asyatzero}
\Psi(\xi,x)=\left\{\begin{aligned}
&\Psi^{(0)}(\xi,x)\xi^{\alpha\sigma_3}E, && \mathrm{for}\quad \alpha-\frac{1}{2}\notin \mathbb{Z},\\
&\Psi^{(0)}(\xi,x)\xi^{\alpha\sigma_3}\begin{pmatrix} 1 & \frac{s_0}{\pi i}\ln\xi \\ 0 & 1\end{pmatrix}E,&& \mathrm{for}\quad \alpha+\frac{1}{2}\in \mathbb{N},\\
&\Psi^{(0)}(\xi,x)\xi^{\alpha\sigma_3}\begin{pmatrix} 1 & 0 \\ -\frac{s_0}{\pi i}\ln\xi & 1\end{pmatrix}E,&& \mathrm{for}\quad \frac{1}{2}-\alpha\in \mathbb{N},
\end{aligned}\right.
\end{equation}
where $\Psi^{(0)}(\xi,x)$ is  analytic in the neighborhood of $\xi=0$.
The functions $\xi^{\alpha}$ and $\ln\xi$ take principal values. The connection matrix $E$ is given by
\begin{equation}\label{connectionmatrix}
E=E_0S_{0 }\cdots S_{k-1},\quad\xi\in\Omega_k,\quad k=1,\cdots,8,
\end{equation}
where $S_0:=\mathbf{I}$ and the regions $\Omega_k=\left\{\xi\in\mathbb{C}\mid\arg\xi\in(\frac{(k-1)\pi}{4},\frac{k\pi}{4})\right\}$ are depicted in Figure \ref{PIVj}.
Moreover, the connection matrix $E_0$ takes the form
\begin{equation}\label{E0}
E_0=\left\{\begin{aligned}
&\begin{pmatrix} 1 & 0 \\ \frac{s_0e^{2\pi i\alpha}}{e^{2\pi i\alpha}+1} & 1
\end{pmatrix}, && \mathrm{for}\quad \alpha-\frac{1}{2}\notin\mathbb{Z},\\
&\begin{pmatrix} p_1 & -1 \\ 1 & 0
\end{pmatrix}, && \mathrm{for}\quad \alpha+\frac{1}{2}\in\mathbb{N},\\
&\begin{pmatrix} 1 & 0 \\ p_2 & 1
\end{pmatrix}, && \mathrm{for}\quad \frac{1}{2}-\alpha\in\mathbb{N},
\end{aligned}\right.
\end{equation}
where $p_1$, $p_2$ are two arbitrary nonzero constants.
\end{description}

In the RH formulation, we  denote  the Pauli matrices by $\sigma_k$, $k=1,2,3$
\begin{equation}\label{Pauli}
\sigma_1=\begin{pmatrix}0 & 1\\ 1 & 0\end{pmatrix},\quad \sigma_2=\begin{pmatrix}0 & -i\\ i & 0\end{pmatrix},\quad \sigma_3=\begin{pmatrix}1 & 0\\ 0 & -1\end{pmatrix}.
\end{equation}
\begin{figure}[t]
  \centering
  \includegraphics[width=7cm,height=7cm]{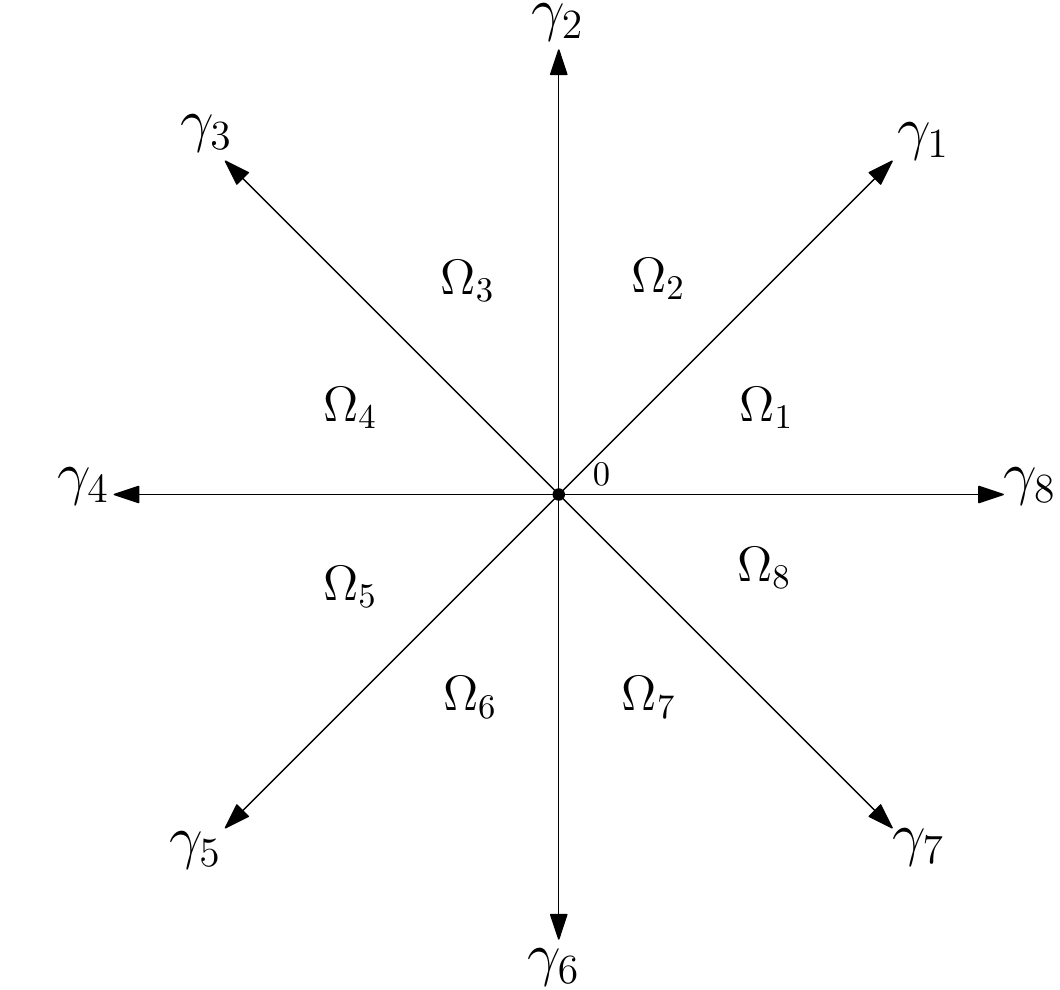}\\
  \caption{The jump contour $\Sigma$ and the regions $\Omega_k$}\label{PIVj}
\end{figure}

A significant  fact  is that
the  PIV transcendents  and the associated Hamiltonian can be expressed in terms of  the solution of the RH problem for $\Psi(\xi,x)$.

\begin{pro}\label{eq:PIVRHP}
The solution  of the PIV equation \eqref{PIV} and the associated Hamiltonian  are  related to  the solution to above RH problem for $\Psi(\xi,x)$ by
\begin{equation}\label{qsolu1}
q(x)=(\Psi_{1})_{12}(\Psi_{1})_{21},
\end{equation}
and
\begin{equation}\label{tau}
\mathcal{H}(x)
=\left(\Psi_2\right)_{11}-\left(\Psi_2\right)_{22},
\end{equation}
where $\Psi_1$ and $\Psi_2$ are given in  \eqref{Psiinfty}.
Let  $\Psi^{(0)}(\xi,x)$  be defined as \eqref{Asyatzero}, we have
\begin{equation}\label{Fequation}
F(x):=\Psi^{(0)}(0,x)=\exp\bigg[\int^x q(x)dx\; \sigma_3\bigg].
\end{equation}
\end{pro}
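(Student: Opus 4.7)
My plan is to prove the three identities via the standard isomonodromy analysis of the RH problem, following \cite{IK, FIKN}. The starting observation is that the Stokes multipliers $s_{k}$, the formal monodromy exponent $\alpha$, and the connection matrix $E$ are all $x$-independent, so the two logarithmic derivatives
\[
A(\xi,x):=\Psi_{\xi}(\xi,x)\Psi^{-1}(\xi,x),\qquad U(\xi,x):=\Psi_{x}(\xi,x)\Psi^{-1}(\xi,x)
\]
are single-valued meromorphic functions of $\xi$, with singularities only at $\xi=\infty$ and $\xi=0$. Using $\Psi=\Psi^{(\infty)}e^{\Theta\sigma_{3}}$ together with \eqref{Psiinfty}, a direct computation yields
\[
U = \Psi^{(\infty)}_{x}(\Psi^{(\infty)})^{-1}+\tfrac{\xi^{2}}{2}\Psi^{(\infty)}\sigma_{3}(\Psi^{(\infty)})^{-1} = \tfrac{\xi^{2}}{2}\sigma_{3}+\tfrac{\xi}{2}[\Psi_{1},\sigma_{3}]+U_{0}+O(\xi^{-1}),
\]
with $U_{0}=\tfrac{1}{2}\bigl([\Psi_{2},\sigma_{3}]+\sigma_{3}\Psi_{1}^{2}-\Psi_{1}\sigma_{3}\Psi_{1}\bigr)$. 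On the other hand, substituting \eqref{Asyatzero} shows $U=\Psi^{(0)}_{x}(\Psi^{(0)})^{-1}$ near $\xi=0$ (the $x$-independent factor $\xi^{\alpha\sigma_{3}}E$, together with the logarithmic modification in the two half-integer cases, drops out of $\Psi_{x}\Psi^{-1}$), so $U$ is in fact entire in $\xi$. Combining, $U$ must be a polynomial of degree two in $\xi$ with the $O(\xi^{-1})$ tail vanishing identically; a parallel computation represents $A$ as a cubic polynomial in $\xi$ plus a simple pole at $\xi=0$, fully determined by $\Psi_{1}$, $\Psi_{2}$, and $\alpha$.

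For \eqref{qsolu1}, write $\Psi_{1}=\bigl(\begin{smallmatrix}a&b\\ c&-a\end{smallmatrix}\bigr)$ (traceless since $\det\Psi\equiv 1$), so that $\Psi_{1}^{2}=(a^{2}+bc)I$. Substituting the explicit forms of $A$ and $U$ into the compatibility condition $A_{x}-U_{\xi}+[A,U]=0$ produces a coupled first-order ODE system in the entries of $\Psi_{1}$ and certain components of $\Psi_{2}$; eliminating the auxiliary variables reduces this to a second-order ODE for $q(x):=bc$, which is precisely \eqref{PIV}. For \eqref{tau}: in the Jimbo-Miwa-Ueno $\tau$-function framework \cite{JM}, the Hamiltonian of the isomonodromic PIV flow equals $\frac{d}{dx}\ln\tau$, and within the RH formulation this coincides with a specific second-order trace invariant extracted from the asymptotic expansion at infinity; a direct matching against the compatibility system identifies this invariant with $(\Psi_{2})_{11}-(\Psi_{2})_{22}$.

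The identity \eqref{Fequation} follows by evaluating $U(\xi,x)=\Psi^{(0)}_{x}(\Psi^{(0)})^{-1}$ at $\xi=0$, which gives $F_{x}F^{-1}=U(0,x)=U_{0}$. A direct computation using $\Psi_{1}^{2}=(a^{2}+bc)I$ yields
\[
U_{0}=\begin{pmatrix} bc & -(\Psi_{2})_{12}-ab\\ (\Psi_{2})_{21}-ac & -bc\end{pmatrix},
\]
whose diagonal is already $q(x)\sigma_{3}$. The remaining task, namely verifying that the off-diagonals vanish, i.e.\ that $(\Psi_{2})_{12}=-ab$ and $(\Psi_{2})_{21}=ac$, is what I anticipate to be the main technical obstacle. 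These two scalar relations can be extracted either from the vanishing of the $\xi^{-1}$ Laurent coefficient of $U$ at infinity (which brings in $\Psi_{3}$, to be eliminated via the same compatibility system used to derive \eqref{qsolu1}), or equivalently from the observation that the particular normalization of $E_{0}$ in \eqref{E0} has been calibrated so that $F(x)=\Psi^{(0)}(0,x)$ is diagonal in each of the three cases. Once $F_{x}F^{-1}=q(x)\sigma_{3}$ is established, integration in $x$ yields \eqref{Fequation}, with the constant of integration absorbed into the indefinite integral in the exponent.
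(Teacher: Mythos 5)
Your proposal reconstructs the isomonodromy argument from scratch, which is a genuinely different route from the paper's proof: the paper simply cites \cite[Eqs.~(2.11), (2.40)]{IK} for \eqref{qsolu1} and \eqref{Fequation}, and proves \eqref{tau} by a one-line residue computation from the Jimbo--Miwa--Ueno formula. Your framework and the expansion of $U(\xi,x)=\Psi_x\Psi^{-1}$ are set up correctly (in particular the formula for $U_0$ and the observation that the $x$-independent factors $\xi^{\alpha\sigma_3}E$ and the $\ln\xi$ unipotent drop out, so $U$ is polynomial of degree two). However, as you already flag, the diagonality of $U_0$ --- equivalently the two scalar identities $(\Psi_2)_{12}=-(\Psi_1)_{11}(\Psi_1)_{12}$ and $(\Psi_2)_{21}=(\Psi_1)_{11}(\Psi_1)_{21}$ --- is the crux of \eqref{Fequation}, and neither of your two suggested routes is carried out. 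The first route (vanishing of the $\xi^{-1}$ tail of $U$) genuinely requires $\Psi_3$ to appear and then be eliminated via the compatibility system together with the corresponding expansion of $A=\Psi_\xi\Psi^{-1}$; this is a nontrivial chase through the Lax-pair coefficients, which is precisely what \cite{IK} performs. The second route (``the normalization of $E_0$ has been calibrated so that $F$ is diagonal'') is, as stated, a restatement of the desired conclusion rather than a proof: the form \eqref{E0} fixes $\Psi^{(0)}$ uniquely, but that $\Psi^{(0)}(0,x)$ is then diagonal is exactly the nonobvious fact to be established. Similarly, your sketch for \eqref{qsolu1} asserts without verification that elimination yields \eqref{PIV}.

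For \eqref{tau}, your appeal to ``a direct matching against the compatibility system'' is vaguer than necessary. The paper's proof is shorter and cleaner here: from $\tfrac{d\Theta}{dx}=\tfrac{\xi^2}{2}$, the $\xi^{-1}$ coefficient of $\mathrm{Tr}\bigl(\Psi^{(\infty)-1}\Psi^{(\infty)}_\xi\cdot\tfrac{\xi^2}{2}\sigma_3\bigr)$ is $\tfrac12\mathrm{Tr}\bigl((\Psi_1^2-2\Psi_2)\sigma_3\bigr)$, and since $\Psi_1$ is traceless and $2\times 2$ one has $\Psi_1^2=\det(\Psi_1)^{-1}\cdot$(scalar)$\cdot\mathbf{I}$, hence $\mathrm{Tr}(\Psi_1^2\sigma_3)=0$, leaving $-\bigl((\Psi_2)_{11}-(\Psi_2)_{22}\bigr)$, from which \eqref{tau} follows once the residue-at-infinity sign convention is fixed. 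I would recommend replacing the ``direct matching'' sentence with this explicit two-line computation, and either citing \cite{IK} for the diagonality of $U_0$ (as the paper does) or actually performing the $\Psi_3$-elimination.
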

\begin{proof}
The relations  \eqref{qsolu1} and \eqref{Fequation} have been   derived in   \cite[Equations (2.11) and (2.40)]{IK}.
According to Jimbo, Miwa and Ueno \cite{JMU}, it follows from \eqref{Asyatinfty} and \eqref{Psiinfty} that
\begin{equation}\label{tau-Res}
\mathcal{H}(x)=-\mathop{\mathrm{Res}}\limits_{z=\infty}
\mathrm{Tr}\left(\Psi^{(\infty)}(\xi,x)^{-1}
\frac{\mathrm{d}\Psi^{(\infty)}(\xi,x)}{\mathrm{d}\xi}
\frac{\mathrm{d}\Theta(\xi,x)}{\mathrm{d}x}\sigma_3\right)
=\left(\Psi_2\right)_{11}-\left(\Psi_2\right)_{22}.
\end{equation}
\end{proof}

We mention that using the Lax pair  for $\Psi(\xi,x)$  \cite[Equations (2.1)-(2.2)]{IK}, the Hamiltonian $\mathcal{H}(x)$ can be expressed in terms of $q$
as follows
\begin{equation}\label{eq:Hq}
\mathcal{H}(x)=\frac{q^3}{4}+xq^2+(x^2-2\alpha)q-\frac{(q')^2}{4q}.
\end{equation}
Let $p=\frac{q'}{q}$, then
\begin{equation}\label{eq:H}
H(x):=-2\mathcal{H}(x)=-\frac{q^3}{2}-2xq^2-2(x^2-2\alpha)q+\frac{1}{2}p^2q.
\end{equation}
The PIV equation \eqref{PIV} can be derived by eliminating $p$ from the  Hamilton equations
\begin{equation}\label{eq:Hequ}
\frac{dq}{dx}=\frac{\partial H}{\partial p},\quad \frac{dp}{dx}=-\frac{\partial H}{\partial q}.
\end{equation}
Applying \eqref{eq:Hq}-\eqref{eq:Hequ}, we also find the relation
\begin{equation}\label{eq:dH}
\frac{d}{dx}\mathcal{H}(x)=\frac{\partial }{\partial x}\mathcal{H}(x)=q^2+2xq.
\end{equation}

It was observed by Its and Kapaev \cite[Equation (2.23)]{IK} that, for any real solution of the PIV equation \eqref{PIV}, the Stokes multipliers  satisfy  the conditions
\begin{equation}\label{assum1}
\overline{s}_0=s_0,\quad \overline{s}_1=-e^{2\pi i\alpha}s_{3}.
\end{equation}
When $\alpha\in\mathbb{R}$ with $\alpha-\frac{1}{2}\notin\mathbb{Z}$, Its and Kapaev proved that for solutions of the PIV equation \eqref{PIV} determined by the asymptotic behavior \eqref{qAsy} with real parameter $\kappa$, the Stokes multipliers fulfill the following conditions (see \cite[Theorem 3.1]{IK})
\begin{equation}\label{assum2}
s_2=0,\quad s_1+s_3=0,\quad s_*\neq1, \quad (1-s_*)e^{\pi i\alpha}\in\mathbb{R},
\end{equation}
where $s_*$ is constituted by the Stokes multipliers $s_0$, $s_1$  through
\begin{equation}\label{sstar}
s_*=s_0s_1+1.
\end{equation}
The connection formula between
the Stokes multipliers
and the parameter $\kappa$ in the asymptotic behavior \eqref{qAsy}
was also derived therein
\begin{equation}\label{kapparep}
\kappa=-\frac{(s_*-1)e^{\pi i\alpha}}{2\pi^{\frac{3}{2}}}\Gamma
\left(\frac{1}{2}-\alpha\right).
\end{equation}
Actually, the relation \eqref{kapparep} is also true for  $\frac{1}{2}-\alpha\in \mathbb{N}$.

For the case $\alpha-\frac{1}{2}\notin\mathbb{Z}$, combining  \eqref{kappastar} with \eqref{kapparep}, we find the following correspondence between the conditions on the  parameter $\kappa$ and the Stokes multiplier $s_*$ as shown in Table
\ref{table}.
\begin{table}[H]
\renewcommand\arraystretch{1}
\centering
\begin{tabular}{|c|c|}
\hline
   $\kappa$ &  $s_*$ \\
\hline
  $\kappa(\kappa-\kappa^*)<0$  &   $|s_*|<1$    \\
\hline
  $\kappa=\kappa^*$  &  $|s_*|=1,\ s_*\neq1$  \\
\hline
$\kappa(\kappa-\kappa^*)>0$ &  $|s_*|>1$  \\
\hline
\end{tabular}
\caption{The correspondence between $\kappa$ and $s_*$}\label{table}
\end{table}
We point out that the exceptional case $s_*=1$ in the table is equivalent to $\kappa=0$ and the solution determined by \eqref{qAsy} is trivial; see Remark \ref{rem-1.2}.  While, for the case $\frac{1}{2}-\alpha\in\mathbb{N}$, it follows from  \eqref{assum1}, \eqref{sstar} and \eqref{kapparep} that we always have
$$|s_*|^2=1+\frac{4\pi^3\kappa^2}{\Gamma(\frac{1}{2}-\alpha)^2}>1.$$
For $\alpha+\frac{1}{2}\in\mathbb{N}$, as mentioned in the introduction,  the solution of the PIV equation \eqref{PIV} determined by  \eqref{qAsy} can be expressed in terms of the complementary error function.
These special solutions correspond to the specified values of Stokes multipliers (see \cite{Kap98})
\begin{equation}\label{SpecialsoluStokes}
s_0=s_2=0,\quad s_1+s_3=0.
\end{equation}

\section{RH analysis as $x\to-\infty$ with $0\leq |s_*|<1$}\label{Asymptotic-infty1}
In this section, we start by carrying out the Deift-Zhou nonlinear steepest descent analysis of the RH problem for $\Psi$ as $x\to-\infty$ under the condition $0< |s_*|<1$. Then, the reduced case $|s_*|=0$ is considered at the end of this section.

Assume now that $x<0$. We start with the first transformation
\begin{equation}\label{rescaling}
\Phi(z)=(-x)^{-\frac{\alpha}{2}\sigma_3}
\Psi\left((-x)^{\frac{1}{2}}z,x\right).
\end{equation}
Immediately, $\Phi(z)$ solves the following RH problem.

\subsection*{RH problem for $\Phi(z)$}
\begin{description}
\item{(1)} $\Phi(z)$ is analytic for all $z\in\mathbb{C}\setminus\widetilde\Sigma$, where $\widetilde\Sigma=\Sigma\setminus(\gamma_2\cup\gamma_6)$; see Figure \ref{PIVj}.

\item{(2)} $\Phi(z)$ satisfies the same jump conditions as $\Psi(\xi,x)$ on $\widetilde\Sigma$.
\item{(3)} As $z\to\infty$,
\begin{equation}\label{Asyatinfty1}
\Phi(z)=\left(\mathbf{I}+\frac{\Phi_{1}}{z}+\frac{\Phi_{2}}{z^2}+O(z^{-3})\right)
z^{\alpha\sigma_3}e^{x^2(\frac{1}{8}z^{4}-\frac{1}{2}z^{2})\sigma_{3}},
\end{equation}
where the branch of $z^{\alpha}$ is chosen such that $\arg z\in(0,2\pi)$.
\item{(4)} $\Phi(z)$ has the same asymptotic behaviors as $\Psi(\xi,x)$ as $z\to0$; see \eqref{Asyatzero} and \eqref{connectionmatrix}.
\end{description}

Simultaneously, by \eqref{qsolu1}, \eqref{tau} and \eqref{rescaling}, we get
\begin{equation}\label{solu2}
q(x;\alpha,\kappa)=-x(\Phi_{1})_{12}(\Phi_{1})_{21},
\end{equation}
and
\begin{equation}\label{tau1}
\mathcal{H}(x;\alpha,\kappa)=-x\left[(\Phi_2)_{11}-(\Phi_2)_{22}\right],
\end{equation}
where $\Phi_1=\Phi_1(x)$ and $\Phi_2=\Phi_2(x)$ are the coefficients in \eqref{Asyatinfty1}.

Introduce the following $g$-function
\begin{equation}\label{g-function}
g(z)=\frac{1}{8}z\left(z^{2}-\frac{8}{3}\right)^{\frac{3}{2}},
\end{equation}
where $\arg\left(z\pm\textstyle\sqrt{ {8}/{3}}\right)\in(-\pi,\pi)$. By straightforward computation,
\begin{equation}\label{gatinfty}
g(z)=\frac{1}{8}z^4-\frac{1}{2}z^2+\frac{1}{3}+\frac{4}{27}z^{-2}+O(z^{-4})
\end{equation}
as $z\rightarrow\infty$. Moreover, $g(z)$ has four saddle points
\begin{equation*}
z_{1,\pm}=\pm\sqrt{\frac{2}{3}},\quad z_{2,\pm}=\pm\sqrt{\frac{8}{3}}.
\end{equation*}

We then make the second transformation
\begin{equation}\label{U(z)}
\mathbf{U}(z)=e^{\frac{x^2}{3}\sigma_{3}}\Phi(z)e^{-x^2g(z)\sigma_{3}}.
\end{equation}
It is direct to check that $\mathbf{U}(z)$ satisfies the following RH problem.

\subsection*{RH problem for $\mathbf{U}(z)$}
\begin{description}
\item{(1)} $\mathbf{U}(z)$ is analytic for $z\in\mathbb{C}\setminus\widetilde\Sigma$.

\item{(2)} We have the jump relations $\mathbf{U}_{+}(z)=\mathbf{U}_{-}(z)J_\mathbf{U}(z)$, where
$$J_\mathbf{U}(z)=\left\{
\begin{aligned}
&\begin{pmatrix}1 & s_ke^{2x^2g(z)}\\ 0 & 1 \end{pmatrix},\quad &z&\in\gamma_{k},\ k=1,3,5,7,\\
&\begin{pmatrix}e^{x^2(g_-(z)-g_+(z))} & 0\\ -s_0e^{2\pi i\alpha}e^{-x^2(g_+(z)+g_-(z))} & e^{x^2(g_+(z)-g_-(z))} \end{pmatrix},\quad &z&\in\gamma_{4},\\
&\begin{pmatrix}e^{-2\pi i\alpha}e^{x^2(g_-(z)-g_+(z))} & 0\\ s_0e^{2\pi i\alpha}e^{-x^2(g_+(z)+g_-(z))} & e^{2\pi i\alpha}e^{x^2(g_+(z)-g_-(z))} \end{pmatrix},\quad &z&\in\gamma_{8}.
\end{aligned}
\right.
$$
\item{(3)} $\mathbf{U}(z)$ satisfies the asymptotic condition
$$\mathbf{U}(z)=\left(\mathbf{I}+O(z^{-1})\right)z^{\alpha\sigma_3},\quad \mathrm{as}\quad z\rightarrow\infty,$$
where $\arg z\in(0,2\pi)$.
\item{(4)} $\mathbf{U}(z)$ satisfies the following asymptotic behavior as $z\to0$
\begin{equation}\label{Uatzero}
\mathbf{U}(z)=\mathbf{U}_{0}(z)z^{\alpha\sigma_3}E_0S_1e^{-x^2g(z)\sigma_3},\qquad \arg z\in(\pi/4, {\pi}/{2}),
\end{equation}
where $\mathbf{U}_{0}(z)$ is analytic in the neighborhood of $z=0$. The asymptotic behaviors of $\mathbf{U}(z)$ in other regions are determined by \eqref{Uatzero} and the jump relations satisfied by $\mathbf{U}(z)$.
\end{description}

\begin{figure}[t]
  \centering
  \includegraphics[width=13cm,height=8cm]{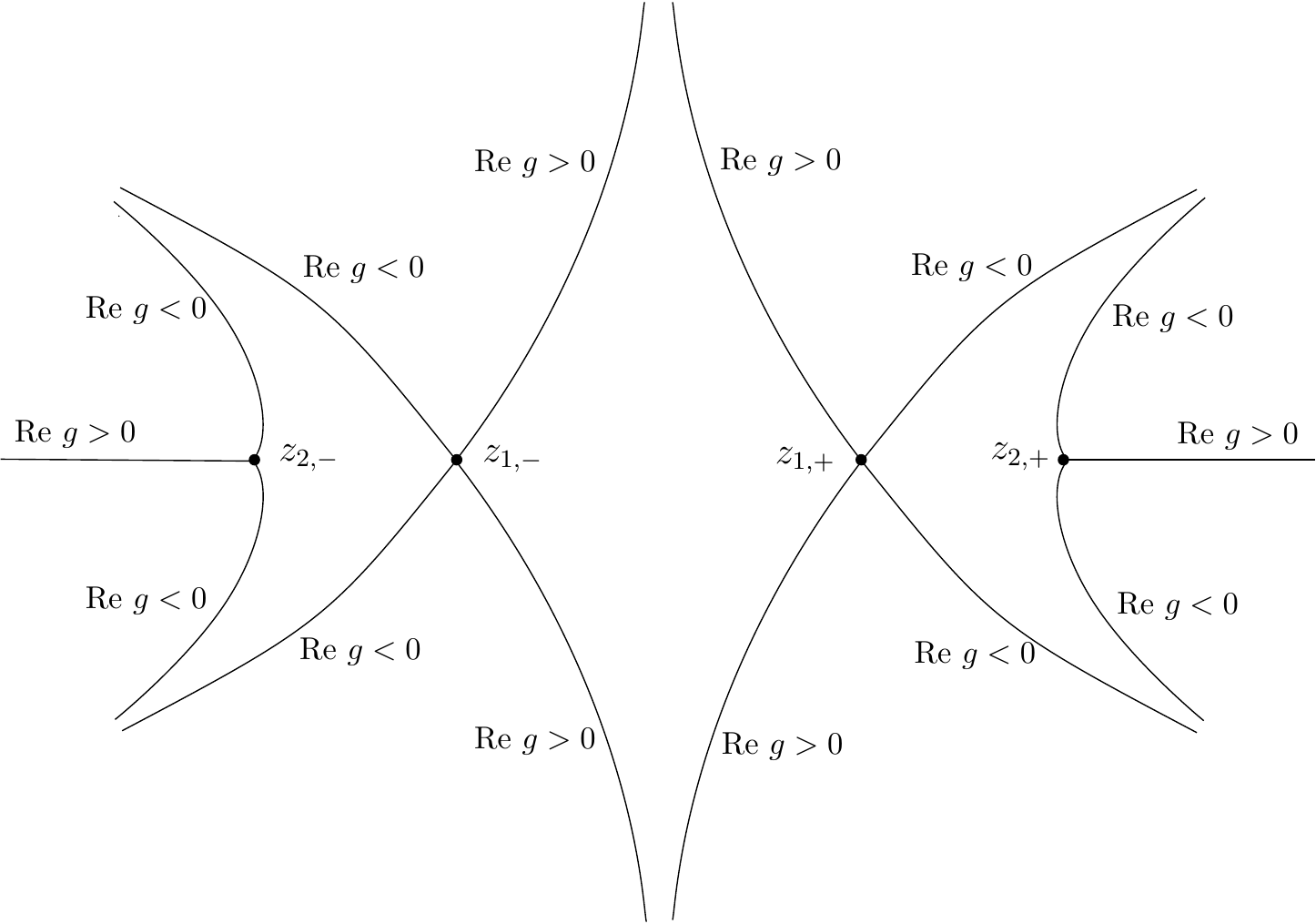}\\
  \caption{The anti-Stokes curves of the exponent $g(z)$}\label{ASC}
\end{figure}

\begin{figure}[H]
  \centering
  \includegraphics[width=13cm,height=7cm]{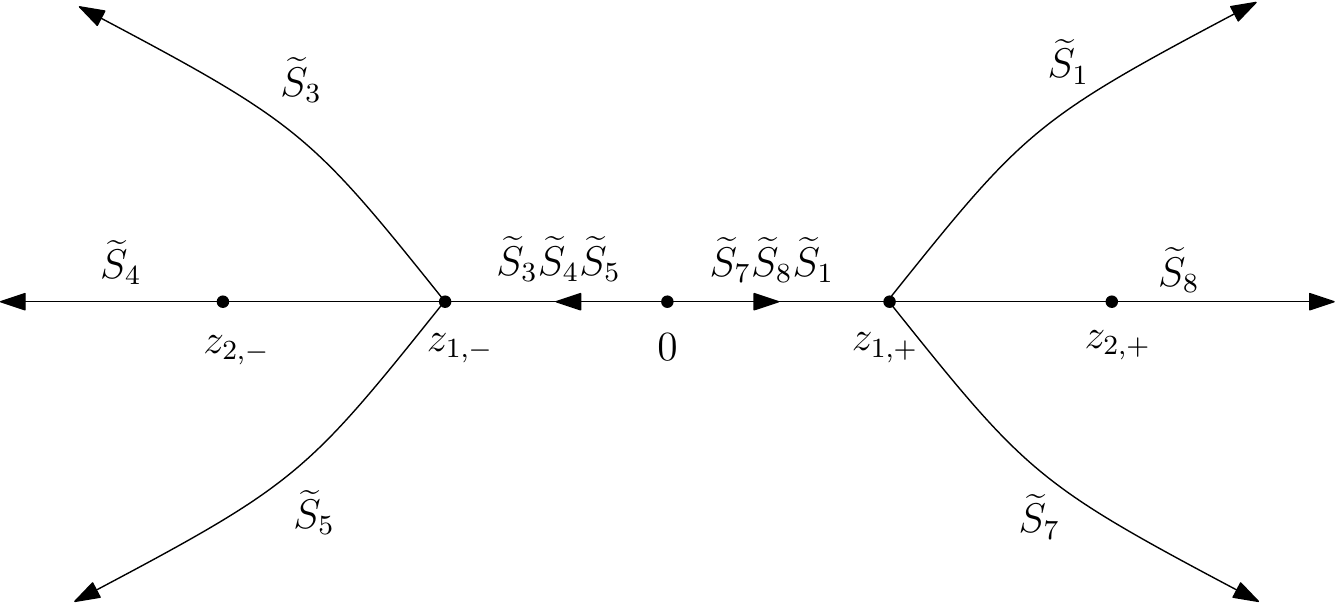}\\
  \caption{The first deformation of the jump contour} \label{Deforma1}
\end{figure}

\subsection{Deformations of the jump contour}\label{sec:RHT}

Next, we  transform the RH problem for $\mathbf{U}(z)$ to a RH problem formulated on the anti-Stokes curves of $g(z)$, as illustrated in Figure \ref{ASC}.

To this aim,  we first notice that the RH problem for $\mathbf{U}(z)$ is equivalent to the one posed on the contour shown in Figure \ref{Deforma1}, where we use the notations $\widetilde{S}_k$ to denote the
analytically extended  jump matrices $J_{\mathbf{U}}(z)$.

Secondly, since the jump matrices on $(z_{2,-},z_{2,+})$ are now oscillating for large $|x|$, we should deform the segment $(z_{2,-},z_{2,+})$ to the anti-Stokes curves of $g(z)$. Thus, we introduce the third transformation $\mathbf{U}\rightarrow \mathbf{T}$. This transformation is based on the following factorizations of the jump matrices on $(z_{2,-},z_{2,+})$.
\begin{align}\nonumber
\widetilde{S}_{4}&=\begin{pmatrix}e^{x^2(g_-(z)-g_+(z))} & 0\\ -s_0e^{2\pi i\alpha} & e^{x^2(g_+(z)-g_-(z))}\end{pmatrix}\\ \nonumber
&=\begin{pmatrix}1 & -s_0^{-1}e^{-2\pi i\alpha}e^{2x^2g_-(z)} \\ 0 & 1\end{pmatrix}
\begin{pmatrix}0 & s_0^{-1}e^{-2\pi i\alpha} \\ -s_0e^{2\pi i\alpha} & 0\end{pmatrix}
\begin{pmatrix}1 & -s^{-1}_0e^{-2\pi i\alpha}e^{2x^2g_+(z)} \\ 0 & 1\end{pmatrix}\\
&=:\widetilde{S}_{U_1}\widetilde{S}_{P_-}\widetilde{S}_{U_2}, \label{decom1}
\end{align}
\begin{align}\nonumber
(\widetilde{S}_3\widetilde{S}_4\widetilde{S}_5)^{-1}&=\begin{pmatrix}s_*
e^{x^2(g_-(z)-g_+(z))} & s_1(e^{-2\pi i\alpha}+s_*)\\ s_0e^{2\pi i\alpha} & \overline{s}_*e^{x^2(g_+(z)-g_-(z))}\end{pmatrix}\\ \nonumber
&=\begin{pmatrix}1 & 0 \\ \frac{\overline{s}_*e^{-2x^2g_-(z)}}{s_1(e^{-2\pi i\alpha}+s_*)} & 1\end{pmatrix}
\begin{pmatrix}0 &  (|s_*|^2-1)s_0^{-1}e^{-2\pi i\alpha}\\ s_0e^{2\pi i\alpha}(1-|s_*|^2)^{-1} & 0\end{pmatrix}
\\ \nonumber
  &~~~~\times \begin{pmatrix}1 & 0 \\ \frac{s_*e^{-2x^2g_+(z)}}{s_1(e^{-2\pi i\alpha}+s_*)} & 1\end{pmatrix}\\ \label{decom3}
&=:\widetilde{S}_{L_1}\widetilde{S}_P\widetilde{S}_{L_2},
\end{align}
\begin{align} \nonumber
\widetilde{S}_7\widetilde{S}_8\widetilde{S}_1&=\begin{pmatrix}\overline{s}_*
e^{-2\pi i\alpha}e^{x^2(g_-(z)-g_+(z))} & s_1(e^{-2\pi i\alpha}+s_*) \\
s_0e^{2\pi i\alpha} & s_*e^{2\pi i\alpha}e^{x^2(g_+(z)-g_-(z))}\end{pmatrix}\\ \nonumber
&=\begin{pmatrix}1 & 0 \\ \frac{s_*e^{2\pi i\alpha}e^{-2x^2g_-(z)}}{s_1(e^{-2\pi i\alpha}+s_*)} & 1\end{pmatrix}
\begin{pmatrix}0 &  (|s_*|^2-1)s_0^{-1}e^{-2\pi i\alpha}\\ s_0e^{2\pi i\alpha}(1-|s_*|^2)^{-1} & 0\end{pmatrix}
\\ \nonumber
  &~~~~\times
\begin{pmatrix}1 & 0 \\ \frac{\overline{s}_*e^{-2\pi i\alpha}e^{-2x^2g_+(z)}}{s_1(e^{-2\pi i\alpha}+s_*)} & 1\end{pmatrix}\\ \label{decom4}
&=:\widetilde{S}_{L_3}\widetilde{S}_P\widetilde{S}_{L_4},
\end{align}
and
\begin{align} \nonumber
\widetilde{S}_{8}&=\begin{pmatrix}e^{-2\pi i\alpha}e^{x^2(g_-(z)-g_+(z))} & 0\\ s_0e^{2\pi i\alpha} & e^{2\pi i\alpha}e^{x^2(g_+(z)-g_-(z))} \end{pmatrix}\\ \nonumber
&=\begin{pmatrix}1 & s_0^{-1}e^{-4\pi i\alpha}e^{2x^2g_-(z)} \\ 0 & 1\end{pmatrix}
\begin{pmatrix}0 & -s_0^{-1}e^{-2\pi i\alpha} \\ s_0e^{2\pi i\alpha} & 0\end{pmatrix}
\begin{pmatrix}1 & s^{-1}_0e^{2x^2g_+(z)} \\ 0 & 1\end{pmatrix}\\
&=:\widetilde{S}_{U_3}\widetilde{S}_{P_+}\widetilde{S}_{U_4}.\label{decom2}
\end{align}
In the above factorizations, we have utilized the property
$$g_+(z)+g_-(z)=0, \quad \mathrm{for} \quad z\in (z_{2,-},z_{2,+}), $$
 and the complex conjugate relation
\begin{equation}\label{sstarbar}
\overline{s}_*=s_0s_1e^{2\alpha\pi i}+1.
\end{equation}

Now,
we define the third transformation $\mathbf{U}\mapsto \mathbf{T}$ as
\begin{equation}\label{T(z)}
\mathbf{T}(z)=\left\{\begin{aligned}
&\mathbf{U}(z), &\mathrm{for}\ &z\ \mathrm{outside\ the\ four\ lens\ regions},\\
&\mathbf{U}(z)\widetilde{S}_{U_2}, &\mathrm{for}\ &z\ \mathrm{in\ the\ upper\ part\ of\ the\ first\ lens\ region},\\
&\mathbf{U}(z)\widetilde{S}_{U_1}^{-1}, &\mathrm{for}\ &z\ \mathrm{in\ the\ lower\ part\ of\ the\ first\ lens\ region},\\
&\mathbf{U}(z)\widetilde{S}_{L_2}^{-1}, &\mathrm{for}\ &z\ \mathrm{in\ the\ upper\ part\ of\ the\ second\ lens\ region},\\
&\mathbf{U}(z)\widetilde{S}_{L_1}, &\mathrm{for}\ &z\ \mathrm{in\ the\ lower\ part\ of\ the\ second\ lens\ region},\\
&\mathbf{U}(z)\widetilde{S}_{L_4}^{-1}, &\mathrm{for}\ &z\ \mathrm{in\ the\ upper\ part\ of\ the\ third\ lens\ region},\\
&\mathbf{U}(z)\widetilde{S}_{L_3}, &\mathrm{for}\ &z\ \mathrm{in\ the\ lower\ part\ of\ the\ third\ lens\ region}\\
&\mathbf{U}(z)\widetilde{S}_{U_4}^{-1}, &\mathrm{for}\ &z\ \mathrm{in\ the\ upper\ part\ of\ the\ fourth\ lens\ region},\\
&\mathbf{U}(z)\widetilde{S}_{U_3}, &\mathrm{for}\ &z\ \mathrm{in\ the\ lower\ part\ of\ the\ fourth\ lens\ region},
\end{aligned}
\right.
\end{equation}
where the lens regions are illustrated in  Figure \ref{Deforma2} and the same notations are used to denote the analytic extensions of the corresponding jump matrices.

\begin{figure}[t]
  \centering
  \includegraphics[width=13cm,height=7cm]{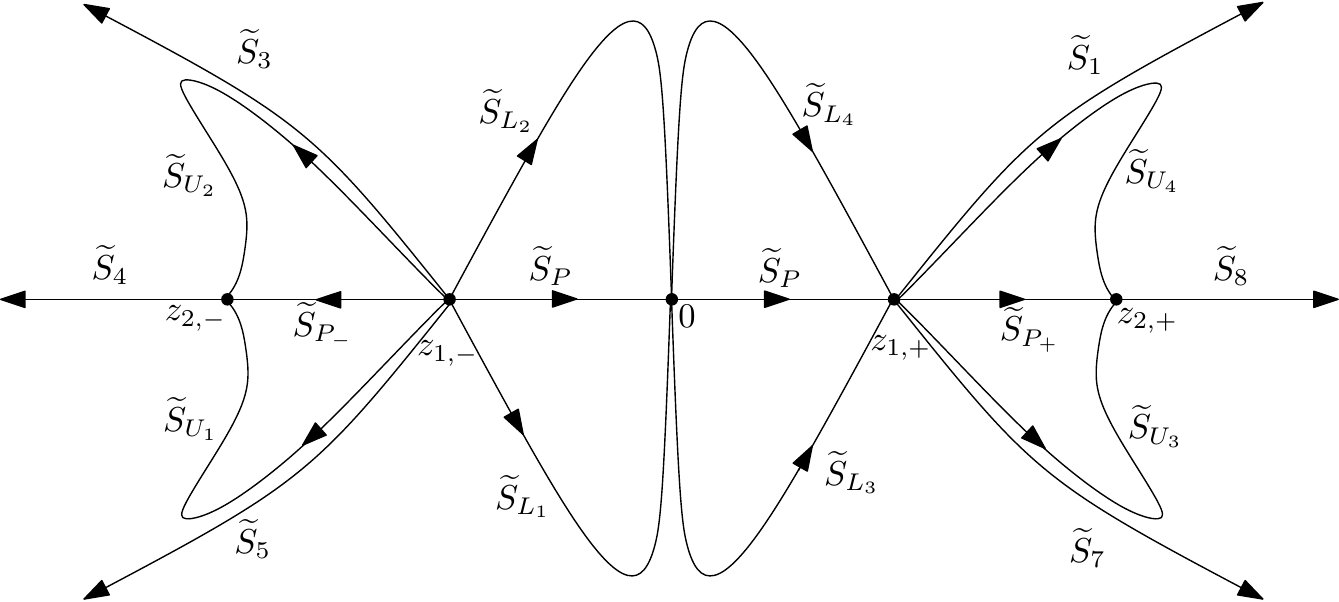}\\
  \caption{The second deformation of the jump contour} \label{Deforma2}
\end{figure}

Immediately, $\mathbf{T}(z)$ solves a RH problem whose jump contour and jump matrices are shown in Figure \ref{Deforma2}. 
We proceed to
blow up the four lens in Figure \ref{Deforma2}. Consequently, we obtain the following equivalent RH problem for $\mathbf{T}(z)$.

\begin{figure}[t]
  \centering
  \includegraphics[width=13cm,height=8cm]{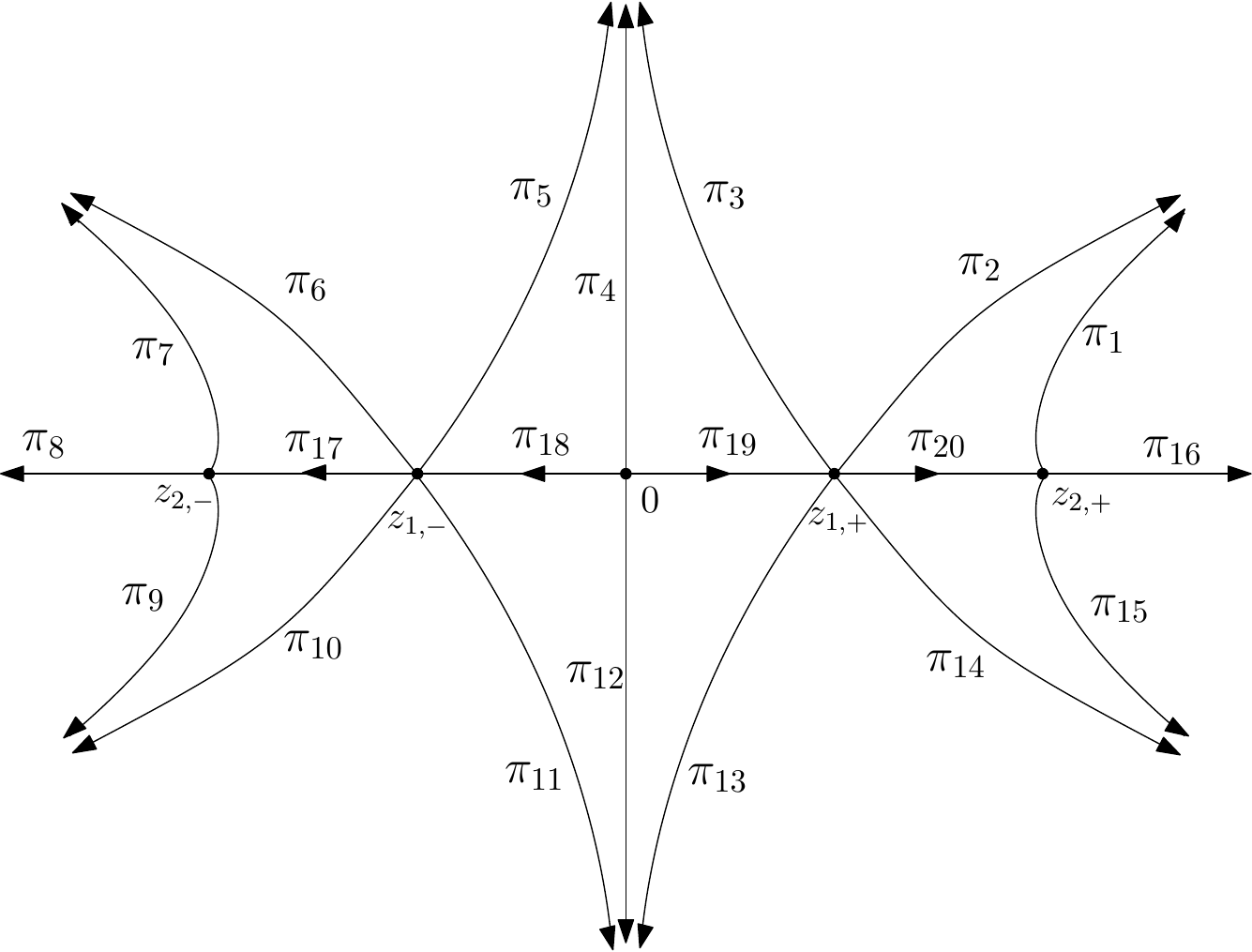}\\
  \caption{The jump contour $\Sigma_\mathbf{T}$ } \label{Deforma3}
\end{figure}

\subsection*{RH problem for $\mathbf{T}(z)$}
\begin{description}
  \item{(1)} $\mathbf{T}(z)$ is analytic for $z\in \mathbb{C}\setminus\Sigma_{\mathbf{T}}$, where $\Sigma_\mathbf{T}$ is shown in Figure \ref{Deforma3}.
  \item{(2)} We have $\mathbf{T}_+(z)=\mathbf{T}_-(z)J_k(z)$ for $z\in\pi_k$, $k=1,\cdots,20$, where
  \begin{align*}
  &J_1(z)=\begin{pmatrix}1 & -s_0^{-1}e^{2x^2g(z)}\\ 0 & 1 \end{pmatrix},
   &&
  J_2(z)=\begin{pmatrix}1 & s_*s_0^{-1}e^{2x^2g(z)}\\ 0 & 1 \end{pmatrix},\\
  &J_3(z)=\begin{pmatrix}1 & 0\\ -\frac{\overline{s}_*e^{-2\pi i\alpha}e^{-2x^2g(z)}}
  {s_1(e^{-2\pi i\alpha}+s_*)} & 1 \end{pmatrix}, &&
  J_4(z)=\begin{pmatrix}1 & 0\\ \frac{(e^{-2\pi i\alpha}-1)e^{-2x^2g(z)}}
  {s_1(e^{-2\pi i\alpha}+s_*)} & 1 \end{pmatrix},\\
  &J_5(z)=\begin{pmatrix}1 & 0\\ \frac{s_*e^{-2x^2g(z)}}
  {s_1(e^{-2\pi i\alpha}+s_*)} & 1 \end{pmatrix}, &&
  J_6(z)=\begin{pmatrix}1 & -\overline{s}_*s_0^{-1}e^{-2\pi i\alpha}
  e^{2x^2g(z)} \\ 0 & 1 \end{pmatrix},\\
  &J_7(z)=J_9(z)=\begin{pmatrix}1 &
  \frac{e^{2x^2g(z)}}{s_0e^{2\pi i\alpha}} \\ 0 & 1 \end{pmatrix},&&
  J_8(z)=\begin{pmatrix}1 & 0\\ -s_0e^{2\pi i\alpha}e^{-2x^2g(z)} & 1 \end{pmatrix},\\
  &J_{10}(z)=\begin{pmatrix}1 & -\frac{s_*
  e^{2x^2g(z)}}{s_0e^{2\pi i\alpha}}\\ 0 & 1 \end{pmatrix}
  ,&&
  J_{11}(z)=\begin{pmatrix}1 & 0\\ \frac{\overline{s}_*e^{-2x^2g(z)}}
  {s_1(e^{-2\pi i\alpha}+s_*)} & 1 \end{pmatrix},\\
  &J_{12}(z)=\begin{pmatrix}1 & 0\\ \frac{(e^{2\pi i\alpha}-1)e^{-2x^2g(z)}}
  {s_1(e^{-2\pi i\alpha}+s_*)} & 1 \end{pmatrix},&&
   J_{13}(z)=\begin{pmatrix}1 & 0\\ -\frac{s_*e^{2\pi i\alpha}e^{-2x^2g(z)}}
  {s_1(e^{-2\pi i\alpha}+s_*)} & 1 \end{pmatrix},\\
  &J_{14}(z)=\begin{pmatrix}1 & \overline{s}_*s_0^{-1}e^{-4\pi i\alpha}e^{2x^2g(z)} \\ 0 & 1 \end{pmatrix},
  &&
  J_{15}(z)=\begin{pmatrix}1 & -s_0^{-1}e^{-4\pi i\alpha}e^{2x^2g(z)} \\ 0 & 1 \end{pmatrix},\\
  &J_{16}(z)=\begin{pmatrix}e^{-2\pi i\alpha} & 0\\ s_0e^{2\pi i\alpha}e^{-2x^2g(z)} & e^{2\pi i\alpha} \end{pmatrix},&&
  J_{17}(z)=J_{20}(z)^{-1}=\begin{pmatrix}0 & s_0^{-1}e^{-2\pi i\alpha}\\ -s_0e^{2\pi i\alpha} & 0 \end{pmatrix},
  \end{align*}
  and
  \begin{equation*}
  J_{19}(z)=J_{18}(z)^{-1}=\begin{pmatrix}0 & -s_0^{-1}e^{-2\pi i\alpha}(1-|s_*|^2) \\ s_0e^{2\pi i\alpha}(1-|s_*|^2)^{-1} & 0\end{pmatrix}.
  \end{equation*}
  \item{(3)} As $z\rightarrow \infty$, $\mathbf{T}(z)=\left(\mathbf{I}+O(z^{-1})\right)z^{\alpha\sigma_3}$, where $\arg z\in(0,2\pi)$.
  \item{(4)} As $z\to 0$, $\mathbf{T}(z)$ has the following asymptotic behavior
 \begin{equation}\label{T0}
\mathbf{T}(z)=\mathbf{T}_0(z)z^{\alpha\sigma_{3}}
E_0S_1e^{-x^2g(z)\sigma_{3}}J_3(z),\quad \arg z\in (\pi/4, {\pi}/{2}),
\end{equation}
where $\mathbf{T}_0(z)$ is analytic in the neighborhood of $z=0$. The behaviors of $\mathbf{T}(z)$ in other regions  are determined by \eqref{T0} and  the jump relations satisfied by $\mathbf{T}(z)$.
\end{description}

Taking into account the lower and upper triangular structures of the jump matrices, the properties of  $\Re g(z)$ on the anti-Stokes curves (see Figure \ref{ASC}), and the fact that $\Re g(z)>0$ on the imaginary axis, we see that the jump matrices on $\pi_k$, $k=1,\cdots,15$ of the RH problem for $\mathbf{T}(z)$ tend to the identity matrix exponentially fast when $x\rightarrow -\infty$. Therefore, it is expected that the
dominant contribution to the asymptotics of $\mathbf{T}(z)$ as  $x\rightarrow -\infty$ will come from the remaining jumps along the line $(z_{2,-},+\infty)$ and the neighborhoods of the origin $z=0$ and the four saddle points $z_{1,\pm}=\pm\sqrt{2/3}$, $z_{2,\pm}=\pm\sqrt{8/3}$. We are in a position  to construct the global parametrix  solving the constant jumps along $(z_{2,-},+\infty)$ and local parametrices near the origin and the four saddle points.

\subsection{Global parametrix }
We need to solve the following RH problem.
\subsection*{RH problem for $\mathbf{P}^{(\infty)}(z)$}
\begin{description}
  \item{(1)} $\mathbf{P}^{(\infty)}(z)$ is analytic for $z\in \mathbb{C}\setminus [z_{2,-},+\infty)$.

  \item{(2)} $\mathbf{P}^{(\infty)}(z)$ satisfies the following jump conditions
  \begin{equation}\label{Pinftyjump}
  \mathbf{P}^{(\infty)}_{+}(z)=\mathbf{P}^{(\infty)}_{-}(z)J^{(\infty)}(z),
  \end{equation}
  where
  $$
  J^{(\infty)}(z)=\left\{\begin{aligned}
  &\begin{pmatrix}0 & -e^{-2\pi i\alpha}s^{-1}_0 \\ e^{2\pi i\alpha} s_0& 0\end{pmatrix},& z&\in(z_{2,-},z_{1,-}), \\
  &\begin{pmatrix}0 & -e^{-2\pi i\alpha} s_0^{-1}(1-|s_*|^2)\\ e^{2\pi i\alpha} s_0(1-|s_*|^2)^{-1} & 0\end{pmatrix}, & z&\in(z_{1,-},z_{1,+}), \\
  &\begin{pmatrix}0 & -e^{-2\pi i\alpha} s^{-1}_0\\ e^{2\pi i\alpha} s_0& 0 \end{pmatrix}, & z&\in(z_{1,+},z_{2,+}),\\
  &\begin{pmatrix}e^{-2\pi i\alpha} & 0 \\ 0 & e^{2\pi i\alpha} \end{pmatrix}, & z&\in(z_{2,+},+\infty).
  \end{aligned}\right.
  $$
  \item{(3)}  As $z\rightarrow \infty$, we have
  \begin{equation}\label{Pinftyinfty}
  \mathbf{P}^{(\infty)}(z)=\left(\mathbf{I}+O(z^{-1})\right)z^{\alpha\sigma_3},
  \end{equation}
  where the branch cut of $z^{\alpha}$ is taken along $\mathbb{R}_+$ so that $\arg z\in(0,2\pi)$.
\end{description}

The solution of the above RH problem can be constructed as follows
\begin{equation}\label{Pinfty}
\mathbf{P}^{(\infty)}(z)=s_0^{-\frac{\sigma_3}{2}}f^{-\sigma_3}_{\infty}\mathbf{X}(z)
f(z)^{\sigma_3}
s_0^{\frac{\sigma_3}{2}},
\end{equation}
where $f(z)$ is the Szeg\H{o} function
\begin{equation}\label{f(z)}
f(z)= \left(\frac{3}{8}\right)^{\frac{\alpha}{2}}
\left(z+\sqrt{z^2-\frac{8}{3}}\right)^{\alpha}
\left(\frac{\left((2-\sqrt{3})z-i\sqrt{z^2-\frac{8}{3}}\right)^2-\frac{8}{3}}
{\left((2-\sqrt{3})z+i\sqrt{z^2-\frac{8}{3}}\right)^2-\frac{8}{3}}\right)^{{\beta}},
\end{equation}
$\mathbf{X}(z)$ is given by
\begin{equation}\label{X(z)}
\mathbf{X}(z)=\begin{pmatrix}1 & 1 \\ -i & i\end{pmatrix} \omega(z)^{\sigma_{3}}
\begin{pmatrix} 1 & 1 \\ -i & i\end{pmatrix}^{-1},\quad \omega(z)=\left(\frac{z-\sqrt{\frac{8}{3}}}
{z+\sqrt{\frac{8}{3}}}\right)^{\frac{1}{4}},
\end{equation}
and $f_{\infty}$ is defined by
\begin{equation}\label{finfty}
f_{\infty}=2^{-\frac{\alpha}{2}}3^{\frac{\alpha}{2}}e^{\frac{\pi i{\beta}}{3}}.
\end{equation}
The branches of the multi-valued functions in \eqref{Pinfty}-\eqref{X(z)} are chosen by specifying
$$
\arg z\in(-\pi,\pi),\quad
\arg\left(z\pm\textstyle\sqrt{\frac{8}{3}}\right)\in(-\pi,\pi),\qquad
\arg \left(\textstyle z+\sqrt{z^2-\frac{8}{3}}\right)\in(0,2\pi), 
$$
and
$$
\arg\left(\frac{\left((2-\sqrt{3})z-i\sqrt{z^2-\frac{8}{3}}\right)^2-\frac{8}{3}}
{\left((2-\sqrt{3})z+i\sqrt{z^2-\frac{8}{3}}\right)^2-\frac{8}{3}}\right)
\in(-\pi,\pi).
$$
The constant ${\beta}$ in \eqref{f(z)} and \eqref{finfty} is given by
\begin{equation}\label{nu}
{\beta}=-\frac{1}{2\pi i}\ln(1-|s_*|^2),
\end{equation}
where the logarithm takes the principal value. In the case $0<|s_*|<1$, it is easily seen that
\begin{equation}\label{nuimag}
{\beta}\neq0\quad \mathrm{and}\quad{\beta}\in i\mathbb{R}.
\end{equation}
From \eqref{f(z)}  and \eqref{X(z)}, as $z\rightarrow\infty$, straightforward   computation yields
\begin{align}\label{X(z)atinfty}
\mathbf{X}(z)&=\mathbf{I}+\sqrt{\frac{2}{3}}\sigma_2z^{-1}
+\frac{\mathbf{I}}{3}z^{-2}+O(z^{-3}),\\
f(z)&=2^{-\frac{\alpha}{2}}3^{\frac{\alpha}{2}}
e^{\frac{\pi i{\beta}}{3}}\left[1-\frac{2}{3}\left(\alpha+\sqrt{3}i{\beta}\right)z^{-2}
+O(z^{-4})\right]z^{\alpha}.\label{fatinfty}
\end{align}

\subsection{Local parametrices near the saddle points $z_{2,\pm}$}\label{sec:AiParametrix}
We will construct two parametrices $\mathbf{P}^{(2,\pm)}(z)$ satisfying the same jumps as $\mathbf{T}(z)$ on $\Sigma_\mathbf{T}$ (see Figure \ref{Deforma3}), respectively  in the neighborhoods $U(z_{2,\pm},\delta)=\{z\in\mathbb{C}\mid|z-z_{2,\pm}|<\delta\}$ of the saddle points $z_{2,\pm}$, and matching with $\mathbf{P}^{(\infty)}(z)$ on the boundaries $\partial U(z_{2,\pm},\delta)=\{z\in \mathbb{C}\mid|z-z_{2,\pm}|=\delta\}$.
\subsection*{RH problem for $\mathbf{P}^{(2,+)}(z)$}
\begin{description}
\item{(1)} $\mathbf{P}^{(2,+)}(z)$ is analytic for $z\in U(z_{2,+},\delta)\setminus\Sigma_\mathbf{T}$.
\item{(2)} $\mathbf{P}^{(2,+)}(z)$ satisfies the same jumps as $\mathbf{T}(z)$ on $U(z_{2,+},\delta)\cap\Sigma_\mathbf{T}$.
\item{(3)} On the circular boundary $\partial U(z_{2,+},\delta)$, it holds that
\begin{equation}\label{P2+matching}
\mathbf{P}^{(2,+)}(z)=\left(\mathbf{I}+O(|x|^{-2})\right)\mathbf{P}^{(\infty)}(z),\quad \mathrm{as}\quad x\rightarrow -\infty.
\end{equation}
\end{description}
To   solve the above RH problem, we define a conformal mapping
\begin{equation}\label{varphi1}
\varphi_1(z)=\left(\frac{3}{2}g(z)\right)^{\frac{2}{3}},
\end{equation}
where the branch is specified by the asymptotic condition
\begin{equation}\label{varphi1beha}
\varphi_1(z)=2^{\frac{5}{6}}3^{-\frac{1}{6}}
\textstyle\left(z-\sqrt{\frac{8}{3}}\right)
\left(1+o(1)\right),\quad \mathrm{as} \quad z\rightarrow \sqrt{\frac{8}{3}}.
\end{equation}
Then, the solution to the above RH problem can be explicitly constructed in terms of the Airy function:
\begin{equation}\label{P2+}
\mathbf{P}^{(2,+)}(z)=\mathbf{E}^{(2,+)}(z)\Phi^{\mathrm{(Ai)}}
\left(|x|^{\frac{4}{3}}\varphi_1(z)\right)
\left(s_0e^{2\pi i\alpha}\right)^{-\frac{\sigma_3}{2}}\sigma_{1}e^{\mp\pi i\alpha\sigma_3}e^{-x^2g(z)\sigma_3},\quad \pm\Im z>0,
\end{equation}
where $\Phi^{\mathrm{(Ai)}}$ is the Airy model parametrix (see Appendix \ref{AP}), and $\mathbf{E}^{(2,+)}(z)$ is given by
\begin{equation}\label{E2+}
\mathbf{E}^{(2,+)}(z)=\mathbf{P}^{(\infty)}(z)e^{\pm\pi i\alpha\sigma_3}\sigma_{1}\left(s_0e^{2\pi i\alpha}\right)^{\frac{\sigma_3}{2}}
\frac{1}{\sqrt{2}} \begin{pmatrix}1 & -i\\ -i &1\end{pmatrix}|x|^{\frac{\sigma_3}{3}}\varphi_1(z)^{\frac{\sigma_3}{4}},\ \pm\Im z>0.
\end{equation}
Here, the branch of $\varphi_1(z)^{\frac{1}{4}}$ is chosen such that $\arg\varphi_1(z)\in(-\pi,\pi)$. This means that
\begin{equation}\label{etajump}
\left(\varphi_1(z)^{\frac{\sigma_3}{4}}\right)_+=
\left(\varphi_1(z)^{\frac{\sigma_3}{4}}\right)_-e^{\frac{\pi i}{2}\sigma_3},\quad z\in (z_{2,-},z_{2,+}).
\end{equation}
From \eqref{Pinftyjump} and \eqref{etajump}, it is readily verified that $\mathbf{E}^{(2,+)}(z)$ is analytic in the neighborhood $U(z_{2,+},\delta)$.
Finally, combining \eqref{P2+}, \eqref{E2+} with the asymptotic behavior \eqref{AiryAsyatinfty} gives us the matching condition \eqref{P2+matching}.

\subsection*{RH problem for $\mathbf{P}^{(2,-)}(z)$}
\begin{description}
\item{(1)} $\mathbf{P}^{(2,-)}(z)$ is analytic for $z\in U(z_{2,-},\delta)\setminus\Sigma_\mathbf{T}$.
\item{(2)} On $U(z_{2,-},\delta)\cap\Sigma_\mathbf{T}$, $\mathbf{P}^{(2,-)}(z)$ shares the same jumps as $\mathbf{T}(z)$.
\item{(3)} On the boundary $\partial U(z_{2,-},\delta)$, we have
\begin{equation}\label{P2-matching}
\mathbf{P}^{(2,-)}(z)=\left(\mathbf{I}+O(|x|^{-2})\right)\mathbf{P}^{(\infty)}(z),\quad \mathrm{as}\quad x\rightarrow -\infty.
\end{equation}
\end{description}
By symmetry, the solution to the above RH problem is given by
\begin{equation}\label{P2-}
\mathbf{P}^{(2,-)}(z)=\mathbf{E}^{(2,-)}(z)\Phi^{(\mathrm{Ai})}
\left(|x|^{\frac{4}{3}}\varphi_1(-z)\right)
\left(s_0e^{2\pi i\alpha}\right)^{-\frac{\sigma_3}{2}}\sigma_2e^{-x^2g(z)\sigma_3},
\end{equation}
where $\varphi_1(z)$ is the conformal mapping defined by \eqref{varphi1}  and $\mathbf{E}^{(2,-)}(z)$ is given by
\begin{equation}\label{E2-}
\mathbf{E}^{(2,-)}(z)=\mathbf{P}^{(\infty)}(z)\sigma_2\left(s_0e^{2\pi i\alpha}\right)^{\frac{\sigma_3}{2}}
\frac{1}{\sqrt{2}} \begin{pmatrix}1 & -i\\ -i &1\end{pmatrix}|x|^{\frac{\sigma_3}{3}}\varphi_1(-z)^{\frac{\sigma_3}{4}}.
\end{equation}

\subsection{Local parametrices near the saddle points $z_{1,\pm}$}
In this subsection, we seek two parametrices $\mathbf{P}^{(1,\pm)}(z)$ satisfying the same jumps  as $\mathbf{T}(z)$ on  $\Sigma_\mathbf{T}$ (see Figure \ref{Deforma3}) in the neighborhoods $U(z_{1,\pm},\delta)=\{z\in\mathbb{C}\mid|z-z_{1,\pm}|<\delta\}$ of the saddle points $z_{1,\pm}$, and matching with $\mathbf{P}^{(\infty)}(z)$ on the boundaries $\partial U(z_{1,\pm},\delta)=\{z\in \mathbb{C}\mid|z-z_{1,\pm}|=\delta\}$.

\subsection*{RH problem for $\mathbf{P}^{(1,+)}(z)$}
\begin{description}
\item{(1)} $\mathbf{P}^{(1,+)}(z)$ is analytic for $z\in U(z_{1,+},\delta)\setminus\Sigma_\mathbf{T}$.
\item{(2)} $\mathbf{P}^{(1,+)}(z)$ shares the same jumps as $\mathbf{T}(z)$ on $U(z_{1,+},\delta)\cap\Sigma_\mathbf{T}$.
\item{(3)} On the circle $\partial U(z_{1,+},\delta)$, $\mathbf{P}^{(1,+)}(z)$ satisfies
\begin{equation}\label{P1+matching}
\mathbf{P}^{(1,+)}(z)=\left(\mathbf{I}+O(|x|^{-1})\right)\mathbf{P}^{(\infty)}(z),\quad \mathrm{as}\quad x\rightarrow -\infty.
\end{equation}
\end{description}
Let us define the conformal mapping
\begin{equation}\label{varphi2}
\varphi_2(z)=\left\{\begin{aligned}
&2\sqrt{-\textstyle\frac{\sqrt{3}i}{6}-g(z)},\quad& \Im z>0,\\
&2\sqrt{-\textstyle\frac{\sqrt{3}i}{6}+g(z)},\quad& \Im z<0,
\end{aligned}\right.
\end{equation}
where the branches of the square roots are specified by choosing
\begin{equation}\label{varphi2beha}
\varphi_2(z)=e^{-\frac{\pi i}{4}}2\cdot3^{-\frac{1}{4}}\textstyle \left(z-\sqrt{\frac{2}{3}}\right)
\left(1+o(1)\right),\quad \mathrm{as}\quad z\rightarrow\sqrt{\frac{2}{3}}.
\end{equation}
Let $\Phi^{\mathrm{(PC)}}$ be the parabolic cylinder model parametrix with parameter ${\beta}$ given by \eqref{nu}; see Appendix \ref{PCP}. The solution to the above RH problem can be constructed as follows:
\begin{align}\label{P1+}
\mathbf{P}^{(1,+)}(z)=\mathbf{E}^{(1,+)}(z)\Phi^{\mathrm{(PC)}}\left(|x|\varphi_2(z)\right)
\left(\frac{s_*}{h_0}\right)^{\frac{\sigma_3}{2}}
\left\{\begin{aligned}
&s_0^{-\frac{\sigma_3}{2}}(-\sigma_{1})e^{-x^2g(z)\sigma_3},& \Im z>0,\\
&s_0^{\frac{\sigma_3}{2}}\sigma_3e^{2\pi i(\alpha+{\beta})\sigma_3}
e^{-x^2g(z)\sigma_3},& \Im z<0,
\end{aligned}\right.
\end{align}
where $h_{0}$ is the Stokes multiplier in \eqref{h0} and $\mathbf{E}^{(1,+)}(z)$ is given by
\begin{equation}\label{E1+}
\mathbf{E}^{(1,+)}(z)=\mathbf{W}^{(+)}(z)\left(\frac{s_*}{h_0}\right)^{-\frac{\sigma_3}{2}}
|x|^{{\beta}\sigma_3}e^{i\frac{\sqrt{3}}{6}x^2
\sigma_{3}}2^{-\frac{\sigma_3}{2}}\begin{pmatrix}|x|\varphi_2(z) & 1 \\ 1 & 0\end{pmatrix}
\end{equation}
with
\begin{equation}\label{W+}
\mathbf{W}^{(+)}(z)=\mathbf{P}^{(\infty)}(z)\left\{\begin{aligned}
&(-\sigma_1)s_0^{\frac{1}{2}\sigma_3}\varphi_2(z)^{{\beta}\sigma_3},
&\Im z&>0,\\
&e^{-2\pi i(\alpha+{\beta})\sigma_3}\sigma_3s_0^{-\frac{1}{2}\sigma_3}
\varphi_2(z)^{{\beta}\sigma_3}, & \Im z&<0.
\end{aligned}\right.
\end{equation}
The branch of $\varphi_2(z)^{{\beta}}$ is chosen by specifying $\arg\varphi_2(z)\in(0,2\pi)$, this implies the jump relations
\begin{equation}\label{varphijump}
\left\{\begin{aligned}
&\left(\varphi_2(z)^{{\beta}}\right)_+=
\left(\varphi_2(z)^{{\beta}}\right)_-e^{2\pi i{\beta}}, \quad &z&\in(z_{1,+},z_{2,+}),\\
&\left(\varphi_2(z)^{{\beta}}\right)_+
=\left(\varphi_2(z)^{{\beta}}\right)_-, \quad &z&\in(z_{2,-},z_{1,+}).
\end{aligned}\right.
\end{equation}
From the jump conditions \eqref{Pinftyjump} and \eqref{varphijump}, it is readily seen that $\mathbf{W}^{(+)}(z)$ is analytic  in the neighborhood $U(z_{1,+},\delta)$. In particular, it follows from \eqref{Pinfty}, \eqref{f(z)}, \eqref{X(z)}, \eqref{varphi2beha} and \eqref{W+} that
\begin{equation}\label{W+atz+}
\mathbf{W}^{(+)}(\textstyle\sqrt{\frac{2}{3}})=-s_0^{-\frac{\sigma_3}{2}}
f_{\infty}^{-\sigma_3}
\begin{pmatrix}1 & 1 \\ -i & i\end{pmatrix} 3^{-\frac{\sigma_{3}}{4}}
e^{\frac{\pi i}{4}\sigma_3}\begin{pmatrix} 1 & 1 \\
-i & i\end{pmatrix}^{-1}2^{-\frac{{\beta}}{2}\sigma_3}
3^{-\frac{{\beta}}{4}\sigma_3}e^{\frac{\pi i{\beta}}{4}\sigma_3}
e^{\frac{\pi i\alpha}{3}\sigma_3}\sigma_1 .
\end{equation}
Finally, a combination of \eqref{Pinfty}, \eqref{nuimag}, \eqref{PCAsyatinfty} and \eqref{P1+} gives
\begin{align}\label{matchingcondition+}
&\mathbf{P}^{(1,+)}(z)\mathbf{P}^{(\infty)}(z)^{-1}\nonumber\\
&=\mathbf{W}^{(+)}(z)\begin{pmatrix}
1+\frac{{\beta}({\beta}+1)}{2|x|^2\varphi_2^2}+O\left(|x|^{-4}\right) & \frac{e^{i\frac{\sqrt{3}}{3}x^2}|x|^{2{\beta}}}{h_0^{-1}s_*}
\left(\frac{{\beta}}{|x|\varphi_2}
+O\left(|x|^{-3}\right)\right) \\ \frac{h_0^{-1}s_*}{e^{i\frac{\sqrt{3}}{3}x^2}|x|^{2{\beta}}}
\left(\frac{1}{|x|\varphi_2}+O\left(|x|^{-3}\right)\right) & 1-\frac{{\beta}({\beta}-1)}{2|x|^2\varphi_2^2}+O\left(|x|^{-4}\right)
\end{pmatrix}\mathbf{W}^{(+)}(z)^{-1}\nonumber\\
&=\mathbf{I}+\mathbf{G}^{(+)}(z)|x|^{-1}+O\left(|x|^{-2}\right),
\end{align}
where
\begin{equation}\label{G+}
\mathbf{G}^{(+)}(z)=\mathbf{W}^{(+)}(z)\begin{pmatrix}0 & \frac{\beta h_0e^{i\frac{\sqrt{3}}{3}x^2}|x|^{2\beta}}
{s_*\varphi_2(z)} \\ \frac{s_*|x|^{-2\beta}}
{h_0e^{ i\frac{\sqrt{3}}{3}x^2}\varphi_2(z)} & 0 \end{pmatrix}\mathbf{W}^{(+)}(z)^{-1}.
\end{equation}

\subsection*{RH problem for $\mathbf{P}^{(1,-)}(z)$}
\begin{description}
\item{(1)} $\mathbf{P}^{(1,-)}(z)$ is analytic for all $z\in U(z_{1,-},\delta)\setminus\Sigma_\mathbf{T}$.
\item{(2)} $\mathbf{P}^{(1,-)}(z)$ satisfies the same jump conditions as $\mathbf{T}(z)$ on $U(z_{1,-},\delta)\cap\Sigma_\mathbf{T}$.
\item{(3)} $\mathbf{P}^{(1,-)}(z)$ fulfills the matching condition on the boundary $\partial U(z_{1,-},\delta)$:
\begin{equation}\label{P1-matching}
\mathbf{P}^{(1,-)}(z)=\left(\mathbf{I}+O(|x|^{-1})\right)\mathbf{P}^{(\infty)}(z),\quad \mathrm{as}\quad x\rightarrow -\infty.
\end{equation}
\end{description}
Similarly, we  introduce the conformal mapping
\begin{equation}\label{varphi3}
\varphi_3(z)=\left\{\begin{aligned}
&2\sqrt{-\textstyle\frac{\sqrt{3}i}{6}+g(z)},\quad& \Im z>0,\\
&2\sqrt{-\textstyle\frac{\sqrt{3}i}{6}-g(z)},\quad& \Im z<0,
\end{aligned}\right.
\end{equation}
where the branches of square roots are taken so that
\begin{equation}\label{varphi3beha}
\varphi_3(z)=e^{\frac{3\pi i}{4}}2\cdot3^{-\frac{1}{4}}\textstyle \left(z+\sqrt{\frac{2}{3}}\right)
\left(1+o(1)\right),\quad \mathrm{as}\quad z\rightarrow-\sqrt{\frac{2}{3}}.
\end{equation}
The solution to the above RH problem can be also constructed in terms of the parabolic cylinder functions as follows:
\begin{align}\label{P1-}
&\mathbf{P}^{(1,-)}(z)=\mathbf{E}^{(1,-)}(z)\Phi^{\mathrm{(PC)}}\left(|x|\varphi_3(z)\right)
\left(\frac{s_*}{h_0}\right)^{\frac{\sigma_3}{2}}\left\{\begin{aligned}
&e^{2\pi i{\beta}\sigma_3}\left(s_0e^{2\pi i\alpha}\right)^{\frac{\sigma_3}{2}}e^{-x^2g(z)\sigma_3},& \Im z>0,\\
&\left(s_0e^{2\pi i\alpha}\right)^{-\frac{\sigma_3}{2}}
\sigma_{3}\sigma_{1}e^{-x^2g(z)\sigma_3},
& \Im z<0,\end{aligned}\right.
\end{align}
where $\Phi^{\mathrm{(PC)}}$ is the parabolic cylinder model parametrix, $h_{0}$ is given in \eqref{h0}, ${\beta}$ is defined by \eqref{nu} and $\mathbf{E}^{(1,-)}(z)$ is given by
\begin{equation}\label{E1-}
\mathbf{E}^{(1,-)}(z)=\mathbf{W}^{(-)}(z)\left(\frac{s_*}{h_0}\right)^{-\frac{\sigma_3}{2}}
|x|^{{\beta}\sigma_3}e^{\frac{i\sqrt{3}x^2}{6}
\sigma_{3}}2^{-\frac{\sigma_3}{2}}\begin{pmatrix}|x|\varphi_3(z) & 1 \\ 1 & 0\end{pmatrix}
\end{equation}
with
\begin{equation}\label{W-}
\mathbf{W}^{(-)}(z)=\mathbf{P}^{(\infty)}(z)\left\{\begin{aligned}
&\left(s_0e^{2\pi i\alpha}\right)^{-\frac{\sigma_3}{2}}
e^{-2\pi i{\beta}\sigma_3}\varphi_3(z)^{{\beta}\sigma_3},\ &\Im z&>0,\\
&\sigma_1\sigma_3\left(s_0e^{2\pi i\alpha}\right)^{\frac{\sigma_3}{2}}\varphi_3(z)^{{\beta}\sigma_3}, \ &\Im z&<0.
\end{aligned}\right.
\end{equation}
We point out that the branch of the function $\varphi_3(z)^{{\beta}}$ is chosen such that $\arg\varphi_3(z)\in(-\pi,\pi)$. This implies that
\begin{equation}\label{zetajump}
\left\{\begin{aligned}
&\left(\varphi_3(z)^{{\beta}}\right)_+=\left(\varphi_3(z)^{{\beta}}\right)_-e^{2\pi i{\beta}}, \quad &z&\in(z_{2,-},z_{1,-}),\\
&\left(\varphi_3(z)^{{\beta}}\right)_+=\left(\varphi_3(z)^{{\beta}}\right)_-, \quad &z&\in(z_{1,-},z_{2,+}).
\end{aligned}\right.
\end{equation}
Using the jump relations \eqref{Pinftyjump} and \eqref{zetajump}, it is straightforward to check that $\mathbf{W}^{(-)}(z)$ is analytic in the neighborhood $U(z_{1,-},\delta)$. Particularly, by substituting \eqref{Pinfty}, \eqref{f(z)}, \eqref{X(z)} and \eqref{varphi3beha} into \eqref{W-}, we have
\begin{equation}\label{W-atz-}
\mathbf{W}^{(-)}(-\textstyle\sqrt{\frac{2}{3}})=s_0^{-\frac{\sigma_3}{2}}
f_{\infty}^{-\sigma_3}\begin{pmatrix}1 & 1 \\ -i & i\end{pmatrix} 3^{\frac{1}{4}\sigma_{3}}e^{\frac{\pi i}{4}\sigma_3}\begin{pmatrix} 1 & 1 \\
-i & i\end{pmatrix}^{-1}2^{\frac{{\beta}}{2}\sigma_3}
3^{\frac{{\beta}}{4}\sigma_3}
e^{-\frac{\pi i\alpha}{3}\sigma_3}e^{-\frac{\pi i{\beta}}{4}\sigma_3}.
\end{equation}
Furthermore, using \eqref{Pinfty}, \eqref{nuimag}, \eqref{P1-} and the asymptotic behavior \eqref{PCAsyatinfty}, we obtain the matching condition
\begin{align}\label{matchingcondition-}
&\mathbf{P}^{(1,-)}(z)\mathbf{P}^{(\infty)}(z)^{-1}\nonumber \\
&=\mathbf{W}^{(-)}(z)\begin{pmatrix}
1+\frac{{\beta}({\beta}+1)}{2|x|^2\varphi_3^2}+O\left(|x|^{-4}\right) & \frac{e^{i\frac{\sqrt{3}}{3}x^2}|x|^{2{\beta}}}{h_0^{-1}s_*}
\left(\frac{{\beta}}{|x|\varphi_3}
+O\left(|x|^{-3}\right)\right)\\ \frac{h_0^{-1}s_*}{e^{i\frac{\sqrt{3}}{3}x^2}|x|^{2{\beta}}}
\left(\frac{1}{|x|\varphi_3}+O\left(|x|^{-3}\right)\right) & 1-\frac{{\beta}({\beta}-1)}{2|x|^2\varphi_3^2}+O\left(|x|^{-4}\right)
\end{pmatrix}\mathbf{W}^{(-)}(z)^{-1}\nonumber\\
&=\mathbf{I}+\mathbf{G}^{(-)}(z)|x|^{-1}+O\left(|x|^{-2}\right)
\end{align}
where
\begin{equation}\label{G-}
\mathbf{G}^{(-)}(z)=\mathbf{W}^{(-)}(z)\begin{pmatrix}0 & \frac{{\beta} h_0e^{i\frac{\sqrt{3}}{3}x^2}|x|^{2{\beta}}}
{s_*\varphi_3(z)}\\ \frac{s_*|x|^{-2{\beta}}}{h_0e^{i\frac{\sqrt{3}}{3}x^2}\varphi_3(z)} & 0 \end{pmatrix}\mathbf{W}^{(-)}(z)^{-1}.
\end{equation}

\subsection{Local parametrix near the origin}
In this subsection, we intend to find a function $\mathbf{P}^{(0)}(z)$ having the same jumps  as $\mathbf{T}(z)$ on $\Sigma_\mathbf{T}$ (see Figure \ref{Deforma3}) in the neighborhood $U(0,\delta)=\{z\in \mathbb{C}\mid|z|<\delta\}$ of the origin, and matching with $\mathbf{P}^{(\infty)}(z)$ on the boundary  $\partial U(0,\delta)=\{z\in \mathbb{C}\mid|z|=\delta\}$.

\subsection*{RH problem for $\mathbf{P}^{(0)}(z)$}
\begin{description}
\item{(1)} $\mathbf{P}^{(0)}(z)$ is analytic for $z\in U(0,\delta)\setminus\Sigma_\mathbf{T}$.
\item{(2)} $\mathbf{P}^{(0)}(z)$ satisfies the same jump conditions as $\mathbf{T}(z)$ on $U(0,\delta)\cap\Sigma_\mathbf{T}$.
\item{(3)} On the boundary $\partial U(0,\delta)$, we have
\begin{equation}\label{P0matching}
\mathbf{P}^{(0)}(z)=\left(\mathbf{I}+O(|x|^{-2})\right)\mathbf{P}^{(\infty)}(z),\quad \mathrm{as}\quad x\rightarrow -\infty.
\end{equation}
\item{(4)} $\mathbf{P}^{(0)}(z)$ has the same asymptotic behavior as $\mathbf{T}(z)$ near the origin; see \eqref{T0}.
\end{description}

To begin with, we construct a conformal mapping
\begin{equation}\label{varphi4}
\varphi_4(z)=\pm\frac{1}{8}iz\left(z^2-\frac{8}{3}\right)^{\frac{3}{2}},\quad \pm\Im z>0,
\end{equation}
which behaves like
\begin{equation}\label{varphi4beha}
\varphi_4(z)=2^{\frac{3}{2}}3^{-\frac{3}{2}}z(1+o(1)),\quad \mathrm{as} \quad z\rightarrow0.
\end{equation}
The solution of the above RH problem can be built explicitly in term of the modified Bessel functions. To see this, let $\Phi^{(\mathrm{Bes})}$ be the Bessel model parametrix with parameter $\alpha$ as given in  Appendix \ref{Bessel}. The parametrix $\mathbf{P}^{(0)}(z)$ is then constructed as follows:
\begin{equation}\label{P0}
\mathbf{P}^{(0)}(z)=\mathbf{E}^{(0)}(z)\Phi^{(\mathrm{Bes})}
\left(x^2\varphi_4(z)\right)\mathbf{K}(z)
\left[s_1(e^{-2\pi i\alpha}+s_*)\right]^{-\frac{\sigma_3}{2}}e^{-x^2g(z)\sigma_3},
\end{equation}
where $\mathbf{K}(z)$ is a piecewise constant matrix defined in regions $\Lambda_k$ described in Figure \ref{Bes}
\begin{equation}\label{K(z)}
\mathbf{K}(z)=\left\{\begin{aligned}
&\mathbf{I}, \quad &z&\in \Lambda_1\cup\Lambda_8,\\
&\begin{pmatrix}1  & 0\\ -e^{-2\pi i\alpha} &1\end{pmatrix},\quad &z&\in \Lambda_2,\\
&\begin{pmatrix}e^{-\pi i\alpha}  & 0 \\ -e^{\pi i\alpha} & e^{\pi i\alpha}\end{pmatrix},\quad &z&\in \Lambda_3,\\
&e^{-\pi i\alpha\sigma_3},\quad &z&\in \Lambda_4,\\
&e^{\pi i\alpha\sigma_3},\quad &z&\in \Lambda_5,\\
&\begin{pmatrix}e^{\pi i\alpha}  & 0 \\ e^{-\pi i\alpha} & e^{-\pi i\alpha}\end{pmatrix},\quad &z&\in\Lambda_6,\\
&\begin{pmatrix}1  & 0\\ e^{2\pi i\alpha} & 1\end{pmatrix},\quad &z&\in \Lambda_7,
\end{aligned}\right.
\end{equation}
and $\mathbf{E}^{(0)}(z)$ is given by
\begin{equation}\label{E(0)}
\mathbf{E}^{(0)}(z)=\mathbf{P}^{(\infty)}(z)\left[s_1(e^{-2\pi i\alpha}+s_*)\right]^{\frac{\sigma_3}{2}}\mathbf{Q}(z)e^{-\frac{1}{4}\pi i\sigma_3}\frac{1}{\sqrt{2}}\begin{pmatrix}1 & i\\ i & 1\end{pmatrix}
\end{equation}
with
\begin{equation}\label{Q}
\mathbf{Q}(z)=\left\{\begin{aligned}
&e^{\frac{1}{2}\pi i\alpha\sigma_3}, \quad &\Im z>0,~~ |z|<\delta,\\
&e^{-\frac{1}{2}\pi i\alpha\sigma_3}\sigma_3\sigma_1 ,\quad &\Im z<0,~~ |z|<\delta.
\end{aligned}\right.
\end{equation}

First, using the jump conditions \eqref{Besseljump},  it can be easily checked that the function $\mathbf{P}^{(0)}(z)$ constructed by \eqref{P0}-\eqref{Q} satisfies the same jump relations as $\mathbf{T}(z)$ on $\Sigma_\mathbf{T}\cap U(0,\delta) $. Applying \eqref{Pinftyjump}, \eqref{E(0)} and \eqref{Q}, one can check that  $\mathbf{E}^{(0)}(z)$ is also analytic in the neighborhood $U(0,\delta)$.   Moreover, the matching condition \eqref{P0matching} follows from \eqref{P0}, \eqref{E(0)} and \eqref{BesInfty}. Next, we verify the behavior of $\mathbf{P}^{(0)}(z)$ as $z\to 0$.
Recalling the definition of the connection matrix $E_0$ as given in \eqref{E0}, we may rewrite the asymptotic behavior \eqref{T0} in the form
\begin{equation}\label{T1}
\mathbf{T}(z)=\mathbf{T}_1(z)z^{\alpha\sigma_{3}}  \begin{pmatrix}1  & \frac{1}{1+e^{-2\pi i\alpha}}\\ 0&1\end{pmatrix} \begin{pmatrix}1  & 0\\ -e^{-2\pi i\alpha}&1\end{pmatrix}
\left[s_1(e^{-2\pi i\alpha}+s_*)\right]^{-\frac{\sigma_3}{2}}e^{-x^2g(z)\sigma_3},
\end{equation}
where $\mathbf{T}_1(z)$ is analytic in a neighborhood of $z=0$.
Comparing \eqref{T1} with \eqref{P0}  and \eqref{BesParaExpand}, one  can see that $\mathbf{P}^{(0)}(z)$  satisfies the asymptotic behavior \eqref{T0} as $z\to0$.

\subsection{Final transformation}
The final transformation is defined as
\begin{equation}\label{R}
\mathbf{R}(z)=\left\{\begin{aligned}
&\mathbf{T}(z)\left[\mathbf{P}^{(1,\pm)}(z)\right]^{-1},\quad &z&\in U(z_{1,\pm},\delta)\setminus\Sigma_\mathbf{T},\\
&\mathbf{T}(z)\left[\mathbf{P}^{(2,\pm)}(z)\right]^{-1},\quad &z&\in U(z_{2,\pm},\delta)\setminus\Sigma_\mathbf{T},\\
&\mathbf{T}(z)\left[\mathbf{P}^{(0)}(z)\right]^{-1},\quad &z&\in U(0,\delta)\setminus\Sigma_\mathbf{T},\\
&\mathbf{T}(z)\left[\mathbf{P}^{(\infty)}(z)\right]^{-1},\quad  &\mathrm{e}&\mathrm{lsewhere}.\\ \end{aligned}
\right.
\end{equation}
Immediately, $\mathbf{R}(z)$ fulfills the following RH problem.
\subsection*{RH problem for $\mathbf{R}(z)$}

\begin{figure}[t]
  \centering
  \includegraphics[width=13cm,height=8cm]{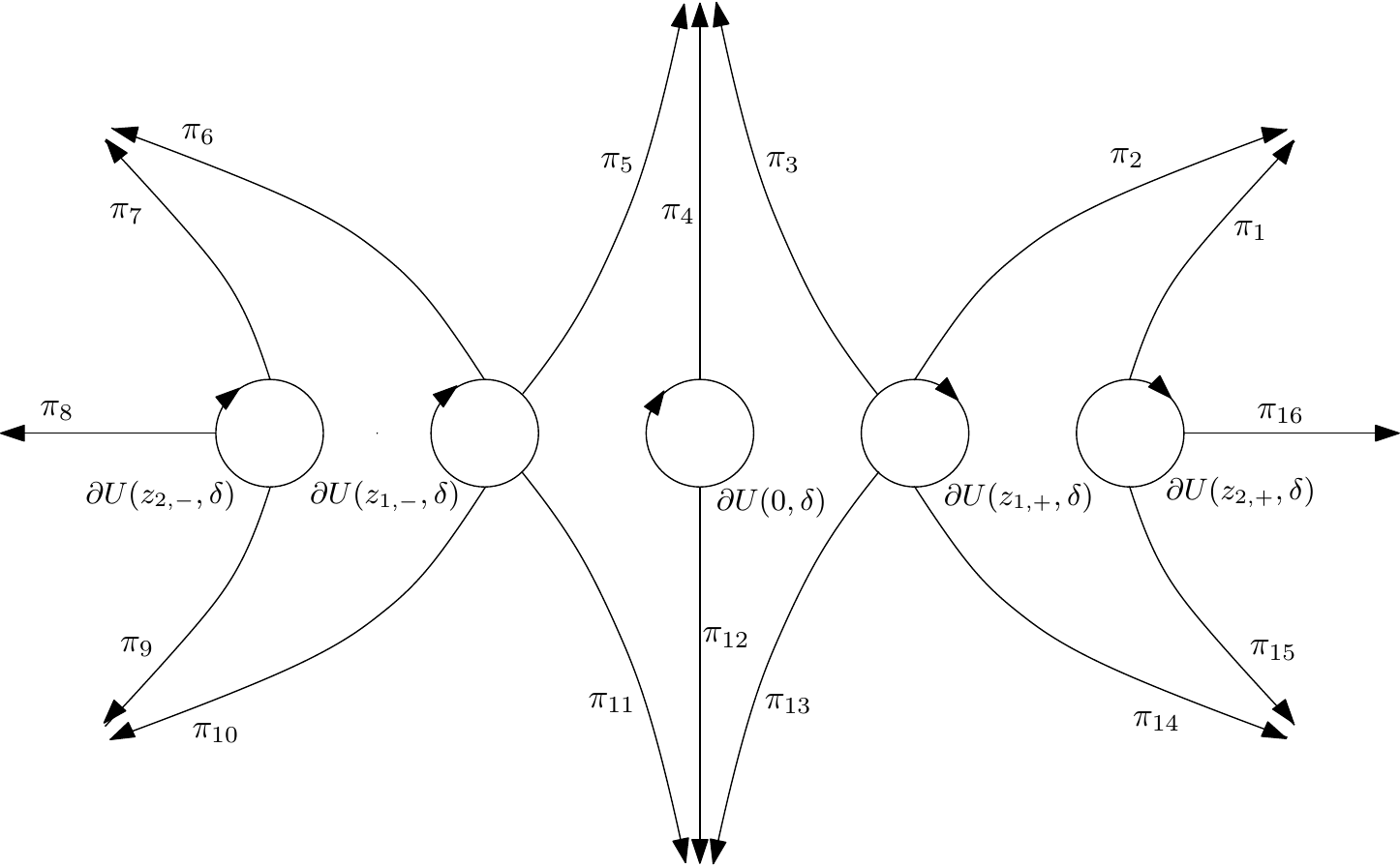}\\
  \caption{The jump contour $\Sigma_\mathbf{R}$}\label{Rjump}
\end{figure}

\begin{description}
\item{(1)} $\mathbf{R}(z)$ is analytic for $z\in \mathbb{C}\setminus\Sigma_\mathbf{R}$, where the contour $\Sigma_\mathbf{R}$ is illustrated in Figure \ref{Rjump}.
\item{(2)} On the contour $\Sigma_\mathbf{R}$, we have $\mathbf{R}_+(z)=\mathbf{R}_-(z)J_{\mathbf{R}}(z)$, where
 \begin{equation}\label{JumpR}
 J_{\mathbf{R}}(z)=\left\{\begin{aligned}
&\mathbf{P}^{(1,\pm)}(z)\mathbf{P}^{(\infty)}(z)^{-1},\quad &z&\in \partial U(z_{1,\pm},\delta),\\
&\mathbf{P}^{(2,\pm)}(z)\mathbf{P}^{(\infty)}(z)^{-1},\quad &z&\in \partial U(z_{2,\pm},\delta),\\
&\mathbf{P}^{(0)}(z)\mathbf{P}^{(\infty)}(z)^{-1},\quad &z&\in \partial U(0,\delta),\\
&\mathbf{P}^{(\infty)}_-(z)J_\mathbf{T}(z)\mathbf{P}^{(\infty)}_+(z)^{-1},\quad &\mathrm{e}&\mathrm{lsewhere}.\\ \end{aligned}
\right.
 \end{equation}
\item{(3)} As $z\rightarrow\infty$, we have
\begin{equation}\label{Rexpan}
\mathbf{R}(z)=\mathbf{I}+\frac{\mathbf{R}_1}{z}+\frac{\mathbf{R}_2}{z^2}+O(z^{-3}).
\end{equation}
\end{description}

In view of the matching conditions \eqref{P2+matching}, \eqref{P2-matching}, \eqref{P1+matching},  \eqref{P1-matching} and \eqref{P0matching}, it is now readily seen that as $x\rightarrow -\infty$,
\begin{equation}\label{JRestimation}
J_\mathbf{R}(z)=\left\{\begin{aligned}
&\mathbf{I}+O(|x|^{-1}), &z&\in\partial U(z_{1,\pm},\delta),\\
&\mathbf{I}+O(|x|^{-2}), &z&\in\partial U(0,\delta)\cup\partial U(z_{2,\pm},\delta),\\
&\mathbf{I}+O(e^{-c_1|x|^2}),&\mathrm{e}&\mathrm{lsewhere},
\end{aligned}\right.
\end{equation}
where $c_1$ is some positive constant. We thus have
\begin{equation}\label{Restimation}
\mathbf{R}(z)=\mathbf{I}+O(|x|^{-1})\quad \mathrm{as} \quad x\rightarrow -\infty,
\end{equation}
where the error bound is uniform for $z\in \mathbb{C}\setminus\Sigma_{\mathbf{R}}$.

\subsection{Case $|s_*|=0$}\label{sec:szero}
In this section, we consider  the special case $|s_*|=0$. It follows from the definitions \eqref{sstar}, \eqref{sstarbar} of $s_*$, $\bar s_*$  that
\begin{equation*}
\left\{\begin{aligned}
&1+s_0s_1=0,\\
&1+s_0s_1e^{2\pi i\alpha}=0.
\end{aligned}\right.
\end{equation*}
Substituting the first equation into the second one gives $e^{2\pi i\alpha}=1$. Therefore, $\alpha$ is an integer. Recalling the jump matrices on the contour $\Sigma_\mathbf{T}$ (see Figure \ref{Deforma3}), it is readily seen that in this case, there are no jumps on the infinite branches $\pi_k$, $k=2,\cdots,6,10,\cdots,14$. As a result,  the RH problem for $\mathbf{T}(z)$ given in Section \ref{sec:RHT} is reduced to the following RH problem for $\widehat{\mathbf{T}}(z)$.
\subsection*{RH problem for $\widehat{\mathbf{T}}(z)$}
\begin{description}
  \item{(1)} $\widehat{\mathbf{T}}(z)$ is analytic for $z\in \mathbb{C}\setminus (\Sigma_{\mathbf{T}}\setminus\bigcup\pi_k)$, where $k=2,\cdots,6,10,\cdots,14$.
  \item{(2)} $\widehat{\mathbf{T}}(z)$ satisfies the jump relations
  \begin{equation*}
  \widehat{\mathbf{T}}_+(z)=\widehat{\mathbf{T}}_-(z)\left\{\begin{aligned}
  &\begin{pmatrix}1 & -s_0^{-1}e^{2x^2g(z)}\\ 0 & 1 \end{pmatrix},\quad &z&\in \pi_1,\pi_{15}\\
  &\begin{pmatrix}1 & s_0^{-1}e^{2x^2g(z)} \\ 0 & 1 \end{pmatrix},\quad &z&\in \pi_7,\pi_9,\\
  &\begin{pmatrix}1 & 0\\ -s_0e^{-2x^2g(z)} & 1 \end{pmatrix},\quad &z&\in \pi_8,\pi^-_{16},\\
  &\begin{pmatrix}0 & -s_0^{-1}\\ s_0 & 0 \end{pmatrix},\quad &z&\in (z_{2,-},z_{2,+}),
  \end{aligned}\right.
  \end{equation*}
  \item{(3)} $\widehat{\mathbf{T}}(z)=\left(\mathbf{I}+O(z^{-1})\right)z^{\alpha\sigma_3}$ as $z\rightarrow \infty$.
  \item{(4)} $\widehat{\mathbf{T}}(z)$ has the following behavior at the origin
\begin{equation}\label{That0}
 \widehat{\mathbf{T}}(z)=\widehat{\mathbf{T}}_0(z)z^{\alpha\sigma_3}
 E_0S_1e^{-x^2g(z)\sigma_3}, \qquad \arg z\in(\pi/4,\pi/2),
\end{equation}
where $\widehat{\mathbf{T}}_0(z)$ is analytic  in a neighborhood of the origin. The asymptotic behaviors of $\widehat{\mathbf{T}}(z)$ in other regions are determined by \eqref{That0} and the jump relations satisfied by $\widehat{\mathbf{T}}(z)$.
\end{description}


The following RH analysis is similar to what we have done in the case $0<|s_*|<1$.
Firstly, neglecting the exponentially small entries in the jump matrices for $\widehat{\mathbf{T}}(z)$, we arrive at an approximate RH problem with constant jump on $(z_{2,-},z_{2,+})$. The solution to this RH problem, namely the global paramatrix,  can be constructed as \eqref{Pinfty} by taking the parameter $\beta=0$ therein. Next, we build two local parametrices near $z_{2,\pm}=\pm\sqrt{8/3}$, satisfying exactly the same RH problems for
 $\mathbf{P}^{(2,\pm)}(z)$, of which the solutions have been constructed in Section \ref{sec:AiParametrix}. Finally, by defining $\mathbf{R}(z)$ as
  the ratio of the solution  of the  RH problem for $\widehat{\mathbf{T}}(z)$ and the parametrices
 similarly to \eqref{R}, we show that $\mathbf{R}(z)$ satisfies the asymptotic $\mathbf{R}(z)=\mathbf{I}+O(|x|^{-2})$ as $x\to-\infty$ and thus the parametrices are good approximations of $\widehat{\mathbf{T}}(z)$ as $x\to-\infty$.

\section{RH analysis as $x\to-\infty$ with $|s_*|>1$}\label{Asymptotic-infty2}
When  $|s_*|>1$ and as $x\to-\infty$, a  detailed RH analysis has been carried out by the current authors in \cite{XXZ} for $\alpha\in\mathbb{R}$ but $\alpha-\frac{1}{2}\notin\mathbb{Z}$. In this section, we will briefly review the RH analysis therein, and discuss   the exceptional  case $\frac{1}{2}-\alpha\in\mathbb{N}$.

In the case $|s_*|>1$, we have $1-|s_*|^2<0$. From the definition \eqref{nu} of ${\beta}$, it follows that
\begin{equation}\label{nu0}
{\beta}=-\frac{1}{2\pi i}\ln(1-|s_*|^2)=-\frac{1}{2\pi i}\ln(|s_*|^2-1)-\frac{1}{2}:={\beta}_0-\frac{1}{2},\quad {\beta}_0\in i\mathbb{R}.
\end{equation}
This, together with \eqref{G+} and \eqref{G-},  implies that
$\mathbf{P}^{(1,\pm)}(z)\mathbf{P}^{\infty}(z)^{-1}\nrightarrow \mathbf{I}$ as $x\rightarrow-\infty$. Therefore,  the matching conditions \eqref{P1+matching} and \eqref{P1-matching} are no longer fulfilled.

Similar problems have also occurred  in   deriving singular asymptotics for the PII transcendents \cite{BI,Hu}. In \cite{BI}, Bothner and Its develop a certain ``dressing'' technique to tackle this problem. However, we find their method does not apply to our case. Instead, our idea is to modify the global parametrix $\mathbf{P}^{(\infty)}(z)$, so as to satisfy the matching condition. Similar technique was first used in \cite{ZZ} to derive the uniform asymptotics of the Pollaczek polynomials and subsequently to derive the uniform asymptotics of a system of Szeg\H{o} class polynomials \cite{ZXZ}.


\subsection{Modified global parametrix }
We define
\begin{equation}\label{mPinfty}
\widetilde{\mathbf{P}}^{(\infty)}(z)=\mathbf{H}(z)\mathbf{P}^{(\infty)}(z),
\end{equation}
where $\mathbf{P}^{(\infty)}(z)$ is introduced in \eqref{Pinfty}. The meromorphic function $\mathbf{H}(z)$ with $\det{\mathbf{H}}(z)=1$ is given by
\begin{equation}\label{H}
\mathbf{H}(z)=\mathbf{I}+\frac{\widetilde{A}}{z-\sqrt{\frac{2}{3}}}
+\frac{\widetilde{B}}{z+\sqrt{\frac{2}{3}}}.
\end{equation}
The explicit expressions of $\widetilde{A}$ and $\widetilde{B}$  are
\begin{equation}\label{reptildeAB}
\widetilde{A}=\sqrt{\frac 2 3} \begin{pmatrix} \frac{l}{l+\sqrt 2}   &
e^{\frac{5\pi i} 6} \frac{ l\, s_0^{-1}f^{-2}_{\infty}}
{ \sqrt 2+l }   \\[.3cm]
 e^{ \frac{\pi i} 6} \frac{ l\, s_0 f^{ 2}_{\infty}}
{ \sqrt 2-l }   &  \frac{l}{  l-\sqrt{2} }
\end{pmatrix},\quad
\widetilde{B}=\sqrt{\frac 2 3} \begin{pmatrix}    -\frac{l}{l+\sqrt 2}  &
e^{\frac{5\pi i} 6} \frac{ l\, s_0^{-1}f^{-2}_{\infty}}
{ \sqrt 2+l }   \\[.3cm]
e^{ \frac{\pi i} 6} \frac{ l\, s_0 f^{ 2}_{\infty}}
{ \sqrt 2-l }  &  -\frac{l}{  l-\sqrt{2} }
\end{pmatrix},
\end{equation}
where $f_{\infty}$ is given by \eqref{finfty} and $l=l(x)$ is defined as
\begin{equation}\label{c}
l=-\frac{i\sqrt{6}e^{i\phi}}{2+e^{i\phi}},\quad \phi=-\frac{\sqrt{3}}{3}x^{2}+i{\beta}_0\ln \left(2 \sqrt{3} x^{2}\right)
+\frac{2\pi\alpha}{3}+\arg\Gamma\left({\beta}_0+ \frac{1}{2}\right)+\arg s_{*},
\end{equation}
with ${\beta}_0$ and $s_*$  given in \eqref{nu0} and \eqref{sstar}, respectively.

We shall assume in \eqref{reptildeAB} that $x$ does not take   the zeros of the functions $\sqrt{2}\pm l(x)$. The set of zeros consists of two sequences   $\{x_n\}$ and  $\{y_n\}$ for  $n\in \mathbb{N}$, defined respectively by the equations
\begin{equation}\label{pole1}
\frac{\sqrt{3}}{3}x_n^{2}-i{\beta}_0\ln \left(2 \sqrt{3} x_n^{2}\right)
-\frac{2\pi\alpha}{3}-\arg \Gamma\left({\beta}_0-\frac{1}{2}\right)-\arg s_{*}-\frac{2\pi}{3}-2n\pi=0,
\end{equation}
and
\begin{equation}\label{pole2}
\frac{\sqrt{3}}{3}y_n^{2}-i{\beta}_0\ln \left(2 \sqrt{3} y_n^{2}\right)
-\frac{2\pi\alpha}{3}-\arg \Gamma\left({\beta}_0+ \frac{1}{2}\right)-\arg s_{*}+\frac{2\pi}{3}-2n\pi=0.
\end{equation}
For  detailed derivations of $\widetilde{A}$ and $\widetilde{B}$, we refer to \cite[Section 3.3]{XXZ}.

\subsection{Modified local parametrices at $z_{1,\pm}$, $z_{2,\pm}$ }
We first consider the local parametrix at $z_{1,+}=\sqrt{\frac{2}{3}}$. Instead of \eqref{P1+}, we define
\begin{align*}
\widetilde{\mathbf{P}}^{(1,+)}(z)=\widetilde {\mathbf{E}}^{(1,+)}(z)\Phi^{\mathrm{(PC)}}\left(|x|\varphi_2(z)\right)
\left(\frac{s_*}{h_0}\right)^{\frac{\sigma_3}{2}}
\left\{\begin{aligned}&s_0^{-\frac{\sigma_3}{2}}(-\sigma_{1})
e^{-x^2g(z)\sigma_3},& \Im z>0,\\
&s_0^{\frac{\sigma_3}{2}}\sigma_3e^{2\pi i(\alpha+{\beta})\sigma_3}
e^{-x^2g(z)\sigma_3},& \Im z<0,
\end{aligned}\right.
\end{align*}
where $\Phi^{\mathrm{(PC)}}$ is the parabolic cylinder model parametrix, $\varphi_2(z)$ is the conformal mapping \eqref{varphi2},  ${\beta}$ is given by \eqref{nu} and $\widetilde{\mathbf{E}}^{(1,+)}(z)$ is defined as
\begin{equation*}
\widetilde{\mathbf{E}}^{(1,+)}(z)=\mathbf{H}(z)\mathbf{W}^{(+)}(z)
\left(\frac{s_*}{h_0}\right)^{-\frac{\sigma_3}{2}}
|x|^{{\beta}\sigma_3}e^{i\frac{\sqrt{3}}{6}x^2\sigma_{3}}
\begin{pmatrix}1 & 0 \\ -\frac{1}{|x|\varphi_2(z)} & 1\end{pmatrix}2^{-\frac{\sigma_3}{2}}
\begin{pmatrix}|x|\varphi_2(z) & 1 \\ 1 & 0\end{pmatrix}
\end{equation*}
with $\mathbf{H}(z)$ and $\mathbf{W}^{(+)}(z)$ given by \eqref{H} and \eqref{W+}, respectively.

It is straightforward to check that $\widetilde{\mathbf{P}}^{(1,+)}(z)$ solves the following RH problem.

\subsection*{RH problem for $\widetilde{\mathbf{P}}^{(1,+)}(z)$}
\begin{description}
\item{(1)} $\widetilde{\mathbf{P}}^{(1,+)}(z)$ is analytic for all $z\in U(z_{1,+},\delta)\setminus\Sigma_\mathbf{T}$.
\item{(2)} $\widetilde{\mathbf{P}}^{(1,+)}(z)$ fulfills the same jump relations as $\mathbf{T}(z)$ on $U(z_{1,+},\delta)\cap\Sigma_\mathbf{T}$.
\item{(3)} On the circle $\partial U(z_{1,+},\delta)$, we have the matching condition
\begin{equation}\label{P1+matchingm}
\widetilde{\mathbf{P}}^{(1,+)}(z)=\left(\mathbf{I}+O(|x|^{-2})\right)\widetilde {\mathbf{P}}^{(\infty)}(z),\quad \mathrm{as}\quad x\rightarrow -\infty.
\end{equation}
\end{description}

Similarly, the local parametrix at $z_{1,-}=-\sqrt{\frac{2}{3}}$ is constructed by
\begin{align*}
\widetilde{\mathbf{P}}^{(1,-)}(z)=\widetilde {\mathbf{E}}^{(1,-)}(z)\Phi^{\mathrm{(PC)}}\left(|x|\varphi_3(z)\right)
\left(\frac{s_*}{h_0}\right)^{\frac{\sigma_3}{2}}\left\{\begin{aligned}&e^{2\pi i{\beta}\sigma_3}\left(s_0e^{2\pi i\alpha}\right)^{\frac{\sigma_3}{2}}
e^{-x^2g(z)\sigma_3},& \Im z>0,\\
&\left(s_0e^{2\pi i\alpha}\right)^{-\frac{\sigma_3}{2}}
\sigma_{3}\sigma_{1}e^{-x^2g(z)\sigma_3},
& \Im z<0,\end{aligned}\right.
\end{align*}
where $\varphi_3(z)$ is the conformal mapping \eqref{varphi3} and $\widetilde{\mathbf{E}}^{(1,-)}(z)$ is given by
\begin{equation*}
\widetilde{\mathbf{E}}^{(1,-)}(z)=\mathbf{H}(z)\mathbf{W}^{(-)}(z)
\left(\frac{s_*}{h_0}\right)^{-\frac{\sigma_3}{2}}
|x|^{{\beta}\sigma_3}e^{\frac{i\sqrt{3}x^2}{6}
\sigma_{3}}\begin{pmatrix}1 & 0 \\ -\frac{1}{|x|\varphi_3(z)} & 1\end{pmatrix}2^{-\frac{\sigma_3}{2}}\begin{pmatrix}|x|\varphi_3(z) & 1 \\ 1 & 0\end{pmatrix}
\end{equation*}
with $\mathbf{H}(z)$ and $\mathbf{W}^{(-)}(z)$ given by \eqref{H} and \eqref{W-}, respectively.

Then, $\widetilde{\mathbf{P}}^{(1,-)}(z)$ satisfies the following RH problem.

\subsection*{RH problem for $\widetilde{\mathbf{P}}^{(1,-)}(z)$}
\begin{description}
\item{(1)} $\widetilde{\mathbf{P}}^{(1,-)}(z)$ is analytic for all $z\in U(z_{1,-},\delta)\setminus\Sigma_\mathbf{T}$.
\item{(2)} $\widetilde{\mathbf{P}}^{(1,-)}(z)$ satisfies the same jump relations as $\mathbf{T}(z)$ on $U(z_{1,-},\delta)\cap\Sigma_\mathbf{T}$.
\item{(3)} On the boundary $\partial U(z_{1,-},\delta)$, we have
\begin{equation}\label{P1-matchingm}
\widetilde{\mathbf{P}}^{(1,-)}(z)=\left(\mathbf{I}+O(|x|^{-2})\right)\widetilde {\mathbf{P}}^{(\infty)}(z),\quad \mathrm{as}\quad x\rightarrow -\infty.
\end{equation}
\end{description}

Since the function $\mathbf{H}(z)$ defined by \eqref{H} is analytic at $z_{2,\pm}=\pm\sqrt{8/3}$, in order to construct the local parametrices $\widetilde {\mathbf{P}}^{(2,\pm)}(z)$ at these points, we only need to replace $\mathbf{P}^{(\infty)}(z)$ by $\widetilde{\mathbf{P}}^{(\infty)}(z)$ in the constructions of local parametrices at these points in the case $0<|s_*|<1$; see \eqref{P2+} and \eqref{P2-}. Moreover, we have the following matching conditions on the circular boundaries:
\begin{align}\label{P2+matchingm}
\widetilde{\mathbf{P}}^{(2,+)}(z)=\left(\mathbf{I}+O(|x|^{-2})\right)\widetilde {\mathbf{P}}^{(\infty)}(z),\quad \mathrm{as} \quad x\to-\infty
\end{align}
and
\begin{equation}\label{P2-matchingm}
\widetilde{\mathbf{P}}^{(2,-)}(z)=\left(\mathbf{I}+O(|x|^{-2})\right)\widetilde {\mathbf{P}}^{(\infty)}(z),\quad \mathrm{as} \quad x\to-\infty.
\end{equation}

\subsection{Modified local parametrix near the origin}
Accordingly, the parametrix $\widetilde{\mathbf{P}}^{(0)}(z)$ is defined as
\begin{equation}\label{Ptilde0}
\widetilde{\mathbf{P}}^{(0)}(z)=\widetilde{\mathbf{E}}^{(0)}(z)\Phi^{(\mathrm{Bes})}
\left(x^2\varphi_4(z)\right)\mathbf{K}(z)
\left[s_1\left(e^{-2\pi i\alpha}+s_*\right)\right]^{-\frac{\sigma_3}{2}}e^{-x^2g(z)\sigma_3},
\end{equation}
where $\varphi_4(z)$ is the conformal mapping \eqref{varphi4}, $\mathbf{K}(z)$ is defined by \eqref{K(z)} and $\widetilde{\mathbf{E}}^{(0)}(z)$ is given by
\begin{equation}\label{Etilde(0)}
\widetilde{\mathbf{E}}^{(0)}(z)=\widetilde{\mathbf{P}}^{(\infty)}(z)
\left[s_1(e^{-2\pi i\alpha}+s_*)\right]^{\frac{\sigma_3}{2}}\mathbf{Q}(z)e^{-\frac{1}{4}\pi i\sigma_3}\frac{1}{\sqrt{2}}\begin{pmatrix}1 & i\\ i & 1\end{pmatrix}
\end{equation}
with $\mathbf{Q}(z)$ defined by \eqref{Q}.

It is direct to verify that $\widetilde{\mathbf{P}}^{(0)}(z)$ satisfies the same jump conditions as $\mathbf{T}(z)$ and $\widetilde{\mathbf{P}}^{(0)}(z)$ possesses the same asymptotic behavior  near the origin as $\mathbf{T}(z)$ for the case $\alpha-\frac{1}{2}\notin\mathbb{Z}$. Moreover, $\widetilde{\mathbf{P}}^{(0)}(z)$ fulfills the following matching condition
\begin{equation}\label{P0matchingm}
\widetilde{\mathbf{P}}^{(0)}(z)=\left(\mathbf{I}+O(|x|^{-2})\right)
\widetilde{\mathbf{P}}^{(\infty)}(z)\quad \mathrm{as} \quad x\to-\infty.
\end{equation}

While, in the case $\frac{1}{2}-\alpha\in\mathbb{N}$ and  $|s_*|>1$, we need to check that $\widetilde{\mathbf{P}}^{(0)}(z)$ shares the same asymptotic behaviors as $\mathbf{T}(z)$ near the origin. From \eqref{E0} and \eqref{T(z)}, it follows that as $z\to0$ with $\arg z\in (0,\frac{\pi}{2})$
\begin{equation}\label{Ttilde0}
\mathbf{T}(z)=\widetilde{\mathbf{T}}_0(z)z^{\alpha\sigma_{3}}S_1 e^{-x^2g(z)\sigma_{3}}
\begin{pmatrix}1 & 0\\ -\frac{\overline{s}_*e^{-2\pi i\alpha}e^{-2x^2g(z)}}
  {s_1(e^{-2\pi i\alpha}+s_*)} & 1 \end{pmatrix},
\end{equation}
where $\widetilde{\mathbf{T}}_0(z)$ is analytic in a neighborhood of $z=0$. Using the following factorization
\begin{equation}
\begin{pmatrix}1 & s_1 \\ 0 & 1 \end{pmatrix}=\begin{pmatrix}1 & 0 \\ s_1^{-1} & 1 \end{pmatrix}\begin{pmatrix}0 & s_1 \\ -s_1^{-1} & 0 \end{pmatrix}\begin{pmatrix}1 & 0 \\ s_1^{-1} & 1 \end{pmatrix},
\end{equation}
we can rewrite \eqref{Ttilde0} as
\begin{equation}\label{Ttilde1}
\mathbf{T}(z)=\widetilde{\mathbf{T}}_1(z)z^{-\alpha\sigma_{3}}\begin{pmatrix}1 & 0\\ 1 & 1 \end{pmatrix}\left[s_1(e^{-2\pi i\alpha}+s_*)\right]^{-\frac{\sigma_3}{2}}
e^{-x^2g(z)\sigma_3},
\end{equation}
where $\widetilde{\mathbf{T}}_1(z)$ is analytic in a neighborhood of $z=0$.  Comparing \eqref{Ttilde1} with \eqref{Ptilde0}  and \eqref{BesParaExpand1}, it is readily seen that $\widetilde{\mathbf{P}}^{(0)}(z)$  satisfies the asymptotic behavior \eqref{Ttilde0} as $z\to0$ for $\frac{1}{2}-\alpha\in\mathbb{N}$.

\subsection{Final transformation}
The final transformation is defined as
\begin{equation}\label{Rtilde}
\widetilde{\mathbf{R}}(z)=\left\{\begin{aligned}
&\mathbf{T}(z)\left[\widetilde{\mathbf{P}}^{(1,\pm)}(z)\right]^{-1},\quad &z&\in U(z_{1,\pm},\delta)\setminus\Sigma_\mathbf{T},\\
&\mathbf{T}(z)\left[\widetilde{\mathbf{P}}^{(2,\pm)}(z)\right]^{-1},\quad &z&\in U(z_{2,\pm},\delta)\setminus\Sigma_\mathbf{T},\\
&\mathbf{T}(z)\left[\widetilde{\mathbf{P}}^{(0)}(z)\right]^{-1},\quad &z&\in U(0,\delta)\setminus\Sigma_\mathbf{T},\\
&\mathbf{T}(z)\left[\widetilde{\mathbf{P}}^{(\infty)}(z)\right]^{-1},\quad  &\mathrm{e}&\mathrm{lsewhere}.\\ \end{aligned}
\right.
\end{equation}
As a consequence, $\widetilde{\mathbf{R}}(z)$ solves the following RH problem.
\subsection*{RH problem for $\widetilde{\mathbf{R}}(z)$}

\begin{description}
\item{(1)} $\widetilde{\mathbf{R}}(z)$ is analytic for $z\in \mathbb{C}\setminus\Sigma_{\mathbf{R}}$, where $\Sigma_\mathbf{R}$ is illustrated in Figure \ref{Rjump}.
\item{(2)} On $\Sigma_{R}$, we have $\widetilde{\mathbf{R}}_+(z)=\widetilde{\mathbf{R}}_-(z)J_{\widetilde{\mathbf{R}}}(z)$, where
 \begin{equation}\label{Rtildejump}
 J_{\widetilde{\mathbf{R}}}(z)=\left\{\begin{aligned}
&\widetilde{\mathbf{P}}^{(1,\pm)}(z)\widetilde{\mathbf{P}}^{(\infty)}(z)^{-1},\quad &z&\in \partial U(z_{1,\pm},\delta),\\
&\widetilde{\mathbf{P}}^{(2,\pm)}(z)\widetilde{\mathbf{P}}^{(\infty)}(z)^{-1},\quad &z&\in \partial U(z_{2,\pm},\delta),\\
&\widetilde{\mathbf{P}}^{(0)}(z)\widetilde{\mathbf{P}}^{(\infty)}(z)^{-1},\quad &z&\in \partial U(0,\delta),\\
&\widetilde{\mathbf{P}}^{(\infty)}_-(z)J_{\mathbf{T}}(z)
\widetilde{\mathbf{P}}^{(\infty)}_+(z)^{-1},\quad &\mathrm{e}&\mathrm{lsewhere}.\\ \end{aligned}
\right.
 \end{equation}
\item{(3)} As $z\rightarrow\infty$, we have
\begin{equation}\label{Rtildeexp}
\widetilde{\mathbf{R}}(z)=\mathbf{I}+\frac{\widetilde{\mathbf{R}}_1}{z}+\frac{\widetilde {\mathbf{R}}_2}{z^2}+O(z^{-3}).
\end{equation}
\end{description}

Using the matching conditions  \eqref{P1+matchingm}, \eqref{P1-matchingm}, \eqref{P2+matchingm}, \eqref{P2-matchingm} and \eqref{P0matchingm},
we have the following estimates
for   the jump matrices \eqref{Rtildejump}    as $x\rightarrow-\infty$
\begin{equation}\label{Rtildejumpestima}
J_{\widetilde{\mathbf{R}}}(z)=\left\{\begin{aligned}
&\mathbf{I}+O(|x|^{-2}), &z&\in\partial U(0,\delta)\cup\partial U(z_{1,\pm},\delta)\cup\partial U(z_{2,\pm},\delta),\\
&\mathbf{I}+O(e^{-c_{2}|x|^2}),&z&\in\pi_k,\ k=1,\cdots,16.
\end{aligned}\right.
\end{equation}
where $c_2$ is some positive constant. Consequently, we have that for any $z\in \mathbb{C}\setminus\Sigma_\mathbf{R}$
\begin{equation}\label{Rtildeexpx}
\widetilde{\mathbf{R}}(z)=\mathbf{I}+O(|x|^{-2}),\qquad \mathrm{as}\qquad x\rightarrow-\infty.
\end{equation}

\section{RH analysis as $x\to-\infty$ with $|s_*|=1$ but $s_*\neq1$}\label{Asymptotic-infty3}
In the case when $|s_*|=1$ with $s_*\neq1$, it follows from \eqref{sstar} and \eqref{sstarbar} that
\begin{equation}\label{sstarvalue}
s_*=-e^{-2\pi i\alpha}.
\end{equation}
We need another  $g$-function
\begin{equation}\label{gbar}
\widehat{g}(z)=\frac{1}{8}(z^2-2)^2=\frac{1}{8}z^4-\frac{1}{2}z^2+\frac{1}{2}.
\end{equation}
It is direct to see that $\widehat{g}(z)$ has three saddle points $z_0=0$, $z_{\pm}=\pm\sqrt{2}$.
As shown   in Figure \ref{Ubarjump}, the topology of the anti-Stokes curves of $\widehat {g}(z)$ is quite different from $g(z)$ depicted in Figure \ref{ASC}.

As before, we begin with  the re-scaling transformation \eqref{rescaling}. To proceed, we introduce
\begin{equation}\label{Ubar}
\widehat{\mathbf{U}}(z)=e^{\frac{x^2}{2}\sigma_3}|x|^{\alpha\sigma_3}\Phi(z)
|x|^{-\alpha\sigma_3}e^{-x^2\widehat{g}(z)\sigma_3}.
\end{equation}
Then, $\widehat{\mathbf{U}}(z)$ satisfies a RH problem with jumps along $\gamma_k$, $k=1,3,4,5,7,8$ as shown in Figure \ref{PIVj}.
Subsequently, we  deform the jump contours to the anti-Stokes curves of $\widehat {g}(z)$ illustrated in Figure \ref{Ubarjump}. We rewrite the RH problem as the following problem formulated on the anti-Stokes curves of $\widehat {g}(z)$.

\begin{figure}[H]
  \centering
  \includegraphics[width=13cm,height=6cm]{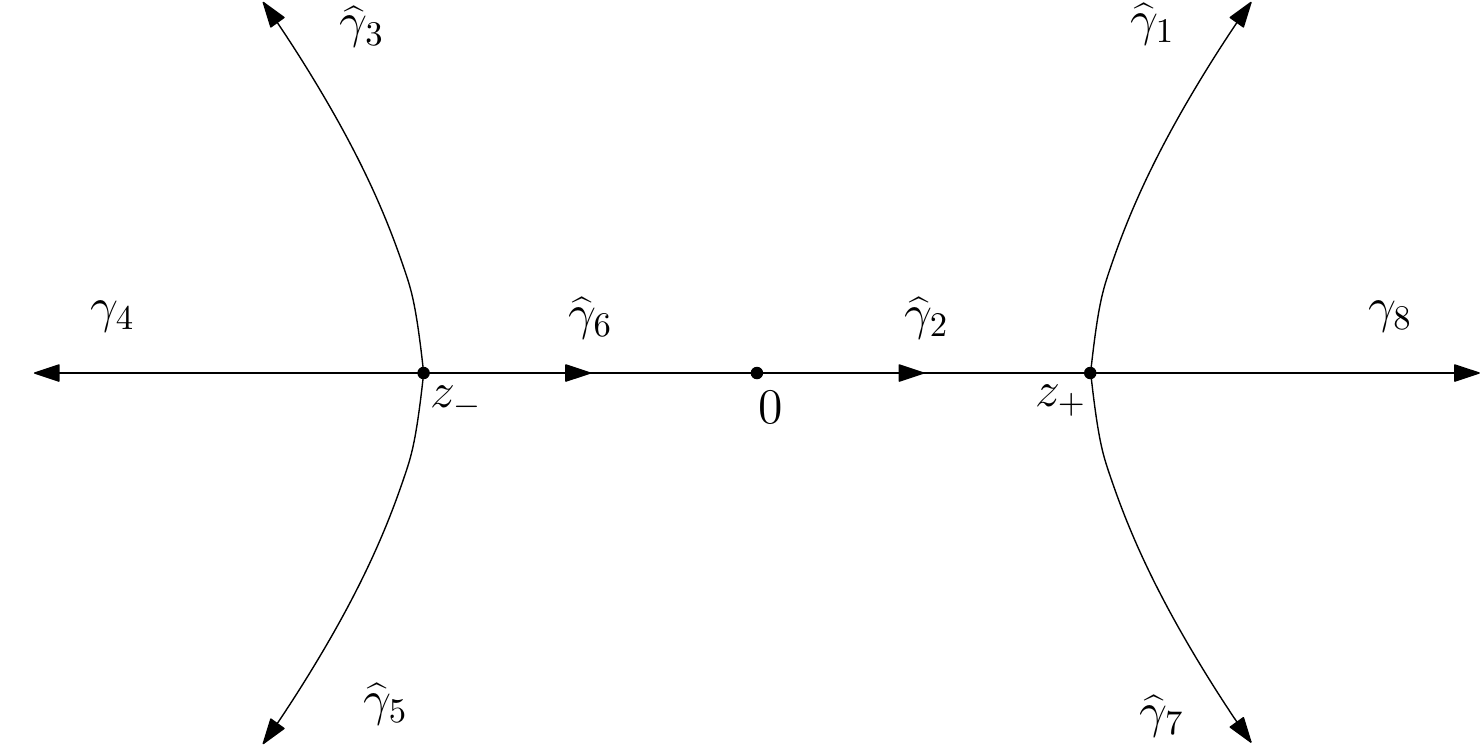}\\
  \caption{The jump contour $\Sigma_{\widehat{\mathbf{U}}}$}\label{Ubarjump}
\end{figure}

\subsection*{RH problem for $\widehat{\mathbf{U}}(z)$}
\begin{description}
\item{(1)} $\widehat{\mathbf{U}}(z)$ is analytic for $z\in\mathbb{C}\setminus \Sigma_{\widehat{\mathbf{U}}}$, with $\Sigma_{\widehat{\mathbf{U}}}$ is shown in Figure \ref{Ubarjump}. Note that we have deformed $\gamma_k$ to the anti-Stokes curves of $\widehat{g}(z)$, namely $\widehat{\gamma}_{k}$, $k=1,3,5,7$.

\item{(2)} On the contour $\Sigma_{\widehat{\mathbf{U}}}$, we have $\widehat{\mathbf{U}}_{+}(z)=\widehat{\mathbf{U}}_{-}(z)J_{\widehat{\mathbf{U}}}(z)$, where
$$J_{\widehat{\mathbf{U}}}(z)=\left\{
\begin{aligned}
&\begin{pmatrix}1 & s_k|x|^{2\alpha}e^{2x^2\widehat{g}(z)}\\ 0 & 1 \end{pmatrix},\quad &z&\in\widehat{\gamma}_{k},\ k=1,3,5,7,\\
&\begin{pmatrix}-1 & 0 \\ s_0e^{2\pi i\alpha}|x|^{-2\alpha}e^{-2x^2\widehat{g}(z)} & -1 \end{pmatrix},\quad &z&\in\widehat{\gamma}_{2},\\
&\begin{pmatrix}-e^{-2\pi i\alpha} & 0 \\ s_0e^{2\pi i\alpha}|x|^{-2\alpha}e^{-2x^2\widehat{g}(z)} & -e^{2\pi i\alpha}\end{pmatrix},\quad &z&\in\widehat{\gamma}_{6},\\
&\begin{pmatrix}1 & 0\\ -s_0e^{2\pi i\alpha}|x|^{-2\alpha}e^{-2x^2\widehat{g}(z)} & 1 \end{pmatrix},\quad &z&\in\gamma_{4},\\
&\begin{pmatrix}e^{-2\pi i\alpha} & 0\\ s_0e^{2\pi i\alpha}|x|^{-2\alpha}e^{-2x^2\widehat{g}(z)} & e^{2\pi i\alpha} \end{pmatrix},\quad &z&\in\gamma_{8}.
\end{aligned}
\right.
$$
\item{(3)} $\widehat{\mathbf{U}}(z)=\left(\mathbf{I}+O(z^{-1})\right)z^{\alpha\sigma_3}$ as $z\rightarrow\infty$, where $\arg z\in(0,2\pi)$.
\item{(4)} $\widehat{\mathbf{U}}(z)$ possesses the following asymptotic behavior near the origin
\begin{equation}\label{Ubaratzero}
\widehat{\mathbf{U}}(z)=\widehat{\mathbf{U}}_{0}(z)z^{\alpha\sigma_3}E_0S_1
|x|^{-\alpha\sigma_3}e^{-x^2\widehat{g}(z)\sigma_3},\qquad \arg z\in(\pi/4,\pi/2),
\end{equation}
where $\widehat{\mathbf{U}}_{0}(z)$ is analytic  in the neighborhood of $z_0=0$. The asymptotic behaviors of $\widehat{\mathbf{U}}(z)$ in other regions are determined by \eqref{Ubaratzero} and the jump relations satisfied by $\widehat{\mathbf{U}}(z)$.
\end{description}

From the properties of $g$-function $\widehat{g}(z)$, it is readily seen that the jump matrices on $\gamma_4$ and the anti-Stokes  curves $\widehat{\gamma}_k$, $k=1,3,5,7$ approach to identity matrix   exponentially fast as $x\rightarrow-\infty$. Thus,  the task is to construct a global parametrix satisfying the remaining jump along $(z_-,+\infty)$ and three local parametrices in the neighborhoods of the saddle points $z_0$, $z_{\pm}$.

\subsection{Global parametrix }
We need to solve the following RH problem for a $2\times 2$ matrix-valued function $\widehat{\mathbf{P}}^{(\infty)}(z)$.
\subsection*{RH problem for $\widehat{\mathbf{P}}^{(\infty)}(z)$}
\begin{description}
  \item{(1)} $\widehat{\mathbf{P}}^{(\infty)}(z)$ is analytic for $z\in \mathbb{C}\setminus (z_{-},+\infty)$.

  \item{(2)} $\widehat{\mathbf{P}}^{(\infty)}(z)$ satisfies the following jump relations
  \begin{equation}\label{Phatjump}
  \widehat{\mathbf{P}}^{(\infty)}_{+}(z)=\widehat{\mathbf{P}}^{(\infty)}_{-}(z)
  \left\{\begin{aligned}
  &\begin{pmatrix}-e^{-2\pi i\alpha} & 0 \\ 0 & -e^{2\pi i\alpha}\end{pmatrix},\quad& z&\in(z_{-},0), \\
  &-\mathbf{I},\quad & z&\in(0,z_{+}),\\
  &\begin{pmatrix}e^{-2\pi i\alpha} & 0 \\ 0 & e^{2\pi i\alpha}\end{pmatrix},\quad & z&\in(z_{+},+\infty)\\
  \end{aligned}\right.
  \end{equation}
  \item{(3)} $\widehat{\mathbf{P}}^{(\infty)}(z)$ has at most singularities of order less than $|\alpha|+\frac{3}{2}$ at $z_{\pm}=\pm\sqrt{2}$.
  \item{(4)}  As $z\rightarrow \infty$, we have
  $\widehat{\mathbf{P}}^{(\infty)}(z)=\left(\mathbf{I}+O(z^{-1})\right)z^{\alpha\sigma_3}$.
\end{description}

A solution to the above RH problem is given by
\begin{equation}\label{Pbarinfty}
\widehat{\mathbf{P}}^{(\infty)}(z)=\widehat{\mathbf{H}}(z)z^{-\alpha\sigma_3}
\left(z+\sqrt{2}\right)^{(\alpha+\frac{1}{2})\sigma_3}
\left(z-\sqrt{2}\right)^{(\alpha-\frac{1}{2})\sigma_3},
\end{equation}
where the branches of the powers are chosen such that $\arg z\in(0,2\pi)$ and $\arg(z\pm\sqrt{2})\in(0,2\pi)$. The function $\widehat{\mathbf{H}}(z)$, similar to \eqref{H}, takes the rational form
\begin{equation}\label{H(z)bar}
\widehat{\mathbf{H}}(z)=\mathbf{I}+\frac{\widehat{A}}{z+\sqrt{2}}+\frac{\widehat {B}}{z-\sqrt{2}},\quad \det\widehat{\mathbf{H}}(z)=1,
\end{equation}
where constant matrices $\widehat{A}$ and $\widehat{B}$ are to be determined. We point out that the factor $\widehat{\mathbf{H}}(z)$ in \eqref{Pbarinfty} is introduced to meet the matching conditions \eqref{matchingPr} and \eqref{matchingPl} below.

\subsection{Local parametrices at $z_{\pm}=\pm\sqrt{2}$}
We   seek two functions $\widehat{\mathbf{P}}^{(r)}(z)$ and $\widehat{\mathbf{P}}^{(l)}(z)$, satisfying the same jumps as $\widehat{\mathbf{\mathbf{U}}}(z)$ respectively in the neighborhoods $U(z_{\pm},\delta)=\{z\in\mathbb{C}\mid|z-z_{\pm}|<\delta\}$ of the saddle points $z_{\pm}$, and matching with $\widehat{\mathbf{P}}^{(\infty)}(z)$ on the boundaries $\partial U(z_{\pm},\delta)=\{z\in \mathbb{C}\mid|z-z_{\pm}|=\delta\}$.

\subsection*{RH problem for $\widehat{\mathbf{P}}^{(r)}(z)$}
\begin{description}
\item{(1)} $\widehat{\mathbf{P}}^{(r)}(z)$ is analytic for $z\in U(z_{+},\delta)\setminus \Sigma_{\widehat{\mathbf{U}}}$.
\item{(2)} $\widehat{\mathbf{P}}^{(r)}(z)$ shares the same jumps as $\widehat{\mathbf{U}}(z)$ on $\Sigma_{\widehat{\mathbf{U}}}\cap U(z_{+},\delta)$.
\item{(3)} On the boundary $\partial U(z_+,\delta)$, we have
\begin{equation}\label{matchingPr}
\widehat{\mathbf{P}}^{(r)}(z)=\left(\mathbf{I}+O(|x|^{-2})\right)
\widehat{\mathbf{P}}^{(\infty)}(z)\quad \mathrm{as}\quad x\rightarrow-\infty.
\end{equation}
\end{description}

Firstly, we introduce a conformal mapping
\begin{equation}\label{varphi5}
\varphi_5(z)=2\widehat{g}(z)^{\frac{1}{2}}=2(z-z_{\pm})(1+o(1)), \quad z\rightarrow z_{\pm}.
\end{equation}
We will make use of the parabolic cylinder function to construct the solution. Let $\Phi^{(\mathrm{PC})}$ be the parabolic cylinder parametrix with parameter $\beta=\frac{1}{2}-\alpha$ as given in Appendix \ref{PCP}. Then, the parametrix $\widehat{\mathbf{P}}^{(r)}(z)$ is defined as
\begin{equation}\label{Pr}
\widehat{\mathbf{P}}^{(r)}(z)=\mathbf{E}^{(r)}(z)\Phi^{(\mathrm{PC})}(|x|\varphi_5(z))
\left(\frac{h_1^{(r)}}{s_1}\right)^{\frac{\sigma_3}{2}}
\mathbf{D}^{(r)}(z)\sigma_3|x|^{-\alpha\sigma_3}e^{-x^2\widehat{g}(z)\sigma_3},
\end{equation}
where $h_1^{(r)}=\frac{\sqrt{2\pi}e^{-(\alpha-\frac{1}{2})\pi i}}{\Gamma(\alpha-\frac{1}{2})}$ is the Stokes multiplier defined in \eqref{h0}, \begin{equation}\label{Dr}
\mathbf{D}^{(r)}(z)=\left\{\begin{aligned}
&e^{2\pi(\alpha+\frac{1}{4})\sigma_3},
&\arg z&\in(-\frac{\pi}{4},0),\\
&e^{\frac{1}{2}\pi i\sigma_3}, &\arg z&\in(0,\pi),\\
&e^{-\frac{1}{2}\pi i\sigma_3}, &\arg z&\in(\pi,\frac{7\pi}{4}).
\end{aligned}\right.
\end{equation}
and $\mathbf{E}^{(r)}(z)$ is given by
\begin{align}\label{Er}
\mathbf{E}^{(r)}(z)=\widehat{\mathbf{P}}^{(\infty)}(z)\sigma_3
&\left(\frac{s_1}{h_1^{(r)}}\right)^{\frac{\sigma_3}{2}}|x|^{\frac{\sigma_3}{2}}
e^{\mp\frac{1}{2}\pi i\sigma_3}\nonumber\\
&\times\varphi_5(z)^{(\frac{1}{2}-\alpha)\sigma_3}
\begin{pmatrix}1 & \frac{\alpha-\frac{1}{2}}{|x|\varphi_5(z)} \\ 0 & 1
\end{pmatrix}2^{-\frac{\sigma_3}{2}}\begin{pmatrix} |x|\varphi_5(z) & 1\\
1 & 0 \end{pmatrix},~~\pm\Im z>0
\end{align}
with $\arg\varphi_5(z)\in(0,2\pi)$.

It is readily seen from \eqref{Phatjump}, \eqref{Er} and \eqref{PCAsyatinfty} that $\mathbf{E}^{(r)}(z)$ is analytic in the deleted neighborhood $U(z_+,\delta)\setminus\{z_+\}$ and the matching condition \eqref{matchingPr} is satisfied. To guarantee that $\mathbf{E}^{(r)}(z)$ is also analytic at the isolated point $z_+=\sqrt{2}$, we find, by computing the Laurent expansion at $z_+=\sqrt{2}$ using \eqref{Pbarinfty}, \eqref{H(z)bar}, \eqref{varphi5}, \eqref{Dr} and \eqref{Er},  that the constant matrices $\widehat{A}$ and $\widehat{B}$ in \eqref{H(z)bar} should satisfy the following algebraic equation
\begin{equation}\label{AbarBbar1}
\left(I+\frac{\widehat{ A}}{2\sqrt{2}}\right)\begin{pmatrix} 0 & c_r \\ 0 & 0\end{pmatrix}=-\widehat{B},
\end{equation}
where the constant $c_r$ is given by
\begin{equation}\label{cr}
c_r=\frac{2s_1\Gamma(\alpha+\frac{1}{2})e^{i\pi(\alpha-\frac{1}{2})}}{\sqrt{\pi}}.
\end{equation}

\subsection*{RH problem for $\widehat{\mathbf{P}}^{(l)}(z)$}
\begin{description}
\item{(1)} $\widehat{\mathbf{P}}^{(l)}(z)$ is analytic for $z\in U(z_{-},\delta)\setminus \Sigma_{\widehat{\mathbf{U}}}$.
\item{(2)} $\widehat{\mathbf{P}}^{(l)}(z)$ satisfies the same jumps as $\widehat{\mathbf{U}}(z)$ on $\Sigma_{\widehat{\mathbf{U}}}\cap U(z_-,\delta)$.
\item{(3)} On the circle $\partial U(z_-,\delta)$, it holds that
\begin{equation}\label{matchingPl}
\widehat{\mathbf{P}}^{(l)}(z)=\left(\mathbf{I}+O(|x|^{-2})\right)
\widehat{\mathbf{P}}^{(\infty)}(z)\quad \mathrm{as}\quad x\rightarrow-\infty.
\end{equation}
\end{description}
Similarly, the solution $\widehat{\mathbf{P}}^{(l)}(z)$ can  also be built in terms of the parabolic cylinder function. We choose the parameter $\beta=-\frac{1}{2}-\alpha$ in Appendix \ref{PCP}. More precisely, we define
\begin{equation}\label{Pl}
\widehat{\mathbf{P}}^{(l)}(z)=\mathbf{E}^{(l)}(z)\Phi^{(\mathrm{PC})}(|x|\varphi_5(z))
\left(\frac{h_1^{(l)}}{s_1}\right)^{\frac{\sigma_3}{2}}
\mathbf{D}^{(l)}(z)\sigma_3|x|^{-\alpha\sigma_3}e^{-x^2\widehat{g}(z)\sigma_3},
\end{equation}
where $h_1^{(l)}=\sqrt{2\pi}e^{-(\alpha+\frac{1}{2})\pi i}/\Gamma(\alpha+\frac{1}{2})$ is the Stokes multiplier given in \eqref{h0},
\begin{equation}\label{Dl}
\mathbf{D}^{(l)}(z)=\left\{\begin{aligned}
&e^{2\pi(\alpha+\frac{1}{2})\sigma_3}, &\arg z&\in(-\frac{\pi}{4},0),\\
&\mathbf{I},
&\arg z&\in(0,\frac{7\pi}{4}).
\end{aligned}\right.
\end{equation}
and $\mathbf{E}^{(l)}(z)$ is given by
\begin{align}\label{El}
\mathbf{E}^{(l)}(z)=\widehat{\mathbf{P}}^{(\infty)}(z)\sigma_3
&\left(\frac{s_1}{h_1^{(l)}}\right)^{\frac{\sigma_3}{2}}
|x|^{-\frac{\sigma_3}{2}}
\varphi_5(z)^{-(\alpha+\frac{1}{2})\sigma_3}\begin{pmatrix}
1 & 0 \\ -\frac{1}{|x|\varphi_5(z)} & 1
\end{pmatrix}2^{-\frac{\sigma_3}{2}}\begin{pmatrix} |x|\varphi_5(z) & 1\\
1 & 0 \end{pmatrix}
\end{align}
with $\arg\varphi_5(z)\in(0,2\pi)$.

Using the jumps \eqref{Phatjump}, it is straightforward to check that $\mathbf{E}^{(l)}(z)$ is analytic in the deleted neighborhood $U(z_-,\delta)\setminus\{z_-\}$ and the matching condition \eqref{matchingPl} is also satisfied. To ensure that $\mathbf{E}^{(l)}(z)$ is also analytic  at the isolated point $z_-=-\sqrt{2}$, by calculating the Laurent expansion at $z_-$ using \eqref{Pbarinfty}, \eqref{H(z)bar}, \eqref{varphi5}, \eqref{El} and \eqref{Dl},  we obtain another algebraic equation
\begin{equation}\label{AB2}
\left(I-\frac{\widehat{B}}{2\sqrt{2}}\right)\begin{pmatrix} 0 & 0 \\ c_l & 0\end{pmatrix}=-\widehat{A},
\end{equation}
where the constant $c_l$ is given by
\begin{equation}\label{cl}
c_l=-\frac{4\sqrt{\pi}}
{s_1\Gamma(\alpha+\frac{1}{2})e^{i\pi(\alpha+\frac{1}{2})}}.
\end{equation}
Combining \eqref{AbarBbar1}, \eqref{AB2} with the fact that $c_rc_l=8$, we  can now derive explicit expressions of $\widehat{A}$ and $\widehat{B}$ as follows:
\begin{equation}\label{ABexpres}
\widehat{A}=\begin{pmatrix} -\sqrt{2} & 0\\ -\frac{1}{2}c_l & 0 \end{pmatrix},\quad
\widehat{B}=\begin{pmatrix} 0 & -\frac{1}{2}c_r\\ 0 & \sqrt{2} \end{pmatrix},
\end{equation}
where $c_r$ and $c_l$ are given by \eqref{cr} and \eqref{cl}, respectively.

After determining $\widehat{A}$ and $\widehat{B}$ as given in \eqref{ABexpres}, it is straightforward to verify that the determinant condition $\det\widehat{\mathbf{H}}(z)=1$ is also satisfied.

\subsection{Local parametrix near the origin}
Now, we   look for a function $\widehat{\mathbf{P}}^{(0)}(z)$ satisfying the same jumps as $\widehat{\mathbf{U}}(z)$ in the neighborhood $U(z_0,\delta)=\{z\in\mathbb{C}\mid|z-z_0|<\delta\}$ of $z_0=0$ with some constant $0<\delta<\sqrt{2}$, and matching with $\widehat{\mathbf{P}}^{(\infty)}(z)$ on the boundary $\partial U(z_0,\delta)=\{z\in \mathbb{C}\mid|z-z_0|=\delta\}$.
\subsection*{RH problem for $\widehat{\mathbf{P}}^{(0)}(z)$}
\begin{description}
  \item{(1)} $\widehat{\mathbf{P}}^{(0)}(z)$ is analytic for $z\in U(z_0,\delta)\setminus [z_{-},z_{+}]$.

  \item{(2)} $\widehat{\mathbf{P}}^{(0)}(z)$ satisfies the following jump relations
  \begin{equation*}
  \widehat{\mathbf{P}}^{(0)}_{+}(z)=\widehat{\mathbf{P}}^{(0)}_{-}(z)
  \left\{\begin{aligned}
  &\begin{pmatrix}-e^{-2\pi i\alpha} & 0 \\ s_0e^{2\pi i\alpha}|x|^{-2\alpha}e^{-2x^2\widehat{g}(z)} & -e^{2\pi i\alpha}\end{pmatrix},\quad& z&\in(-\delta,0), \\
  &\begin{pmatrix}-1 & 0 \\ s_0e^{2\pi i\alpha}|x|^{-2\alpha}e^{-2x^2\widehat{g}(z)} & -1\end{pmatrix},\quad & z&\in(0,\delta).
  \end{aligned}\right.
  \end{equation*}
  \item{(3)} On the boundary $\partial U(z_0,\delta)$, we have
  \begin{equation}\label{matchingcondition0}
   \widehat{\mathbf{P}}^{(0)}(z)=(\mathbf{I}+O(|x|^{-2\alpha}e^{-x^2}))\widehat {\mathbf{P}}^{(\infty)}(z)\quad \mathrm{as}\quad x\to-\infty.
  \end{equation}
  \item{(4)} $\widehat{\mathbf{P}}^{(0)}(z)$ has the same behavior  as $\widehat{\mathbf{U}}(z)$ near the origin; see \eqref{Ubaratzero}.
\end{description}

A solution to the above RH problem can be constructed explicitly as follows:
\begin{equation}\label{Pbar0}
\widehat{\mathbf{P}}^{(0)}(z)=\widehat{\mathbf{H}}(z)\begin{pmatrix}1 & 0\\ r(z) & 1 \end{pmatrix}
z^{-\alpha\sigma_3}
\left(z+\sqrt{2}\right)^{(\alpha+\frac{1}{2})\sigma_3}
\left(z-\sqrt{2}\right)^{(\alpha-\frac{1}{2})\sigma_3},
\end{equation}
where $\widehat{\mathbf{H}}(z)$ is given by \eqref{H(z)bar} and \eqref{ABexpres}. The function $r(z)$ is defined by
\begin{equation}\label{rexpre}
r(z)=\left\{\begin{aligned} &
-\frac{s_0|x|^{-2\alpha}e^{-2x^2\widehat{g}(z)}z^{2\alpha}}{(1+e^{-2\pi i\alpha})(z+\sqrt{2})^{2\alpha+1}(z-\sqrt{2})^{2\alpha-1}}, && |z|<\delta,~\Im z>0,\\
&
\frac{s_0|x|^{-2\alpha}e^{-2x^2\widehat{g}(z)}e^{2\pi i\alpha}z^{2\alpha}}{(1+e^{-2\pi i\alpha})(z+\sqrt{2})^{2\alpha+1}(z-\sqrt{2})^{2\alpha-1}}, && |z|<\delta,~\Im z<0,\end{aligned}\right.
\end{equation}
where the branches  are chosen such that $\arg z\in(0,2\pi)$ and $\arg(z\pm\sqrt{2})\in(0,2\pi)$.

\subsection{Final transformation}
The final transformation is now defined as
\begin{equation}\label{Rbar}
\widehat{\mathbf{R}}(z)=\left\{\begin{aligned}
&\widehat{\mathbf{U}}(z)\left[\widehat{\mathbf{P}}^{(l)}(z)\right]^{-1},\quad &z&\in U(z_-,\delta)\setminus\Sigma_{\widehat{\mathbf{U}}},\\
&\widehat{\mathbf{U}}(z)\left[\widehat{\mathbf{P}}^{(r)}(z)\right]^{-1},\quad &z&\in U(z_+,\delta)\setminus\Sigma_{\widehat{\mathbf{U}}},\\
&\widehat{\mathbf{U}}(z)\left[\widehat{\mathbf{P}}^{(0)}(z)\right]^{-1},\quad &z&\in U(z_0,\delta)\setminus\Sigma_{\widehat{\mathbf{U}}},\\
&\widehat{\mathbf{U}}(z)\left[\widehat{\mathbf{P}}^{(\infty)}(z)\right]^{-1},\quad  &\mathrm{e}&\mathrm{lsewhere}.\\ \end{aligned}
\right.
\end{equation}
Then, $\widehat{\mathbf{R}}(z)$ satisfies the following RH problem.

\subsection*{RH problem for $\widehat{\mathbf{R}}(z)$}
\begin{description}
\item{(1)} $\widehat{\mathbf{R}}(z)$ is analytic for $z\in \mathbb{C}\setminus\Sigma_{\widehat{\mathbf{R}}}$, where $\Sigma_{\widehat{\mathbf{R}}}$ is depicted in Figure \ref{Rbarjumppicture}.
\item{(2)} On the contour $\Sigma_{\widehat{\mathbf{R}}}$, we have $\widehat{\mathbf{R}}_+(z)=\widehat{\mathbf{R}}_-(z)J_{\widehat{\mathbf{R}}}(z)$, where
 \begin{equation}\label{Rbarjump}
 J_{\widehat{\mathbf{R}}}(z)=\left\{\begin{aligned}
&\widehat{\mathbf{P}}^{(l)}(z)\widehat{\mathbf{P}}^{(\infty)}(z)^{-1},\quad &z&\in \partial U(z_-,\delta),\\
&\widehat{\mathbf{P}}^{(r)}(z)\widehat{\mathbf{P}}^{(\infty)}(z)^{-1},\quad &z&\in \partial U(z_+,\delta),\\
&\widehat{\mathbf{P}}^{(0)}(z)\widehat{\mathbf{P}}^{(\infty)}(z)^{-1},\quad &z&\in \partial U(z_0,\delta)\\
&\widehat{\mathbf{P}}^{(\infty)}_-(z)J_{\widehat{\mathbf{U}}}(z)\widehat {\mathbf{P}}^{(\infty)}_+(z)^{-1},\quad &\mathrm{e}&\mathrm{lsewhere}.\end{aligned}
\right.
 \end{equation}
\item{(3)} As $z\rightarrow\infty$, $\widehat{\mathbf{R}}(z)$ admits the expansion
\begin{equation}\label{Rbarexp}
\widehat{\mathbf{R}}(z)=\mathbf{I}+\frac{\widehat{\mathbf{R}}_1}{z}
+\frac{\widehat{\mathbf{R}}_2}{z^2}+O(z^{-3}).
\end{equation}
\end{description}

\begin{figure}[t]
  \centering
  \includegraphics[width=13cm,height=7cm]{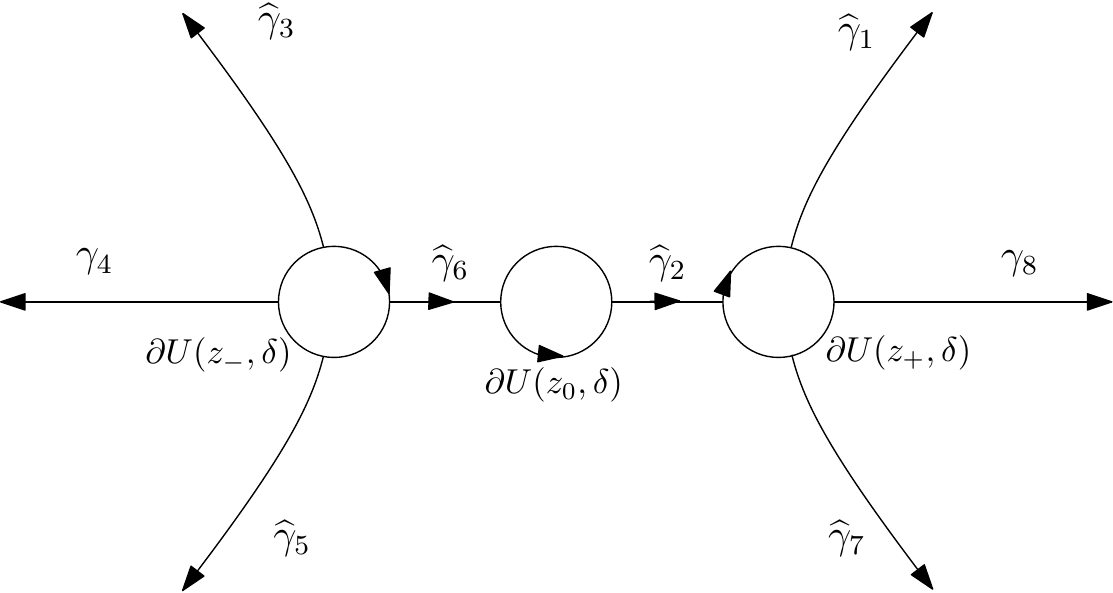}\\
  \caption{The jump contour $\Sigma_{\widehat{\mathbf{R}}}$}\label{Rbarjumppicture}
\end{figure}

By virtue of the matching conditions \eqref{matchingPr}, \eqref{matchingPl} and \eqref{matchingcondition0}, the jump matrix \eqref{Rbarjump} has the following estimations as $x\rightarrow-\infty$
\begin{equation}\label{Rbarjumpestima}
J_{\widehat{\mathbf{R}}}(z)=\left\{\begin{aligned}
&\mathbf{I}+O(|x|^{-2}), &z&\in\partial U(z_{\pm},\delta),\\
&\mathbf{I}+O(|x|^{-2\alpha}e^{-x^2}), &z&\in\partial U(z_0,\delta)\cup\widehat{\gamma}_2\cup\widehat{\gamma}_6,\\
&\mathbf{I}+O(e^{-c_{3}x^2}),&\mathrm{e}&\mathrm{lsewhere},
\end{aligned}\right.
\end{equation}
where $c_3$ is some positive constant. As a result, we get that for any $z\in\mathbb{C}\setminus\Sigma_{\widehat{\mathbf{R}}}$,
\begin{equation}\label{Rbarexpans}
\widehat{\mathbf{R}}(z)=\mathbf{I}+O(|x|^{-2}),\quad \mathrm{as}\quad x\rightarrow-\infty.
\end{equation}

\section{Proof of Theorems \ref{thm1} - \ref{thm3}}\label{proof1}
To derive  the asymptotics for the PIV solution $q(x;\alpha,\kappa)$ and the associated Hamiltonian $\mathcal{H}(x;\alpha,\kappa)$ as $x\rightarrow-\infty$, we use the identities \eqref{solu2} and \eqref{tau1} and the asymptotic analysis of the RH problem  for $\Psi(\xi,x)$
performed in Sections \ref{Asymptotic-infty1}-\ref{Asymptotic-infty3}.

\subsection{Derivations of \eqref{q1} and \eqref{H1}}\label{Derivation1}
Let us focus on the case $\kappa(\kappa-\kappa^*)<0$. Tracing back the series of invertible transformations $$\Phi\mapsto \mathbf{U}\mapsto \mathbf{T}\mapsto \mathbf{R}$$
as defined in \eqref{U(z)}, \eqref{T(z)} and \eqref{R}), respectively, we obtain that for large $z$
\begin{equation*}
e^{\frac{x^2}{3}\sigma_3}\Phi(z)e^{-x^2g(z)\sigma_3}
=\mathbf{R}(z)\mathbf{P}^{(\infty)}(z).
\end{equation*}
It follows from the asymptotic expansions \eqref{Asyatinfty1} and \eqref{gatinfty} that
\begin{equation}\label{RHanalysis}
e^{\frac{x^2}{3}\sigma_3}\Phi(z)e^{-x^2g(z)\sigma_3}z^{-\alpha\sigma_3}
=\mathbf{I}+\frac{e^{\frac{x^2}{3}\sigma_3}\Phi_1e^{-\frac{x^2}{3}\sigma_3}}{z}
+\frac{e^{\frac{x^2}{3}\sigma_3}\Phi_2e^{-\frac{x^2}{3}\sigma_3}-\frac{4x^2}{27}
\sigma_3}{z^2}+O(z^{-3}).
\end{equation}
To compute the coefficients $\Phi_1$ and $\Phi_2$, we need to know the asymptotics of $\mathbf{R}(z)$ and $\mathbf{P}^{(\infty)}(z)$ as $z\to\infty$. The large-$z$ asymptotics of $\mathbf{R}(z)$ has  already  been given in \eqref{Rexpan}. As for $\mathbf{P}^{(\infty)}(z)$, we obtain from \eqref{Pinfty}, \eqref{X(z)atinfty} and \eqref{fatinfty} that
\begin{equation}\label{Pinftyatinfty}
\mathbf{P}^{(\infty)}(z)z^{-\alpha\sigma_3}=\mathbf{I}+\frac{\mathbf{P}_1^{(\infty)}}{z}
+\frac{\mathbf{P}_2^{(\infty)}}{z^2}+O\left(\frac{1}{z^3}\right),~~~\mbox{as}~ ~ z\rightarrow\infty,
\end{equation}
where
\begin{equation}\label{P1P2}
\mathbf{P}_1^{(\infty)}=\sqrt{\frac{2}{3}}s_0^{-\frac{\sigma_3}{2}}f_{\infty}^{-\sigma_3}
\sigma_2f_{\infty}^{\sigma_3}s_0^{\frac{\sigma_3}{2}},\quad \mathbf{P}_2^{(\infty)}=\frac{\mathbf{I}}{3}
-\frac{2}{3}\left(\alpha+\sqrt{3}i{\beta}\right)\sigma_3.
\end{equation}
A combination of \eqref{Rexpan}, \eqref{RHanalysis} and \eqref{Pinftyatinfty} gives
\begin{equation}\label{Phi1}
\Phi_1=e^{-\frac{x^2}{3}\sigma_3}\left[\mathbf{R}_1+\mathbf{P}_1^{(\infty)}\right]
e^{\frac{x^2}{3}\sigma_3}
\end{equation}
and
\begin{equation}\label{Phi2}
\Phi_2=e^{-\frac{x^2}{3}\sigma_3}\left[\mathbf{R}_1\mathbf{P}_1^{(\infty)}
+\mathbf{R}_2+\mathbf{P}_2^{(\infty)}\right]e^{\frac{x^2}{3}\sigma_3}
+\frac{4x^2}{27}\sigma_3.
\end{equation}
Using formulas \eqref{solu2}, \eqref{tau1} and combining them with \eqref{Rexpan}, \eqref{RHanalysis}, \eqref{P1P2}, \eqref{Phi1} and \eqref{Phi2}, we get the following expressions for $q(x;\alpha,\kappa)$ and $\mathcal{H}(x;\alpha,\kappa)$
\begin{equation}\label{qexpre2}
q(x;\alpha,\kappa)=-\frac{2}{3}x-i\sqrt{\frac{2}{3}}x\Big[s_0f_{\infty}^2(\mathbf{R}_1)_{12}
-s_0^{-1}f_{\infty}^{-2}(\mathbf{R}_1)_{21}\Big]-x(\mathbf{R}_1)_{12}(\mathbf{R}_1)_{21}
\end{equation}
and
\begin{align}\label{Hexpres}
\mathcal{H}(x;\alpha,\kappa)=&-\frac{8}{27}x^3+\frac{4}{3}(\alpha
+\sqrt{3}i{\beta})x-x\sqrt{\frac{2}{3}}i\Big[s_0f_{\infty}^2(\mathbf{R}_1)_{12}
+ s_0^{-1}f_{\infty}^{-2}(\mathbf{R}_1)_{21}\Big]\nonumber \\
&-x\big[(\mathbf{R}_2)_{11}-(\mathbf{R}_2)_{22}\big].
\end{align}

Thus, the remaining task is to compute the asymptotics of $\mathbf{R}_1$ and $\mathbf{R}_2$. It is observed
from \eqref{JRestimation} and \eqref{Restimation} that, neglecting a uniform error term $O(|x|^{-2})$, the contribution to $\mathbf{R}(z)-\mathbf{I}$  comes from the jumps on $\partial U(z_{1,\pm},\delta)$. Therefore, it follows from \eqref{matchingcondition+}, \eqref{matchingcondition-} and \eqref{JumpR} that as $x\to-\infty$,
\begin{equation}\label{Rexpanrevisited}
\mathbf{R}(z)=\mathbf{I}+\frac{\mathbf{R}^{(1)}(z)}{|x|}+O\left(\frac{1}{|x|^2}\right),\quad \mathrm{for} \quad z\in\partial U(z_{1,\pm},\delta),
\end{equation}
where the coefficient $\mathbf{R}^{(1)}(z)=O(z^{-1})$ as $z\rightarrow\infty$, and satisfies the jump relation
\begin{equation}
\mathbf{R}^{(1)}_+(z)-\mathbf{R}^{(1)}_-(z)=\mathbf{G}(z),\quad z\in\partial U(z_{1,\pm},\delta),
\end{equation}
in which $\mathbf{G}(z)=\mathbf{G}^{(\pm)}(z)$ are given by \eqref{G+} and \eqref{G-}, respectively. Using the Sokhotskii-Plemelj formula and keeping in mind the clockwise orientations of the boundaries $\partial U(z_{1,\pm},\delta)$, the solution to the above RH problem is explicitly given by
\begin{equation}\label{R(1)(z)}
\mathbf{R}^{(1)}(z)=\left\{\begin{aligned}
&\frac{L_+}{z-\sqrt{\frac{2}{3}}}+\frac{L_-}{z+\sqrt{\frac{2}{3}}},&z&\notin \overline{U(z_{1,+},\delta)}\cup \overline{U(z_{1,-},\delta)},\\
&\frac{L_+}{z-\sqrt{\frac{2}{3}}}+\frac{L_-}{z+\sqrt{\frac{2}{3}}}-\mathbf{G}(z),
&z&\in U(z_{1,+},\delta)\cup U(z_{1,-},\delta),
\end{aligned}\right.
\end{equation}
where $L_{\pm}=\mathrm{Res}\,\left(\mathbf{G}^{(\pm)}(z),z=\pm\sqrt{\frac{2}{3}}\;\right)$ are given by
\begin{align}\label{res1}
L_+&=\mathbf{W}^{(+)}(\textstyle\sqrt{\frac{2}{3}})\begin{pmatrix}0 & -\frac{{\beta} h_0e^{i\frac{\sqrt{3}}{3}x^2}|x|^{2{\beta}}}
{2\cdot 3^{-\frac{1}{4}}e^{\frac{3\pi i}{4}}s_*}\\ -\frac{s_*|x|^{-2{\beta}}}
{2\cdot 3^{-\frac{1}{4}}e^{\frac{3\pi i}{4}}h_0e^{ i\frac{\sqrt{3}}{3}x^2}} & 0 \end{pmatrix}\mathbf{W}^{(+)}(\sqrt{\frac{2}{3}})^{-1},\\
L_-&=\mathbf{W}^{(-)}(-\textstyle\sqrt{\frac{2}{3}})\begin{pmatrix}0 & \frac{{\beta} h_0e^{i\frac{\sqrt{3}}{3}x^2}|x|^{2{\beta}}}
{2\cdot 3^{-\frac{1}{4}}e^{\frac{3\pi i}{4}}s_*}\\ \frac{s_*|x|^{-2{\beta}}}{2\cdot 3^{-\frac{1}{4}}e^{\frac{3\pi i}{4}}h_0e^{i\frac{\sqrt{3}}{3}x^2}} & 0 \end{pmatrix}\mathbf{W}^{(-)}(-\textstyle\sqrt{\frac{2}{3}})^{-1},\label{res2}
\end{align}
with $\mathbf{W}^{(\pm)}(\pm\sqrt{\frac{2}{3}})$ given by \eqref{W+atz+}) and \eqref{W-atz-}, respectively.

Using \eqref{R(1)(z)} to expand $\mathbf{R}^{(1)}(z)$ in \eqref{Rexpanrevisited}  into the Taylor series at infinity, we obtain the asymptotics for the coefficients $\mathbf{R}_1$ and $\mathbf{R}_2$ in the expansion \eqref{Rexpan}:
\begin{equation}\label{R1R2}
\mathbf{R}_1=\frac{\mathbf{R}_1^{(1)}}{|x|}+O(|x|^{-2}),\quad \mathbf{R}_2=\frac{\mathbf{R}_2^{(1)}}{|x|}+O(|x|^{-2}),\quad  \mathrm{as} \quad x\rightarrow-\infty.
\end{equation}
 The coefficients $\mathbf{R}_1^{(1)}=L_++L_-$ and $\mathbf{R}_2^{(1)}=\sqrt{\frac{2}{3}} \left(L_+-L_-\right)$, after a direct computation from \eqref{res1} and \eqref{res2}, are explicitly given by
\begin{equation}\label{R1(1)}
\mathbf{R}_1^{(1)}=\begin{pmatrix}
0 & \left(C_1-C_2+\frac{i}{\sqrt{3}}(C_1+C_2)\right)s_0^{-1}f_{\infty}^{-2}\\
\left(C_2-C_1+\frac{i}{\sqrt{3}}(C_1+C_2)\right)s_0f_{\infty}^{2} & 0
\end{pmatrix},
\end{equation}
and
\begin{equation}\label{R2(1)}
\mathbf{R}_2^{(1)}=\begin{pmatrix}
\frac{2\sqrt{2}}{3}(C_1+C_2) & 0\\
0 & -\frac{2\sqrt{2}}{3}(C_1+C_2)
\end{pmatrix},
\end{equation}
with
\begin{equation}\label{C1C2}
C_1=\frac{{\beta} h_0e^{\frac{i\sqrt{3}}{3}x^2}2^{{\beta}-1}3^{\frac{2{\beta}+1}{4}}}{s_*
e^{\frac{3\pi i}{4}+\frac{2\pi i\alpha}{3}+\frac{\pi i{\beta}}{2}}}|x|^{2{\beta}},\qquad
C_2=\frac{s_*e^{-\frac{3\pi i}{4}+\frac{2\pi i\alpha}{3}+\frac{\pi i{\beta}}{2}}}{h_0e^{\frac{i\sqrt{3}}{3}x^2}
2^{{\beta}+1}3^{\frac{2{\beta}-1}{4}}}|x|^{-2{\beta}}.
\end{equation}

Now, we are ready to derive the asymptotics for $q(x;\alpha,\kappa)$ and $\mathcal{H}(x;\alpha,\kappa)$. Substituting \eqref{R1R2}, \eqref{R1(1)}, \eqref{R2(1)} and \eqref{C1C2} into \eqref{qexpre2} and \eqref{Hexpres} yields
\begin{align}\label{qexpre3}
q(x;\alpha,\kappa)&=-\frac{2}{3}x+2^{\frac{1}{2}}3^{-\frac{1}{4}}e^{-\frac{\pi i}{4}}\Big[{\beta} h_0s_*^{-1}e^{\frac{\sqrt{3}x^2i}{3}-\frac{2\pi i\alpha}{3}-\frac{\pi i{\beta}}{2}+{\beta}\ln\left(2\sqrt{3}x^2\right)}\nonumber\\
&\qquad\qquad\ -h_0^{-1}s_*e^{-\frac{\sqrt{3}x^2i}{3}+\frac{2\pi i\alpha}{3}+\frac{\pi i{\beta}}{2}-{\beta}\ln\left(2\sqrt{3}x^2\right)}\Big]+O(|x|^{-1}),\quad \mathrm{as}\quad x\rightarrow-\infty,
\end{align}
and
\begin{align}\label{Hexpres1}
\mathcal{H}(x;\alpha,\kappa)&=-\frac{8}{27}x^3
+\frac{4}{3}\left(\alpha+\sqrt{3}i{\beta}\right)x
+2^{\frac{1}{2}}3^{-\frac{3}{4}}e^{-\frac{3\pi i}{4}}\Big[{\beta} h_0s_*^{-1}e^{\frac{\sqrt{3}x^2i}{3}-\frac{2\pi i\alpha}{3}-\frac{\pi i{\beta}}{2}+{\beta}\ln\left(2\sqrt{3}x^2\right)}\nonumber\\
&\qquad\qquad\ +h_0^{-1}s_*e^{-\frac{\sqrt{3}x^2i}{3}+\frac{2\pi i\alpha}{3}+\frac{\pi i{\beta}}{2}-{\beta}\ln\left(2\sqrt{3}x^2\right)}\Big]+O(|x|^{-1}),\quad \mathrm{as}\quad x\rightarrow-\infty.
\end{align}
Recalling the definition \eqref{h0} of the Stokes multiplier $h_0$ and using the reflection formula
\begin{equation}\label{refleformu}
\Gamma(1-{\beta})\Gamma({\beta})=\frac{\pi}{\sin(\pi{\beta})},
\end{equation}
we can rewrite \eqref{qexpre3} and \eqref{Hexpres1} in the following symmetric form
\begin{align}\label{qexpre4}
q(x;\alpha,\kappa)=&-\frac{2}{3}x+3^{-\frac{1}{4}}\pi^{-\frac{1}{2}}
e^{\frac{\pi i{\beta}}{2}}{\beta}|s_*|\Big[\Gamma(-{\beta})
e^{\frac{\sqrt{3}x^2i}{3}-\frac{2\pi i\alpha}{3}-\frac{\pi i}{4}-i\arg s_*+{\beta}\ln\left(2\sqrt{3}x^2\right)}\nonumber\\
&- \Gamma({\beta})e^{-\frac{\sqrt{3}x^2i}{3}+\frac{2\pi i\alpha}{3}+\frac{\pi i}{4}+i\arg s_*-{\beta}\ln\left(2\sqrt{3}x^2\right)}\Big]
+O(|x|^{-1}),\quad \mathrm{as}\quad x\rightarrow-\infty,
\end{align}
and
\begin{align}\label{Hexpres2}
\mathcal{H}(x;\alpha,\kappa)=&-\frac{8}{27}x^3
+\frac{4}{3}\left(\alpha+\sqrt{3}i{\beta}\right)x\nonumber\\
&-3^{-\frac{3}{4}}\pi^{-\frac{1}{2}}e^{\frac{\pi i{\beta}}{2}}{\beta}|s_*|\Big[\Gamma(-{\beta})
e^{\frac{\sqrt{3}x^2i}{3}-\frac{2\pi i\alpha}{3}+\frac{\pi i}{4}-i\arg s_*+{\beta}\ln\left(2\sqrt{3}x^2\right)}\nonumber\\
& - \Gamma({\beta})e^{-\frac{\sqrt{3}x^2i}{3}+\frac{2\pi i\alpha}{3}-\frac{\pi i}{4}+i\arg s_*-{\beta}\ln\left(2\sqrt{3}x^2\right)}\Big]
+O(|x|^{-1}),\quad \mathrm{as}\quad x\rightarrow-\infty.
\end{align}
Remember that ${\beta}$ in the case under consideration is purely imaginary (see \eqref{nuimag}), we thus have the following complex conjugate relation
\begin{equation*}
\overline{\Gamma({\beta})}=\Gamma(-{\beta}).
\end{equation*}
Denote by
\begin{equation}\label{a}
a^2=2i{\beta}=-\frac{1}{\pi}\ln(1-|s_*|^2),\quad  a>0.
\end{equation}
It is seen from \eqref{nu}, \eqref{refleformu} and \eqref{a} that
\begin{equation}\label{gammanu}
|\Gamma({\beta})|^2=\Gamma({\beta})\Gamma(-{\beta})
=-\frac{\pi}{{\beta}\sin\pi{\beta}}
=-\frac{2\pi i}{{\beta}e^{\pi i\beta}(1-e^{-2\pi i{\beta}})}
=4\pi |s_*|^{-2}a^{-2}e^{-\frac{1}{2}\pi a^2}.
\end{equation}
By substituting \eqref{a}, \eqref{gammanu} into \eqref{qexpre4} and \eqref{Hexpres2}, we arrive at the final formulas \eqref{q1} and \eqref{H1} for the case $0<|s_*|<1$.

Similarly, by using  the asymptotic analysis of the RH problem  for $\Psi(\xi,x)$
performed in Section \ref{sec:szero}, we obtain in the special case $|s_{*}|=0$, the following asymptotic formulas
\begin{align}\label{q0}
q(x;\alpha,\kappa)&=-\frac{2}{3}x+O(x^{-1}), \\
\mathcal{H}(x;\alpha,\kappa)&=-\frac{8}{27}x^3+\frac{4\alpha}{3}x+O(x^{-1}),\label{H0}
\end{align}
as $x\rightarrow-\infty$. Notice that it follows from \eqref{nu} that ${\beta}=0$ if $|s_*|=0$. Therefore, \eqref{q0} and \eqref{H0} can be respectively regarded as the limits of \eqref{q1} and \eqref{H1} as $b_1\rightarrow0$.

\subsection{Derivations of \eqref{q3} and \eqref{H3}}
Tracing back the above modified RH analysis for the case $\kappa(\kappa-\kappa^*)<0$, it follows from \eqref{U(z)}, \eqref{T(z)} and \eqref{Rtilde} that
\begin{equation}\label{RHanalysis1}
e^{\frac{x^2}{3}\sigma_3}\Phi(z)
e^{-x^2g(z)\sigma_3}=\widetilde{\mathbf{R}}(z)\widetilde{\mathbf{P}}^{(\infty)}(z),
\end{equation}
for large $z$. Accordingly, to compute the coefficients $\Phi_1$ and $\Phi_2$ in expansion \eqref{Asyatinfty1}, we need the asymptotic approximation of $\widetilde{\mathbf{P}}^{(\infty)}(z)$ as $z\rightarrow\infty$. By \eqref{H} and \eqref{Pinftyatinfty}, we have
\begin{equation}\label{Ptildeatinfty}
\widetilde{\mathbf{P}}^{(\infty)}(z)z^{-\alpha\sigma_3}=\mathbf{I}
+\frac{\widetilde{\mathbf{P}}_1^{(\infty)}}{z}
+\frac{\widetilde{\mathbf{P}}_2^{(\infty)}}{z^2}
+O\left(\frac{1}{z^3}\right),\quad \mathrm{as} \quad z\rightarrow\infty,
\end{equation}
where
\begin{equation}\label{Ptilde1Ptilde2}
\widetilde{\mathbf{P}}_1^{(\infty)}=\mathbf{P}_1^{(\infty)}+\widetilde{A}+\widetilde{B},
\qquad \widetilde{\mathbf{P}}_2^{(\infty)}=(\widetilde{A}+\widetilde{B})\mathbf{P}_1^{(\infty)}
+\sqrt{\frac{2}{3}}(\widetilde{A}-\widetilde{B})+\mathbf{P}_2^{(\infty)},
\end{equation}
with $\widetilde{A}$, $\widetilde{B}$ and $\mathbf{P}_1^{(\infty)}$, $\mathbf{P}_2^{(\infty)}$ given by \eqref{reptildeAB} and \eqref{P1P2}, respectively. Using \eqref{Rtildeexp}, \eqref{RHanalysis1} and \eqref{Ptildeatinfty}, we now obtain
\begin{equation}\label{Phi11}
e^{\frac{x^2}{3}\sigma_3}\Phi_1e^{-\frac{x^2}{3}\sigma_3}
=\mathbf{P}_1^{(\infty)}+\widetilde{\mathbf{R}}_1+\widetilde{A}+\widetilde{B},
\end{equation}
and
\begin{equation}\label{Phi22}
e^{\frac{x^2}{3}\sigma_3}\Phi_2e^{-\frac{x^2}{3}\sigma_3}=\frac{4x^2}{27}\sigma_3
+\widetilde{\mathbf{R}}_1(\widetilde{A}+\widetilde{B}+\mathbf{P}_1^{(\infty)})
+\widetilde{\mathbf{R}}_2+(\widetilde{A}+\widetilde{B})\mathbf{P}_1^{(\infty)}
+\sqrt{\frac{2}{3}}(\widetilde{A}-\widetilde{B})+\mathbf{P}_2^{(\infty)}.
\end{equation}
Inserting \eqref{Phi11}, \eqref{Phi22} into the formulas \eqref{solu2} and \eqref{tau1}, we obtain the following expressions for $q(x;\alpha,\kappa)$ and $\mathcal{H}(x;\alpha,\kappa)$
\begin{align}\label{qexpre6}
q(x;\alpha,\kappa)&=-x\,\Big[(\mathbf{P}^{(\infty)}_1)_{12}+\widetilde{A}_{12}
+\widetilde{B}_{12}+(\widetilde{\mathbf{R}}_1)_{12}\Big]\nonumber\\
&\qquad\qquad\qquad\qquad
\times\Big[(\mathbf{P}^{(\infty)}_1)_{21}+\widetilde{A}_{21}+\widetilde{B}_{21}
+(\widetilde{\mathbf{R}}_1)_{21}\Big],\\
\mathcal{H}(x;\alpha,\kappa)&=-x\,\bigg\{\frac{8}{27}x^2+\left[\widetilde {\mathbf{R}}_1(\widetilde{A}+\widetilde{B}+\mathbf{P}_1^{(\infty)})\right]_{11}
-\left[\widetilde {\mathbf{R}}_1(\widetilde{A}+\widetilde{B}+\mathbf{P}_1^{(\infty)})\right]_{22}\nonumber\\
&\qquad\qquad +(\widetilde{\mathbf{R}}_2)_{11}-(\widetilde {\mathbf{R}}_2)_{22}+\left[(\widetilde{A}+\widetilde{B})\mathbf{P}_1^{(\infty)}\right]_{11}
-\left[(\widetilde{A}+\widetilde{B})\mathbf{P}_1^{(\infty)}\right]_{22}\nonumber\\
&\qquad \qquad\qquad +\sqrt{\frac{2}{3}}(\widetilde{A}_{11}-\widetilde{A}_{22}
+\widetilde{B}_{22}-\widetilde{B}_{11})
+(\mathbf{P}_2^{(\infty)})_{11}-(\mathbf{P}_2^{(\infty)})_{22}\bigg\}.\label{Hexpres3}
\end{align}

Next, we compute the asymptotics of $\widetilde{\mathbf{R}}_1$ and $\widetilde{\mathbf{R}}_2$ as $x\rightarrow-\infty$. In view of the error estimate \eqref{Rtildeexpx} and following similar analysis as we have done in Section \ref{Derivation1}, we easily get that
\begin{equation}\label{tildeR1asym}
\widetilde{\mathbf{R}}_1=O(|x|^{-2}),\qquad
\widetilde{\mathbf{R}}_2=O(|x|^{-2}),\quad\mathrm{as}\quad x\rightarrow-\infty.
\end{equation}
Now, substituting \eqref{reptildeAB}, \eqref{P1P2} and \eqref{tildeR1asym} into \eqref{qexpre6} and \eqref{Hexpres3} gives
\begin{align}\label{qexpre7}
q(x;\alpha,\kappa)&=-2x-4\sqrt{\frac{2}{3}}i\cdot\frac{l-\sqrt{\frac{2}{3}}i}{l^2-2}x
+O(x^{-1}),\\
\mathcal{H}(x;\alpha,\kappa)&=-\frac{8}{27}x^3
+\frac{4}{3}\left(\alpha+\sqrt{3}i{\beta}\right)x
+i\frac{4l}{\sqrt{3}}x\cdot\frac{l-\sqrt{\frac{2}{3}}i}{l^2-2}+O(x^{-1}),\label{Hexpres4}
\end{align}
where the error terms are uniform for $x$ bounded away from the zeros of $l^2-2$. From the definition of the function $l$  in \eqref{c}, we may rewrite
\begin{equation*}
\frac{l-\sqrt{\frac {2}{3}}i}{l^2-2}
=\frac{i}{4}\sqrt{\frac{2}{3}}\frac{(2+e^{i\phi})(1+2e^{i\phi})}
{1+e^{i\phi}+e^{2i\phi}}=\frac{i}{4}\sqrt{\frac{2}{3}}
\left(2+\frac{3}{2\cos\phi+1}\right).
\end{equation*}
By inserting the above equation into the expressions \eqref{qexpre7} and \eqref{Hexpres4},  we arrive at the  formulas \eqref{q3} and \eqref{H3}.

\subsection{Derivations of \eqref{q2} and \eqref{H2}}
We now consider the case $\kappa=\kappa^*$. By tracing back the transformations \eqref{Ubar} and \eqref{Rbar}, it follows that for large $z$
\begin{equation}\label{largez}
e^{\frac{x^2}{2}\sigma_3}|x|^{\alpha\sigma_3}\Phi(z)|x|^{-\alpha\sigma_3}
e^{-x^2\widehat{g}(z)\sigma_3}=\widehat{\mathbf{R}}(z)\widehat{\mathbf{P}}^{(\infty)}(z).
\end{equation}
To compute the coefficients $\Phi_1$ and $\Phi_2$ in expansion \eqref{Asyatinfty1}, we need to write down the asymptotic  approximation  of $\widehat{\mathbf{P}}^{(\infty)}(z)$ as $z\rightarrow\infty$. By \eqref{Pbarinfty}, \eqref{H(z)bar} and \eqref{ABexpres}, we get that as $z\rightarrow\infty$
\begin{equation}\label{Pbaratinfty}
\widehat{\mathbf{P}}^{(\infty)}(z)z^{-\alpha\sigma_3}=
\mathbf{I}+\frac{\widehat{\mathbf{P}}_1^{(\infty)}}{z}
 +\frac{\widehat{\mathbf{P}}_2^{(\infty)}}{z^2}+O\left(\frac{1}{z^3}\right),
\end{equation}
where
\begin{equation}\label{Pbar1Pbar2}
\widehat{\mathbf{P}}_1^{(\infty)}=\begin{pmatrix} 0 & -\frac{1}{2}c_r \\ -\frac{1}{2}c_l & 0\end{pmatrix},\qquad
\widehat{\mathbf{P}}_2^{(\infty)}=\begin{pmatrix} 1-2\alpha & 0 \\ 0 & 1+2\alpha\end{pmatrix}
\end{equation}
with $c_r$, $c_l$ given by \eqref{cr} and \eqref{cl}, respectively.

From \eqref{largez} and the large-$z$  expansions \eqref{Rbarexp} and \eqref{Pbaratinfty}, we have
\begin{equation}\label{Phi111}
\Phi_1=e^{-\frac{x^2}{2}\sigma_3}|x|^{-\alpha\sigma_3}
\left(\widehat{\mathbf{P}}_1^{(\infty)}
+\widehat{\mathbf{R}}_1\right)|x|^{\alpha\sigma_3}e^{\frac{x^2}{2}\sigma_3}
\end{equation}
and
\begin{equation}\label{Phi222}
\Phi_2=e^{-\frac{x^2}{2}\sigma_3}|x|^{-\alpha\sigma_3}
\left(\widehat{\mathbf{P}}_2^{(\infty)}
+\widehat{\mathbf{P}}_1^{(\infty)}\widehat{\mathbf{R}}_1
+\widehat{\mathbf{R}}_2\right)|x|^{\alpha\sigma_3}e^{\frac{x^2}{2}\sigma_3}.
\end{equation}
Now, recalling \eqref{solu2} and \eqref{tau1}, we  obtain from \eqref{Pbar1Pbar2}, \eqref{Phi111} and \eqref{Phi222} that
\begin{equation}\label{qexpre5}
q(x;\alpha,\kappa)=-x\left[-\frac{c_r}{2}+(\widehat{\mathbf{R}}_1)_{12}\right]
\left[-\frac{c_l}{2}+(\widehat{\mathbf{R}}_1)_{21}\right]
\end{equation}
and
\begin{equation}\label{Hexpres5}
\mathcal{H}(x;\alpha,\kappa)=-x\bigg\{-4\alpha-\frac{1}{2}
\left[c_r(\widehat{\mathbf{R}}_1)_{21}-c_l(\widehat{\mathbf{R}}_1)_{12}\right]
+(\widehat{\mathbf{R}}_2)_{11}-(\widehat{\mathbf{R}}_2)_{22}\bigg\}.
\end{equation}
In view of the estimation \eqref{Rbarexpans}, we easily obtain the following asymptotics for $\widehat{\mathbf{R}}_1$ and $\widehat{\mathbf{R}}_2$ as $x\rightarrow-\infty$
\begin{equation}\label{Rbar1Rbar2asymp}
\widehat{\mathbf{R}}_1=O(|x|^{-2}),\quad \widehat{\mathbf{R}}_2=O(|x|^{-2}).
\end{equation}
Substituting the asymptotics \eqref{Rbar1Rbar2asymp} into the expressions \eqref{qexpre5} and \eqref{Hexpres5} and using the fact that $c_rc_l=8$, we obtain \eqref{q2} and \eqref{H2}.

\subsection{Derivation of \eqref{Hasymp+infty} }
It is seen from \eqref{qAsy} and \eqref{eq:dH} that
\begin{equation}\label{eq:dHAsy}
\frac{d}{dx}\mathcal{H}(x;\alpha,\kappa)=\kappa\, 2^{\alpha+\frac{1}{2}}x^{2\alpha}e^{-x^2}\left(1+O\left(x^{-2}\right)\right), \quad \mathrm{as}~~x\to+\infty.
\end{equation}
Moreover, it follows from \eqref{qAsy} and \eqref{eq:Hq} that
\begin{equation}\label{eq:Hlimit} \mathcal{H}(x;\alpha,\kappa)\to0, \quad \mathrm{as}~~x\to+\infty. \end{equation}
Thus,  integrating    both sides of \eqref{eq:dHAsy} and using the boundary condition
\eqref{eq:Hlimit}, we obtain \eqref{Hasymp+infty}.

\subsection{Derivations of \eqref{integral-1} and \eqref{integral-2}}\label{proof2}

Recalling from \eqref{Fequation} and using the fact that $F(x)\to \mathbf{I}$ as $x\to+\infty$,  as can be seen from the asymptotic analysis of $\Psi(\xi,x)$ as $x\to+\infty$ performed in \cite{IK},  we have
\begin{equation}\label{eq:qInt}
F(x):=\Psi^{(0)}(0,x)=\exp\left(\int_{+\infty}^xq(t)dt\sigma_3\right),
\end{equation}
with  $\Psi^{(0)}(\xi,x)$ given in  \eqref{Asyatzero}.
Hence,
the evaluation of the total integral of $q(x)$ is  boiled down  to computing the asymptotic of $F(x)$ as $x\to-\infty$.
In the remaining part of this section, we derive the  asymptotics of $F(x)$ as $x\to-\infty$ and prove Theorem \ref{thm3} based on the asymptotic analysis of the RH problem for $\Psi(\xi,x)$
performed in Sections \ref{Asymptotic-infty1} and \ref{Asymptotic-infty3}.

\subsubsection{Derivation of \eqref{integral-1}}
In the case $\kappa(\kappa-\kappa^{*})<0$, by inverting the transformations \eqref{rescaling}, \eqref{U(z)}, \eqref{T(z)} and \eqref{R}, we obtain that for $z$ small and $\arg z\in(\frac{\pi}{4},\frac{\pi}{2})$,
\begin{align}\label{Psiexpress2}
\Psi(|x|^{\frac{1}{2}}z,x)
&=e^{-\frac{x^2}{3}\sigma_3}|x|^{\frac{\alpha}{2}\sigma_3}
\mathbf{R}(z)\mathbf{P}^{(0)}(z)\begin{pmatrix}1 & 0\\ \frac{\overline{s}_*e^{-2\pi i\alpha}e^{-2x^2g(z)}}
{s_1(e^{-2\pi i\alpha}+s_*)} & 1 \end{pmatrix}e^{x^2g(z)\sigma_3}\nonumber\\
&=e^{-\frac{x^2}{3}\sigma_3}|x|^{\frac{\alpha}{2}\sigma_3}
\mathbf{R}(z)\mathbf{E}^{(0)}(z)\Phi^{(\mathrm{Bes})}(x^2\varphi_4(z))\nonumber\\
&\qquad\qquad\qquad \times\begin{pmatrix}1 & 0\\ -e^{-2\pi i\alpha} & 1 \end{pmatrix}\left[s_1(e^{-2\pi i\alpha}+s_*)\right]^{-\frac{\sigma_3}{2}}\begin{pmatrix}1 & 0\\ \frac{\bar{s}_*e^{-2\pi i\alpha}}
{s_1(e^{-2\pi i\alpha}+s_*)} & 1 \end{pmatrix}.
\end{align}
It is readily seen from \eqref{Asyatzero} and \eqref{Psiexpress2} that
\begin{align}\label{Fexpress6}
F(x)&=e^{-\frac{x^2}{3}\sigma_3}|x|^{\frac{\alpha}{2}\sigma_3}
\mathbf{R}(0)\mathbf{E}^{(0)}(0)\lim\limits_{z\to0}\Phi^{(\mathrm{Bes})}(x^2\varphi_4(z))
\begin{pmatrix}1 & 0\\ -e^{-2\pi i\alpha} & 1 \end{pmatrix}\nonumber\\
&\qquad\qquad\qquad \times\left[s_1(e^{-2\pi i\alpha}+s_*)\right]^{-\frac{\sigma_3}{2}}\begin{pmatrix}1 & 0\\ \frac{\bar{s}_*e^{-2\pi i\alpha}}{s_1(e^{-2\pi i\alpha}+s_*)} & 1 \end{pmatrix}S_1^{-1}E_0^{-1}z^{-\alpha\sigma_3}|x|^{-\frac{\alpha}{2}\sigma_3}.
\end{align}
By \eqref{E0}, \eqref{varphi4beha} and \eqref{BesParaExpand}, we have
\begin{align}\label{limit3}
&\lim\limits_{z\to0}\Phi^{(\mathrm{Bes})}(x^2\varphi_4(z))\begin{pmatrix}1 & 0\\ -e^{-2\pi i\alpha} & 1 \end{pmatrix}[s_1(e^{-2\pi i\alpha}+s_*)]^{-\frac{\sigma_3}{2}}\begin{pmatrix}1 & 0\\ \frac{\bar{s}_*e^{-2\pi i\alpha}}{s_1(e^{-2\pi i\alpha}+s_*)} & 1 \end{pmatrix}S_1^{-1}E_0^{-1}z^{-\alpha\sigma_3}\nonumber\\
&\qquad=C_{\alpha}^{\sigma_3}\begin{pmatrix}0 & -1\\ 1 & 0 \end{pmatrix}2^{\frac{3}{2}\alpha\sigma_3}3^{-\frac{3}{2}\alpha\sigma_3}
s_1^{-\frac{\sigma_3}{2}}(e^{-2\pi i\alpha}+s_*)^{\frac{\sigma_3}{2}}(1+e^{-2\pi i\alpha})^{-\sigma_3}|x|^{2\alpha\sigma_3},
\end{align}
where the constant $C_{\alpha}$ is given by \eqref{Calpha}. Recalling the definition \eqref{E(0)} of $\mathbf{E}^{(0)}(z)$, it follows from \eqref{Pinfty}, \eqref{finfty} and \eqref{nu} that
\begin{align}\label{E(0)at0}
\mathbf{E}^{(0)}(0)&=\frac{1}{2}s_0^{-\frac{\sigma_3}{2}}2^{\frac{\alpha}{2}\sigma_3}
3^{-\frac{\alpha}{2}\sigma_3}e^{-\frac{\pi i{\beta}}{3}\sigma_3}
\begin{pmatrix}1 & -1\\ 1 & 1 \end{pmatrix}e^{-\frac{3\pi i}{4}\sigma_3}
\begin{pmatrix}1 & i\\ i & 1 \end{pmatrix}.
\end{align}
Substituting \eqref{limit3} and \eqref{E(0)at0} into \eqref{Fexpress6} yields
\begin{align}\label{Fexpress++}
e^{\frac{x^2}{3}\sigma_3}|x|^{-\frac{\alpha}{2}\sigma_3}F(x)|x|^{-\frac{3\alpha}{2}\sigma_3}
=\mathbf{R}(0)2^{(\alpha-\frac{1}{2})\sigma_3}3^{-2\alpha\sigma_3}\pi^{-\frac{\sigma_3}{2}}
\Gamma(\textstyle\frac{1}{2}-\alpha)^{\sigma_3}[(1-s_*)e^{\pi i\alpha}]^{-\sigma_3}e^{-\frac{4}{3}\pi i{\beta}\sigma_3}.
\end{align}

We choose some constants  $c<0$ and $d>0$ such that all real poles of $q(x;\alpha,\kappa)$ lie in the $x$-interval $(c,d)$. Using the estimation \eqref{Restimation}, the expression \eqref{eq:qInt} and letting $x\to-\infty$ in \eqref{Fexpress++}, we obtain
\begin{align}\label{explimit1}
&\exp\Bigg\{\int^{d}_{+\infty}q(t)dt+\int_{\Upsilon}q(t)dt
+\int^{-\infty}_{c}\left(q(t)+\frac{2t}{3}-\frac{2\alpha}{t}\right)dt
+\frac{c^2}{3}-2\alpha\ln|c|\Bigg\}\nonumber\\
&\qquad\qquad=2^{\alpha-\frac{1}{2}}
3^{-2\alpha}\pi^{-\frac{1}{2}}\Gamma\left(\textstyle\frac{1}{2}-\alpha\right)[(1-s_*)
e^{\pi i\alpha}]^{-1}e^{-\frac{4}{3}\pi i{\beta}}.
\end{align}
where $\Upsilon$ is any contour in the complex plane from $d$ to $c$ that avoids the real poles of $q(x;\alpha,\kappa)$. Indeed, we may   take for $\Upsilon$ a path along the real axis  with infinitesimal semicircular indentations centered at each real pole of $q(x;\alpha,\kappa)$ in the lower half complex plane. Using the fact that all poles of $q(x;\alpha,\kappa)$ are simple with residues $\pm1$, we have
\begin{equation}\label{qintegral1}
\int_{\Upsilon}q(t)dt=\mathrm{P.V.}\int^{c}_{d}q(t)dt+\pi i(N_+-N_-),
\end{equation}
where $\mathrm{P.V.}$ denotes the Cauchy principal value and $N_{\pm}$ are the numbers of real poles of $q(x;\alpha,\kappa)$ of residue $\pm1$, respectively.

By inserting \eqref{qintegral1} into \eqref{explimit1} and using the definitions of $\kappa$, ${\beta}$
as given by \eqref{kapparep}, \eqref{nu}, respectively, we arrive at the total integral \eqref{integral-1}.


\subsubsection{Derivation of \eqref{integral-2}}
Now, we concentrate on the case $\kappa=\kappa^{*}$. Tracing back the transformations \eqref{rescaling}, \eqref{Ubar} and \eqref{Rbar} gives
\begin{align}\label{Psiexpress3}
\Psi(|x|^{\frac{1}{2}}z,x)
&=e^{-\frac{x^2}{2}\sigma_3}|x|^{-\frac{\alpha}{2}\sigma_3}
\widehat{\mathbf{R}}(z)\widehat{\mathbf{P}}^{(0)}(z)|x|^{\alpha\sigma_3}
e^{x^2\widehat{g}(z)\sigma_3}.
\end{align}
It follows from \eqref{Asyatzero}, \eqref{Fequation}, \eqref{Pbar0} and \eqref{Psiexpress3} that
\begin{align}\label{Fexpress7}
F(x)=e^{-\frac{x^2}{2}\sigma_3}|x|^{-\frac{\alpha}{2}\sigma_3}
&\widehat{\mathbf{R}}(0)\widehat{\mathbf{H}}(0)\lim\limits_{z\to0}
\begin{pmatrix}1 & 0\\ r(z) & 1 \end{pmatrix}
z^{-\alpha\sigma_3}(z+\sqrt{2})^{(\alpha+\frac{1}{2})\sigma_3}\nonumber\\
&\qquad\qquad \times (z-\sqrt{2})^{(\alpha-\frac{1}{2})\sigma_3}|x|^{\alpha\sigma_3}
e^{x^2\widehat{g}(z)\sigma_3}S_1^{-1}E_0^{-1}z^{-\alpha\sigma_3}
|x|^{-\frac{\alpha}{2}\sigma_3},
\end{align}
for $z\to0$ with $\arg{z}\in(\frac{\pi}{4},\frac{\pi}{2})$.

Notice that by \eqref{H(z)bar}, \eqref{ABexpres} and \eqref{rexpre},
\begin{equation}\label{H(z)barat0}
\widehat{\mathbf{H}}(0)=(2\pi)^{-\frac{\sigma_3}{2}}e^{\pi i(\alpha-\frac{1}{2})\sigma_3}\Gamma(\alpha+\textstyle\frac{1}{2})^{\sigma_3}
\begin{pmatrix}0 & s_1 \\ -s_1^{-1} & 0 \end{pmatrix}
\end{equation}
and
\begin{align}\label{limit4}
&e^{-x^2\widehat{g}(z)\sigma_3}|x|^{-\alpha\sigma_3}(z+\sqrt{2})^{-(\alpha+\frac{1}{2})\sigma_3}
(z-\sqrt{2})^{-(\alpha-\frac{1}{2})\sigma_3}
z^{\alpha\sigma_3}\begin{pmatrix}1 & 0\\ r(z) & 1 \end{pmatrix}\nonumber\\
&\qquad\qquad\qquad\times z^{-\alpha\sigma_3}(z+\sqrt{2})^{(\alpha+\frac{1}{2})\sigma_3}
(z-\sqrt{2})^{(\alpha-\frac{1}{2})\sigma_3}|x|^{\alpha\sigma_3}
e^{x^2\widehat{g}(z)\sigma_3}=\begin{pmatrix}1 & 0\\ -\frac{s_0e^{2\pi i\alpha}}{1+e^{2\pi i\alpha}} & 1 \end{pmatrix}.
\end{align}
Substituting \eqref{limit4} and \eqref{H(z)barat0} into \eqref{Fexpress7} and combining with the facts
\begin{equation*}
\begin{pmatrix}0 & s_1 \\ -s_1^{-1} & 0 \end{pmatrix}=\begin{pmatrix}1 & 0 \\ -s_1^{-1} & 1 \end{pmatrix}\begin{pmatrix}1 & s_1 \\ 0 & 1 \end{pmatrix}\begin{pmatrix}1 & 0 \\ -s_1^{-1} & 1 \end{pmatrix}
\end{equation*}
and
\begin{equation*}
E_0=\begin{pmatrix}1 & 0\\ \frac{s_0e^{2\pi i\alpha}}{1+e^{2\pi i\alpha}} & 1 \end{pmatrix}=\begin{pmatrix}1 & 0 \\ -s_1^{-1} & 1 \end{pmatrix},
\end{equation*}
which follows from \eqref{E0} and \eqref{sstarvalue}, we have
\begin{align}\label{Fexpress+++}
e^{\frac{x^2}{2}\sigma_3}&|x|^{\frac{\alpha}{2}\sigma_3}F(x)|x|^{\frac{3\alpha}{2}\sigma_3}
e^{\frac{x^2}{2}\sigma_3}=
\widehat{\mathbf{R}}(0)(2\pi)^{-\frac{\sigma_3}{2}}e^{\pi i(\alpha-\frac{1}{2})\sigma_3}
\Gamma(\alpha+\textstyle\frac{1}{2})^{\sigma_3}
2^{-\alpha\sigma_3}e^{\pi i(\frac{1}{2}-\alpha)\sigma_3}.
\end{align}

Similarly, take $c<0$ and $d>0$ such that all real poles of $q(x;\alpha,\kappa)$ lie in the interval $(c,d)$.
Using the estimation \eqref{Rbarexpans} and letting $x\to-\infty$ in \eqref{Fexpress+++}, we
obtain  \eqref{integral-2}.

\section{Proof of Theorem \ref{thm:IntRep}}\label{sec:proof of det}

Denote
\begin{align}\label{def:f}
\mathbf{f}(\lambda)&=\begin{pmatrix}f_1(\lambda)\\ f_2(\lambda) \end{pmatrix}=\begin{pmatrix}D_{\nu}\left(\sqrt{2}(\lambda+x)\right) \\
-D_{\nu-1}\left(\sqrt{2}(\lambda+x)\right) \end{pmatrix}, \\
\mathbf{h}(\lambda)&=\begin{pmatrix}h_1(\lambda)\\ h_2(\lambda) \end{pmatrix}=\gamma\begin{pmatrix} D_{\nu-1}\left(\sqrt{2}(\lambda+x)\right)  \\
D_{\nu}\left(\sqrt{2}(\lambda+x)\right) \end{pmatrix}, \label{def:h}
\end{align}
where $D_{\nu}$ is the parabolic cylinder function.
Then, the parabolic cylinder kernel \eqref{eq:PCKernel} can be written as
 \begin{equation}\label{eq:PCKernel1}
\gamma  K_{\nu,x}(\lambda,\mu)=\frac{\mathbf{ f}(\lambda)^{T}\mathbf{ h}(\mu)}{\lambda-\mu}.\end{equation}

For general parameter $\nu\in \mathbb{R}$, the logarithmic derivative of  $\gamma  K_{\nu,x}(\lambda,\mu)$ can  be  expressed in terms of the solution of the following RH problem
for $Y(\lambda):=Y(\lambda, x)$.
\subsection*{RH problem for $Y(\lambda)$}
\begin{itemize}
\item[\rm (1)]
$Y(\lambda)$ is analytic in
  $\mathbb{C}\setminus [0,\infty)$.

\item[\rm (2)]  $Y(\lambda)$  satisfies the jump condition
 \begin{equation}\label{eq:YJump}
 Y_+(\lambda)=Y_-(\lambda) \left(\mathbf{I}-2\pi i \mathbf{ f}(\lambda) \mathbf{ h}(\lambda)^{T}\right),
\end{equation}
where $\mathbf{ f}$ and $\mathbf{ h}$ are defined by \eqref{def:f} and \eqref{def:h}, respectively.

\item[\rm (3)]  The behavior of $Y(\lambda)$ at infinity is
  \begin{equation}\label{eq:YInfinity}Y(\lambda)=\mathbf{I}+\frac {Y_{1}}{\lambda}+O\left (\frac 1 {\lambda^2}\right).\end{equation}
  \item[\rm (4)]  The behavior of $Y(\lambda)$ at the origin is
   \begin{equation}\label{eq:Y0}Y(\lambda)=O\left (\ln\lambda\right).\end{equation}
\end{itemize}

Actually, the solution to the RH problem for $Y(\lambda)$ can be expressed as follows \cite{IIKS}
\begin{equation}\label{eq:Ysolution}Y(\lambda)=\mathbf{I}-\int_{0}^{+\infty} \frac{\mathbf{ V}(\mu)\mathbf{ h}(\mu)^{T}}{\mu-\lambda}d\mu,\end{equation}
where
  \begin{equation}\label{eq:YFH}
  \mathbf{ V}(\mu)=(\mathbf{I}-\gamma K_{\nu,x})^{-1} \mathbf{ f}(\mu) .\end{equation}
The equation \eqref{eq:PCKernel1}  implies
 \begin{equation}\label{eq:DF1}
\frac{d}{dx} \ln \det(\mathbf{I}-\gamma K_{\nu,x})=-\mathrm{ tr} \left(\left(\mathbf{I}-\gamma K_{\nu,x}\right)^{-1} \gamma  \frac{d}{dx}
K_{\nu,x}\right).\end{equation}
Using the recurrence relations satisfied by
the parabolic cylinder function  (cf. \cite[Equation (12.8)]{NIST})
\begin{equation}\label{eq:PCRelation1}
D_{\nu}'(\lambda)=\frac{\lambda}{2}D_{\nu}(\lambda)-D_{\nu+1}(\lambda),\end{equation}
and
\begin{equation}\label{eq:PCRelation2}
D_{\nu}'(\lambda)=-\frac{\lambda}{2}D_{\nu}(\lambda)+\nu D_{\nu-1}(\lambda), \end{equation}
 we find  after some direct calculation that
 \begin{equation}\label{eq:dK}
\gamma\frac{d}{dx} K_{\nu,x}(\lambda,\mu) =-\mathbf{ h}(\mu)^{T}\sigma_3 \mathbf{ f}(\lambda)=-\left(f_1(\lambda)h_1(\mu)
-f_2(\lambda)h_2(\mu)\right).\end{equation}
Substituting \eqref{eq:dK} into \eqref{eq:DF1} and applying \eqref{eq:Ysolution}-\eqref{eq:YFH}, we arrive at
the differential identity
\begin{equation}\label{eq:FY}
\frac{d}{dx} \ln \det(\mathbf{I}-\gamma K_{\nu,x})=(Y_1)_{11}-(Y_1)_{22}=2(Y_1)_{11},
\end{equation}
with $Y_1$ given in \eqref{eq:YInfinity}.

To proceed, we define
   \begin{equation}\label{eq:PCM0}
P(z)=\begin{pmatrix}e^{i\frac{\pi}{2}\nu} D_{-\nu}(e^{i\frac{\pi}{2}}z)& D_{\nu-1}(z) \\
\nu e^{i\frac{\pi}{2}(\nu+1)} D_{-\nu-1}(e^{i\frac{\pi}{2}}z) & D_{\nu}(z) \end{pmatrix}
\end{equation}
for  $\arg z\in(-\frac{1}{4}\pi,0)$.
In view of the asymptotic behavior \eqref{eq:DAsy} for $D_{\nu}$, we see that $P(z)$ satisfies the following asymptotic behavior at infinity
  \begin{equation}\label{eq:PAsy}
P(z)=\left(\mathbf{I}+\frac{P_1}{z}+O\left(\frac 1{z^{2}}\right)
\right)z^{-\nu \sigma_3}\exp\left(\frac{1}{4}z^2\sigma_3\right),
\end{equation}
 where the diagonal entries of $P_1$ are zero.  Define
 \begin{equation}\label{eq:Psi}
\widehat{\Psi}(\lambda)=e^{-\frac{1}{2}x^2\sigma_3}2^{\frac{\nu}{2}\sigma_3} Y(\lambda)^{-T} P\left(\sqrt{2}(\lambda+x)\right)\prod_{i=0}^{k-1}J_i,
\end{equation}
for $\arg\lambda\in (\frac{k-1}{2}\pi,\frac{k}{2}\pi)$, $k=1,2,3$.
Here, the constant matrices
\begin{equation}\label{eq:J_k}
 J_0=  \left(
                               \begin{array}{cc}
                                 1& 0\\
                             j_0 & 1 \\
                                 \end{array}
                             \right), \quad  J_1= \left(
                               \begin{array}{cc}
                                 1& j_1\\
                               0 & 1 \\
                                 \end{array}
                             \right), \quad
                               J_2= \left(
                               \begin{array}{cc}
                                 1& 0\\
                              -j_0e^{-2\pi i\nu} & 1 \\
                                 \end{array}
                             \right), \end{equation}
with $j_0=-i\frac{\sqrt{2\pi}}{\Gamma(\nu)} $ and $j_1=-\frac{\sqrt{2\pi}}{\Gamma(1-\nu)}e^{i\pi \nu}$.
We also denote
\begin{equation}\label{eq:J3}
 J_3=  (J_0J_1J_2)^{-1}=\left(
                               \begin{array}{cc}
                                 1& -j_1e^{2\pi i\nu}\\
                               0 & 1 \\
                                 \end{array}
                             \right)e^{2\pi i\nu\sigma_3}.
                             \end{equation}

Then, it is direct to see that $\widehat\Psi(\lambda)$ satisfies the following RH problem.
\subsection*{RH problem for $\widehat{\Psi}(\lambda)$}
\begin{itemize}
\item[\rm (1)]
$\widehat\Psi(\lambda)$ is analytic in
  $\mathbb{C}\setminus \Xi_k$,
  where
  $\Xi_k=e^{i\frac{k}{2}\pi}\mathbb{R}_+$, $k=0,1,2,3$.

\item[\rm (2)]  $\widehat\Psi(\lambda)$  satisfies the jump conditions
 \begin{equation}\label{eq:PsiJump1}\widehat\Psi_+(\lambda)=\widehat\Psi_-(\lambda)
 \left(\begin{array}{cc}1& 0\\ 2\pi i \gamma-i\frac{\sqrt{2\pi}}{\Gamma(\nu)} & 1 \\
 \end{array}\right), \quad \lambda\in\Xi_0,\end{equation}
  and
  \begin{equation}\label{eq:PsiJump2}\widehat\Psi_+(\lambda)=\widehat\Psi_-(\lambda)
J_k, \quad \lambda\in\Xi_k,\end{equation}
  with $J_k$, $ k=1,2,3$ given in \eqref{eq:J_k} and \eqref{eq:J3}.
\item[\rm (3)]  $\widehat\Psi(\lambda)$ has the asymptotic behavior as $\lambda\to\infty$
  \begin{equation}\label{eq:PsiInfinity}\widehat\Psi(\lambda)=\left(\mathbf{I}+\frac {\widehat\Psi_{1}}{\lambda}+O\left (\frac 1 {\lambda^2}\right)\right)\lambda^{-\nu \sigma_3}\exp\left((\lambda^2/2+x\lambda)\sigma_3\right).\end{equation}

  \item[\rm (4)]  $\widehat\Psi(\lambda)$ possesses the behavior near the origin
   \begin{equation}\label{eq:Psi0}\widehat\Psi(\lambda)=O\left (\ln\lambda\right).\end{equation}
\end{itemize}

It comes out that the RH problem for $\widehat\Psi(\lambda)$ is the same as the RH problem for  the Jimbo-Miwa Lax pair of the PIV equation with the parameters $\theta_0=0$ and $\theta_{\infty}=\nu$; cf. \cite[(C.30)-(C.31)]{JM}, see also
\cite[Chapter 5.1]{FIKN}.
 It is shown in \cite[(C.34)-(C.37)]{JM} that
\begin{equation}\label{eq:PsiExpand1}
\sigma_{\nu}(x)=-2(\widehat\Psi_1)_{11}-2\nu x
\end{equation}
satisfies the equation \eqref{eq:sPIV}.
Using \eqref{eq:PAsy} and \eqref{eq:Psi}, we find
\begin{equation}\label{eq:PsiExpand0}
2(Y_1)_{11} =-2(\widehat\Psi_1)_{11}-2\nu x.
\end{equation}
Thus, we have
\begin{equation}\label{eq:PsiExpand}
2(Y_1)_{11} =\sigma_{\nu}(x).
\end{equation}
Recalling \eqref{eq:FY}, we have derived \eqref{eq:IntRep}.

Next, we prove \eqref{eq:SigmaAsy}.
It is seen from \eqref{eq:DAsy} that the parabolic cylinder function $D_{\nu}(x)$ decay exponentially fast   as $x$ tends to positive infinity. We get from the series expansion of
the Fredholm determinant  that
\begin{equation}\label{eq:DExpand}
\det(\mathbf{I}-\gamma K_{\nu,x})\sim 1-\gamma\int_0^{+\infty}K_{\nu,x}(\lambda,\lambda)d\lambda. \end{equation}
Thus, we have
\begin{equation}\label{eq:FAsy1}
\frac{d}{dx}  \ln \det(\mathbf{I}-\gamma K_{\nu,x})\sim \gamma K_{\nu,x}(0,0),~~~x\to +\infty.
\end{equation}
The asymptotic \eqref{eq:SigmaAsy}  then follows from \eqref{eq:DAsy}, \eqref{eq:PCKernel} and \eqref{eq:FAsy1}.
We complete the proof of Theorem \ref{thm:IntRep}.


\section*{Acknowledgements}
The authors are grateful to the editor and the referee for  their valuable suggestions and comments.
The work of Shuai-Xia Xu was supported in part by the National Natural Science Foundation of China under grant numbers 11571376 and 11971492, and by  Guangdong Basic and Applied Basic Research Foundation (Grant No. 2022B1515020063). Yu-Qiu Zhao was supported in part by the National Natural Science Foundation of China under grant numbers 11571375 and 11971489.

\begin{appendices}

\section{Local parametrix models}
\subsection{Airy parametrix}\label{AP}
Introduce
\begin{equation}
\Phi^{(\mathrm{Ai})}(\z)=\mathbf{N}\left\{
\begin{aligned}
&\begin{pmatrix}
\mathrm{Ai}(\z) & \mathrm{Ai}(w^{2}\z) \\
\mathrm{Ai}^{\prime}(\z) & w^{2} \mathrm{Ai}^{\prime}(w^{2}\z)
\end{pmatrix} e^{-i \frac{\pi}{6} \sigma_{3}}, &\z&\in \mathrm{I},\\
&\begin{pmatrix}
\mathrm{Ai}(\z) & \mathrm{Ai}(w^{2}\z) \\
\mathrm{Ai}^{\prime}(\z) & w^{2} \mathrm{Ai}^{\prime}(w^{2}\z)
\end{pmatrix} e^{-i \frac{\pi}{6} \sigma_{3}}\begin{pmatrix}
1 & 0 \\
-1 & 1
\end{pmatrix},&\z&\in \mathrm{II},\\
&\begin{pmatrix}
\mathrm{Ai}(\z) & -w^{2}\mathrm{Ai}(w\z) \\
\mathrm{Ai}^{\prime}(\z) & -\mathrm{Ai}^{\prime}(w\z)
\end{pmatrix} e^{-i \frac{\pi}{6} \sigma_{3}}\begin{pmatrix}
1 & 0 \\
1 & 1
\end{pmatrix},&\z&\in \mathrm{III},\\
&\begin{pmatrix}
\mathrm{Ai}(\z) & -w^{2} \mathrm{Ai}(w\z) \\
\mathrm{Ai}^{\prime}(\z) & -\mathrm{Ai}^{\prime}(w\z)
\end{pmatrix} e^{-i \frac{\pi}{6} \sigma_{3}},&\z&\in \mathrm{IV},
\end{aligned}
\right.
\end{equation}
where $\mathrm{Ai}(\z)$ denotes the Airy function (cf. \cite[Chapter 9]{NIST}), $w=e^{2\pi i/3}$ and
$$
\mathbf{N}=\sqrt{2\pi}\,e^{\frac{1}{6}\pi i}\begin{pmatrix}1 & 0\\ 0 & -i \end{pmatrix}.
$$
The regions I-IV are illustrated in Figure \ref{Airy}. Then, $\Phi^{(\mathrm{Ai})}(\z)$ solves the following RH problem (cf. \cite{Deft}).

\subsection*{RH problem for $\Phi^{(\mathrm{Ai})}(\z)$}

\begin{description}
\item{(1)} $\Phi^\mathrm{(Ai)}(\z)$ is analytic for $\z\in \mathbb{C}\setminus \bigcup^4_{k=1}\Sigma_{k}$, where $\Sigma_1=\mathbb{R}_+$, $\Sigma_2=e^{\frac{2\pi i}{3}}\mathbb{R}_+$, $\Sigma_3=\mathbb{R}_-$ and $\Sigma_4=e^{-\frac{2\pi i}{3}}\mathbb{R}_+$ with orientations indicated in Figure \ref{Airy}.
\item{(2)} We have the jump relation
\begin{equation*}
\Phi^\mathrm{(Ai)}_+(\z)=\Phi^\mathrm{(Ai)}_-(\z)\left\{
\begin{aligned}
&\begin{pmatrix}1 & 1 \\ 0 & 1 \end{pmatrix}, && \z\in\Sigma_{1},\\
&\begin{pmatrix}1 & 0 \\ 1 & 1 \end{pmatrix}, && \z\in\Sigma_{2}\cup\Sigma_4,\\
&\begin{pmatrix}0 & 1 \\ -1 & 0 \end{pmatrix}, && \z\in\Sigma_{3}.
\end{aligned}\right.
\end{equation*}
\item{(3)} $\Phi^\mathrm{(Ai)}(\z)$ possesses the following asymptotic behavior as $\z\to \infty$
\begin{equation}\label{AiryAsyatinfty}
\Phi^{(\mathrm{Ai})}(\z)=\z^{-\frac{\sigma_{3}}{4} }\frac{1}{\sqrt{2}} \begin{pmatrix}1 & i\\ i &1\end{pmatrix}\left(\mathbf{I}+O\left(\z^{-\frac{3}{2}}\right)\right) e^{-\frac{2}{3} \z^{\frac{3}{2}} \sigma_{3}}.
\end{equation}
\end{description}

\begin{figure}[t]
  \centering
  \includegraphics[width=8cm,height=6cm]{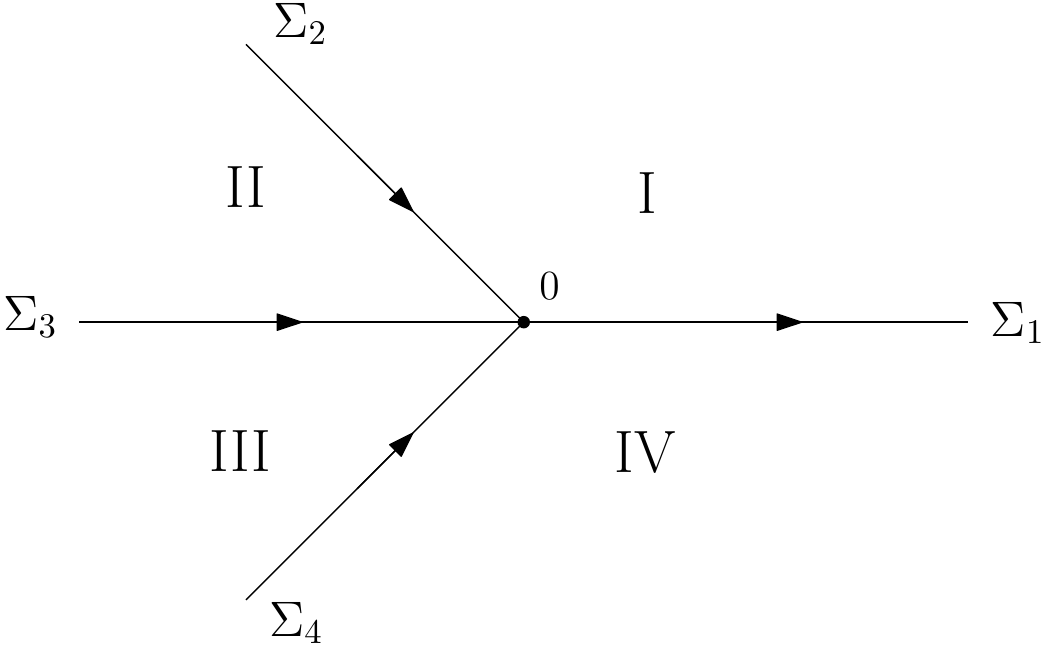}\\
  \caption{The jump contour and regions for $\Phi^{(\mathrm{Ai})}$}\label{Airy}
\end{figure}

\subsection{Parabolic cylinder  parametrix}\label{PCP}
Let ${\beta}$ be a fixed real or complex number. Define
\begin{equation}\label{eq:Dpra}
\mathbf{D}(\z)=2^{-\frac{\sigma_{3}}{2}}\begin{pmatrix}
D_{-{\beta}-1}(i\z) & D_{{\beta}}(\z) \\iD_{-{\beta}-1}'(i\z) & D_{{\beta}}'(\z)\end{pmatrix}\begin{pmatrix}
e^{i \frac{\pi}{2}({\beta}+1)} & 0 \\ 0 & 1 \end{pmatrix},
\end{equation}
where $D_{{\beta}}(\z)$ denotes the standard parabolic cylinder function (cf. \cite[Chapter 12]{NIST}).
Denote
$$
H_{0}=\begin{pmatrix}1 & 0 \\ h_{0} & 1\end{pmatrix}, \quad
H_{1}=\begin{pmatrix}1 & h_{1} \\ 0 & 1\end{pmatrix}, \quad
H_{k+2}=e^{i \pi\left({\beta}+\frac{1}{2}\right) \sigma_{3}} H_{k} e^{-i \pi\left({\beta}+\frac{1}{2}\right) \sigma_{3}}, \ k=0,1,
$$
where
\begin{equation}\label{h0}
h_{0}=-i \frac{\sqrt{2 \pi}}{\Gamma({\beta}+1)}, \quad h_{1}=\frac{\sqrt{2 \pi}}{\Gamma(-{\beta})} e^{i \pi {\beta}}, \quad 1+h_{0} h_{1}=e^{2 \pi i {\beta}}.
\end{equation}
We consider the function
$$
\Phi^{(\mathrm{PC})}(\z)=\left\{\begin{aligned}
&\mathbf{ D}(\z),   \quad & \arg \z \in&\left(-\frac{\pi}{4}, 0\right), \\
&\mathbf{ D}(\z)H_0,\quad & \arg \z \in&\left(0, \frac{\pi}{2}\right), \\
&\mathbf{ D}(\z)H_1,\quad & \arg \z \in&\left(\frac{\pi}{2}, \pi\right), \\
&\mathbf{ D}(\z)H_2,\quad & \arg \z \in&\left(\pi, \frac{3 \pi}{2}\right), \\
&\mathbf{ D}(\z)H_3,\quad & \arg \z \in&\left(\frac{3 \pi}{2}, \frac{7 \pi}{4}\right).
\end{aligned}\right.
$$
It is direct to check that $\Phi^{(\mathrm{PC})}(\z)$ solves the following RH problem (cf. \cite{BI,FIKN}).

\subsection*{RH problem for $\Phi^{(\mathrm{PC})}(\z)$}

\begin{description}
\item{(1)} $\Phi^\mathrm{(PC)}(\z)$ is analytic for all $\z\in \mathbb{C}\setminus \bigcup^4_{k=0}\Upsilon_{k}$, where
    $$
    \Upsilon_{k}=\left\{\zeta\in\mathbb{C}~\Big|~\arg\zeta=\frac{k\pi}{2}\right\},~k=0,1,2,3,
    \quad \Upsilon_{4}=\left\{\zeta\in\mathbb{C}~\Big|~\arg\zeta=-\frac{\pi}{4}\right\};
    $$ see Figure \ref{PC}.
\item{(2)} $\Phi^\mathrm{(PC)}(\z)$ satisfies the jump conditions
$$
\Phi^{(\mathrm{PC})}_+(\z)=\Phi^{(\mathrm{PC})}_-(\z)\left\{\begin{aligned}
&H_{k},\quad && \z \in\Upsilon_k,~k=0,1,2,3, \\
&e^{2\pi i\beta\sigma_3},\quad && \z \in\Upsilon_4.
\end{aligned}\right.
$$
\item{(3)} $\Phi^\mathrm{(PC)}(\z)$ satisfies the following asymptotic behavior as $\z\to\infty$
\begin{align}\label{PCAsyatinfty}
\Phi^\mathrm{(PC)}(\z)=\begin{pmatrix}0 &1 \\ 1 & -\z\end{pmatrix}2^{\frac{\sigma_3}{2}}\begin{pmatrix}
1+O\left(\z^{-2}\right) & \frac{{\beta}}{\z}+O\left(\z^{-3}\right) \\ \frac{1}{\z}+O\left(\z^{-3}\right) & 1+O\left(\z^{-2}\right)\end{pmatrix}
e^{\frac{\z^{2}}{4}\sigma_{3}}\z^{-{\beta}\sigma_3}.
\end{align}
\end{description}

\begin{figure}[t]
  \centering
  \includegraphics[width=7cm,height=7cm]{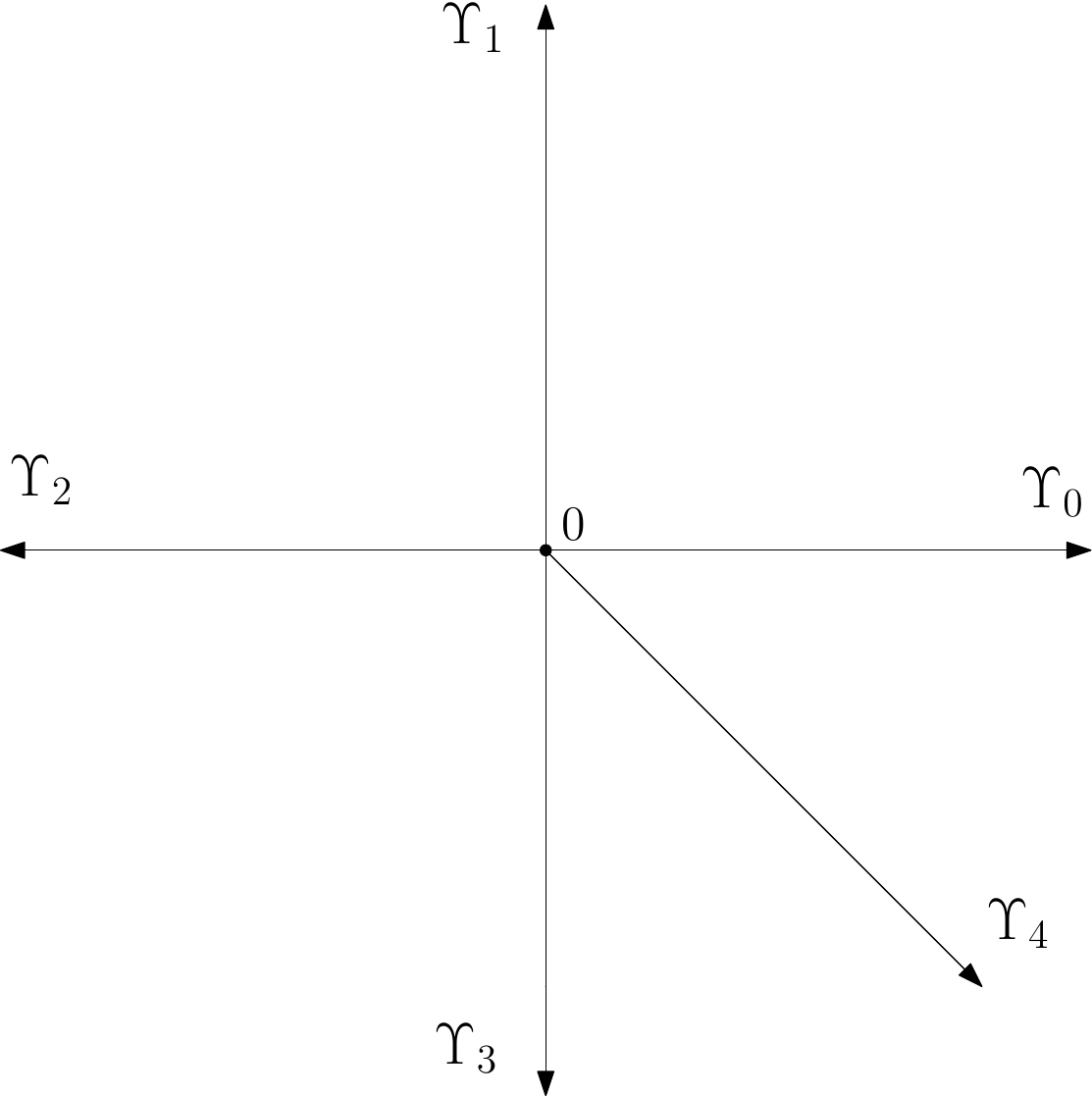}\\
  \caption{The jump contour and jump matrices for $\Phi^{(\mathrm{PC})}$}\label{PC}
\end{figure}

\subsection{Bessel parametrix}\label{Bessel}
Consider the following RH problem.
\subsection*{RH problem for $\Phi^{(\mathrm{Bes})}(\z)$}
\begin{description}
\item{(1)} $\Phi^\mathrm{(Bes)}(\z)$ is analytic for all $\z\in \mathbb{C}\setminus \bigcup^8_{k=1}\Gamma_{k}$, where $$
    \Gamma_{k}=\left\{\zeta\in\mathbb{C}~\Big|~\arg\zeta=\frac{(k-1)\pi}{4}\right\},~k=1,\cdots,8;
    $$ see Figure \ref{Bes}.
\item{(2)} We have the jump conditions
\begin{equation}\label{Besseljump}
\Phi_{+}^{(\mathrm{Bes})}(\z)=\Phi_{-}^{(\mathrm{Bes})}(\z)\left\{\begin{aligned} &\begin{pmatrix}0 & 1 \\ -1 & 0\end{pmatrix}, \quad &\z&\in\Gamma_1\cup\Gamma_5,\\
&\begin{pmatrix}1 & 0 \\ e^{-2\pi i\alpha} & 1\end{pmatrix}, \quad &\z&\in\Gamma_2\cup\Gamma_6,\\
&\begin{pmatrix}e^{\pi i\alpha} & 0 \\ 0 & e^{-\pi i\alpha}\end{pmatrix}, \quad  &\z&\in\Gamma_3\cup\Gamma_7,\\
&\begin{pmatrix}1 & 0 \\ e^{2\pi i\alpha} & 1\end{pmatrix}, \quad  &\z&\in\Gamma_4\cup\Gamma_8.
\end{aligned}\right.
\end{equation}
\item{(3)} As $\z\rightarrow\infty$,
\begin{equation}\label{BesInfty}
\Phi^\mathrm{(Bes)}(\z)=\frac{1}{\sqrt{2}}\begin{pmatrix}
1 & -i \\ -i & 1 \end{pmatrix}(\mathbf{I}+O(\z^{-1}))e^{\frac{\pi i\sigma_3}{4}}e^{-i\z\sigma_3}\left\{
\begin{aligned}
&e^{-\frac{\alpha\pi i\sigma_3}{2}}, & \z &\in \Lambda_1\cup\Lambda_2,\\
&e^{\frac{\alpha\pi i\sigma_3}{2}}, & \z &\in \Lambda_3\cup\Lambda_4,\\
&e^{\frac{\alpha\pi i\sigma_3}{2}}\sigma_1\sigma_3, & \z &\in \Lambda_5\cup\Lambda_6,\\
&e^{-\frac{\alpha\pi i\sigma_3}{2}}\sigma_1\sigma_3, & \z &\in\Lambda_7\cup\Lambda_8.
\end{aligned}\right.
\end{equation}
\end{description}
\begin{figure}[t]
  \centering
 \includegraphics[width=7cm,height=7cm]{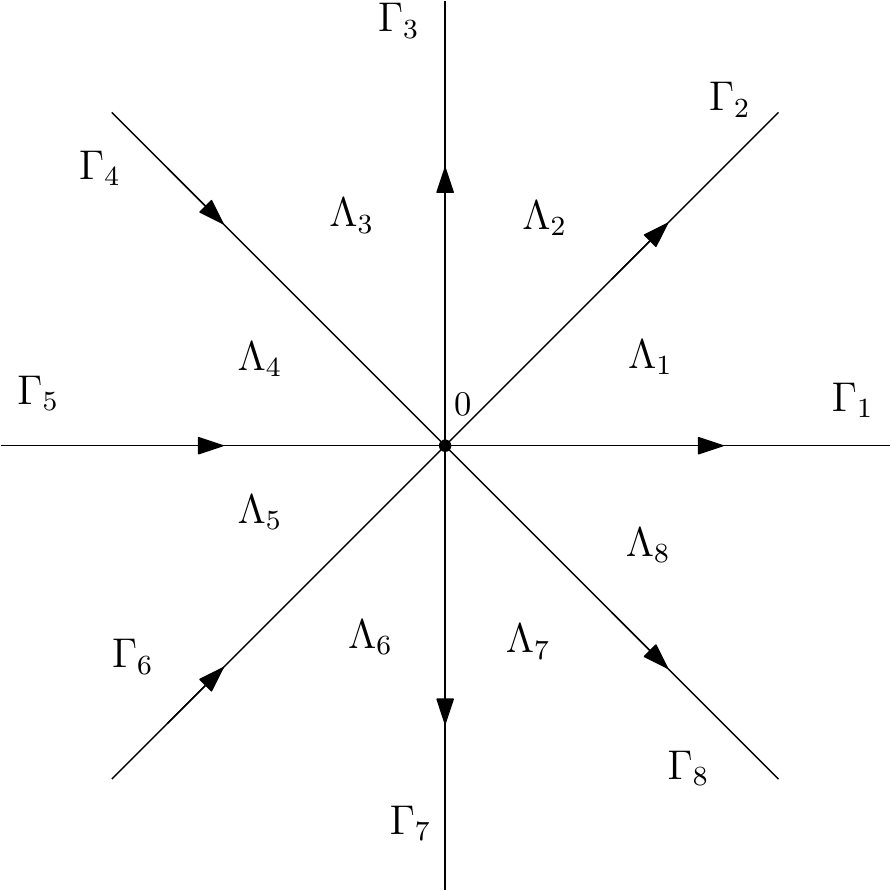}\\
 \caption{The jump contour and regions for $\Phi^{(\mathrm{Bes})}$}\label{Bes}
\end{figure}

From \cite{Van}, we see that the above RH problem can be constructed in terms of the modified Bessel function $I_{\alpha\pm\frac{1}{2}}(\z)$ and $K_{\alpha\pm\frac{1}{2}}(\z)$:
\begin{equation}\label{BesPara}
\Phi^\mathrm{(Bes)}(\z)=\begin{pmatrix}
\sqrt{\pi} \z^{\frac{1}{2}}I_{\alpha+\frac{1}{2}}(\z e^{-\frac{\pi i}{2}}) & -\frac{1}{\sqrt{\pi}}\z^{\frac{1}{2}}K_{\alpha+\frac{1}{2}}(\z e^{-\frac{\pi i}{2}})\\
-i\sqrt{\pi} \z^{\frac{1}{2}}I_{\alpha-\frac{1}{2}}(\z e^{-\frac{\pi i}{2}})&-\frac{i}{\sqrt{\pi}}\z^{\frac{1}{2}}K_{\alpha-\frac{1}{2}}(\z e^{-\frac{\pi i}{2}})
\end{pmatrix}e^{-\frac{1}{2}\alpha\pi i\sigma_3}\end{equation}
for $\z\in\Lambda_2$.
The explicit expressions of $\Phi^\mathrm{(Bes)}(\z)$ in other sectors are determined by \eqref{BesPara}  and the  jump relation \eqref{Besseljump}.

Using the  series expansion of the  modified Bessel function  \cite[(10.25.2)]{NIST}
 $$I_{\mu}(\z)=\left(\frac{\z}{2}\right )^{\mu}\sum_{k=0}^{\infty}\frac{(\frac{ \z^2}{4})^k}{k!\,\Gamma(\mu+k+1)},$$
and the connection formula \cite[(10.27.4)]{NIST}
$$K_{\mu}(\z)=\frac{\pi}{2}\frac{I_{-\mu}(\z)-I_{\mu}(\z) }{\sin(\pi \mu)}, \quad \mu \not\in \mathbb{Z},$$
it is seen  from  \eqref{BesPara} that
\begin{equation}\label{BesParaExpand}
\Phi^\mathrm{(Bes)}(\z)=(I+O(\z))C_\alpha^{\sigma_3}\begin{pmatrix}0 & -1 \\ 1 & 0\end{pmatrix} \z^{\alpha \sigma_3}\begin{pmatrix}
1 & \frac{1}{1+e^{-2\pi i \alpha}} \\ 0& 1
\end{pmatrix},\quad \z\to 0, ~~ \z\in \Lambda_2, \end{equation}
where $\alpha-\frac{1}{2} \not\in  \mathbb{Z} $ and the constant $C_{\alpha}$ is given by
\begin{equation}\label{Calpha}
C_{\alpha}=\pi^{-\frac{1}{2}}2^{\alpha-\frac{1}{2}}e^{\pi i(\alpha+\frac{1}{4})}\Gamma(\alpha+\frac{1}{2}).
\end{equation}

If $\frac{1}{2}-\alpha\in\mathbb{N}$, using the connection formulas \cite[(10.27.1)]{NIST} and \cite[(10.27.3)]{NIST}
\begin{equation*}
 I_{-n}(\z)=I_n(\z),\quad K_{-\mu}(\z)=K_{\mu}(\z),
\end{equation*}
and the small-$z$ asymptotics \cite[(10.30.1)-(10.30.3)]{NIST}
\begin{equation*}
I_{\mu}(\z)\sim \frac{\z^{\mu}}{2^{\mu}\Gamma(\mu+1)},\quad
K_{\mu}(\z)\sim \frac{2^{\mu-1}\Gamma(\mu+1)}{\z^{\mu}},\quad
K_0(\z)\sim -\ln \z,
\end{equation*}
we obtain
\begin{equation}\label{BesParaExpand1}
\Phi^\mathrm{(Bes)}(\z)=\Phi^\mathrm{(Bes)}_{0}(\z)\z^{-\alpha \sigma_3},\quad \z\to 0, \quad \z\in \Lambda_2,
\end{equation}
where $\Phi^\mathrm{(Bes)}_{0}(\z)$ is analytic near the origin.

The behaviors of $\Phi^\mathrm{(Bes)}(\z)$ near the origin in the other regions can be determined by \eqref{BesParaExpand}, \eqref{BesParaExpand1} and the jump relations \eqref{Besseljump}.

\end{appendices}


\end{document}